\let\oldvec\vec 
\let\vec\oldvec 
\let\originalleft\left
\let\originalright\right
\renewcommand{\left}{\mathopen{}\mathclose\bgroup\originalleft}
\renewcommand{\right}{\aftergroup\egroup\originalright}
\newcommand{\pvec}[1]{\vec{#1}\mkern2mu\vphantom{#1}}
\newcommand{\Nat}{\mathbb{N}}
\newcommand{\Procs}{\ensuremath{\mathcal{P}}}
\definecolor{roleColor}{rgb}{0.1, 0.3, 0.1}\newcommand{\roleCol}[1]{{\color{roleColor}#1}}\newcommand{\roleFmt}[1]{\boldsymbol{\roleCol{\mathtt{#1}}}}
\newcommand{\procA}{{\color{roleColor}\roleFmt{p}}}
\newcommand{\procB}{{\color{roleColor}\roleFmt{q}}}
\newcommand{\procC}{{\color{roleColor}\roleFmt{r}}}
\newcommand{\procD}{{\color{roleColor}\roleFmt{s}}}
\newcommand{\run}{\rho}
\newcommand{\Alphabet}{\Sigma}
\newcommand{\val}{\ensuremath{m}}
\newcommand{\MsgVals}{\ensuremath{\mathcal{V}}}
\newcommand{\msgO}{\textcolor{orange}{o}}
\newcommand{\msgB}{\textcolor{blue}{b}}
\newcommand{\msgM}{\textcolor{magenta}{m}}
\newcommand{\CSM}[1]{\ensuremath{\{\!\!\{#1_\procA\}\!\!\}_{\procA \in \Procs}}}
\newcommand{\CSMl}[1]{\ensuremath{\{\!\!\{{#1}\}\!\!\}_{\procA \in \Procs}}}
\newcommand{\emptystring}{\varepsilon}
\newcommand{\set}[1]{\{#1\}}
\newcommand{\lang}{\mathcal{L}}
\newcommand{\interswaplang}{\mathcal{C}^{\interswap}}
\newcommand{\SyncToAsync}{\ensuremath{\operatorname{\texttt{\upshape{split}}}}}
\newcommand{\channels}{\ensuremath{\mathsf{Chan}}}
\newcommand{\channel}[2]{\ensuremath{#1,#2}}
\newcommand{\trace}{\texttt{\upshape{trace}}}
\newcommand{\GG}{\mathbf{G}}
\newcommand{\getMu}{\mathit{get\mu}}
\newcommand{\getMuG}{\getMu_\GG}
\newcommand{\semglobal}{\ensuremath{\mathsf{GAut}}}
\newcommand{\semglobalsync}{\ensuremath{\mathsf{GAut}}}
\newcommand{\projerasuresymb}{\downarrow}
\newcommand{\projerasure}[2]{\ensuremath{\semglobal(#1)\negmedspace\!\projerasuresymb_{#2}}}
\newcommand{\projerasuretrans}{\ensuremath{\delta_\projerasuresymb}}
\newcommand{\AlphSync}{\ensuremath{Σ_{\mathit{sync}}}}
\newcommand{\AlphAsync}{\ensuremath{Σ_{\mathit{async}}}}
\newcommand{\restrict}[2]{{#1}|_{#2}}
\newcommand{\lfp}{\mathrm{lfp}}
\newcommand{\interswap}{\ensuremath{\sim}}
\def \ifempty#1{\def\temp{#1} \ifx\temp\empty }
\newcommand{\snd}[3]{#1\kern 0.13em{\triangleright}\kern 0.08em#2\kern 0.08em{\operatorname{!}}\kern 0.08em#3}
\newcommand{\rcv}[3]{#2\kern 0.08em{\triangleleft}\kern 0.13em#1\kern 0.08em{\operatorname{?}}\kern 0.08em#3}
\newcommand{\ssnd}[2]{#1\kern 0.08em{!}\kern 0.08em#2}
\newcommand{\srcv}[2]{#1\kern 0.08em{?}\kern 0.08em#2}
\newcommand{\msgFromTo}[3]{#1\kern 0.12em{\to}\kern 0.12em#2\kern 0.12em {:} \kern 0.12em #3}
\newcommand{\preforder}{\ensuremath{\leq}}
\newcommand{\proofend}{\hfill$\qed$}
\newcommand{\myparagraph}[1]{\smallskip\noindent\textbf{#1}}
\newcommand{\wproj}{{\ensuremath{\Downarrow}}}
\def\mmerge{\mathrel{\ThisStyle{\stretchrel*{\ooalign{\raise0.2\LMex\hbox{$\SavedStyle\sqcap$}\cr \raise-0.2\LMex\hbox{$\SavedStyle\sqcap$}}}{\sqcap}}}}
\def\mmmerge{\mathrel{\ThisStyle{\stretchrel*{\ooalign{\raise0.6\LMex\hbox{$\SavedStyle\sqcap$}\cr \raise0.2\LMex\hbox{$\SavedStyle\sqcap$}\cr \raise-0.2\LMex\hbox{$\SavedStyle\sqcap$}}}{\sqcap}}}}
\newcommand{\blockedset}{\ensuremath{\mathcal{B}}}
\newcommand{\semavail}{\ensuremath{M}}
\newcommand{\semavaildef}[3]{\ensuremath{\semavail^{#1, #2}_{(#3\ldots)}}}
\newcommand{\powersetof}[1]{2^{#1}}
\newcommand{\union}{\cup}
\newcommand{\inters}{\cap}
\newcommand{\Union}{\bigcup}
\newcommand{\Inters}{\bigcap}
\newcommand{\dunion}{\uplus}
\DeclarePairedDelimiter\card{\lvert}{\rvert}
\providecommand{\implies}{\Rightarrow}
 \providecommand{\Coloneqq}{\mathrel{\mathop{::}}=} \newcommand{\is}{\coloneq}
\def\grammOr{\hspace{3pt}\mid\hspace{3pt}}
\def\grammIs{\Coloneqq}
\gdef\@grammar@bar{\catcode`\|=\active \def|{\grammOr}}
\newcommand{\gramm}[1]{\begingroup
  \def\is{\grammIs}\@grammar@bar #1\endgroup }
\newenvironment{grammar}{\begin{equation*}\def\is{& \grammIs }\@grammar@bar \aligned }
{\endaligned \end{equation*}\aftergroup\ignorespaces }
\newcommand{\hole}{\hbox{-}}
\newcommand{\subsetcons}[2]{\ensuremath{\mathscr{C}(#1,#2)}}
\newcommand{\subsetproj}[2]{\ensuremath{\mathscr{C}(#1,#2)}}
\newcommand{\subsetprojCSM}{\CSMl{\subsetproj{\GG}{\procA}}}
\newcommand{\transAnnoFunc}{\ensuremath{\operatorname{tr-orig}}}
\newcommand{\transAnnoFunDest}{\ensuremath{\operatorname{tr-dest}}}
\newcommand{\globcomplocal}[3]{\ensuremath{\operatorname{R}^#1_{#2}(#3)}}
\newcommand{\CSMhole}{\mathcal{A}[\cdot]_\procA}
\newcommand{\CSMholew}[1]{\mathcal{A}[#1]_\procA}
\renewcommand\fbox{\fcolorbox{red}{white}}
    \newcommand{\appendixRef}[1]{\cref{#1}}
    \newcommand{\appendixRef}[1]{the extended version~\cite{li2024subtyping}}
\begin{document}
\title{Deciding Subtyping for \\Asynchronous Multiparty Sessions} \ifoptionfinal
{\authorrunning{E.\ Li, F.\ Stutz, and T.\ Wies}}
{\titlerunning{Deciding Subtyping for Asynchronous Multiparty Sessions}}

\author{Authors withheld}
\author{
Elaine Li\thanks{corresponding author}\inst{1}\orcidlink{0000-0003-0173-4498} \and
Felix Stutz\inst{2}\orcidlink{0000-0003-3638-4096} \and
Thomas Wies\inst{1}\orcidlink{0000-0003-4051-5968}
}
\institute{New York University, New York, USA \email{efl9013@nyu.edu, wies@cs.nyu.edu} \and
Max Planck Institute for Software Systems, Kaiserslautern, Germany \email{fstutz@mpi-sws.org}}

\maketitle              \begin{abstract}
Multiparty session types (MSTs) are a type-based approach to verifying communication protocols, represented as global types in the framework. 
We present a precise subtyping relation for asynchronous MSTs with communicating state machines (CSMs) as implementation model.
We address two problems: when can a local implementation safely substitute another, and when does an arbitrary CSM implement a global type? 
We define safety with respect to a given global type, in terms of subprotocol fidelity and deadlock freedom. 
Our implementation model subsumes existing work which considers local types with restricted choice. 
We exploit the connection between MST subtyping and refinement to formulate concise conditions that are directly checkable on the candidate implementations, and use them to show that both problems are decidable in polynomial time. 

  \keywords{Protocol verification \and
 	Multiparty session types \and Communicating state machines \and Subtyping \and
 	Refinement.
}
\end{abstract}
\section{Introduction}
\label{sec:intro}

Multiparty session types (MSTs)~\cite{DBLP:conf/popl/HondaYC08} are a type-based approach to verifying communication protocols. In MST frameworks, a communication protocol is expressed as a \emph{global type}, which describes the interactions of all protocol participants from a birds-eye view. 
The key property of interest in MST frameworks is \emph{implementability}, which asks whether there exists a collection of local implementations, one per protocol participant, that is deadlock-free and produces the same set of behaviors described by the global type.
The latter property is known as \emph{protocol fidelity}. 
Given an implementable global type, the \emph{synthesis} problem asks to compute such a collection. 
To solve implementability and synthesis, MST frameworks are often equipped with a \emph{projection operator}, which is a partial map from global types to a collection of local implementations.
Projection operators compute a correct implementation for a given global type if one~exists.

However, projection operators only compute one candidate out of many possible implementations for a given global type, which narrows the usability of MST frameworks. As we demonstrate below, substituting this candidate can in some cases achieve an exponential reduction in the size of the local implementation. 
Furthermore, applications may sometimes require that an implementation produce only a subset of the global type's specified behaviors.
We refer to this property as \emph{subprotocol fidelity}. 
For example, a general client-server protocol may customize the set of requests it handles to the specific devices it runs on. Subtyping reintroduces this flexibility into MST frameworks, by characterizing when an implementation can replace another while preserving desirable correctness guarantees.

Formally, a subtyping relation is a reflexive and transitive relation that respects Liskov and Wing's substitution principle~\cite{DBLP:journals/toplas/LiskovW94}: $T'$ is a subtype of $T$ when $T'$ can be \emph{safely} used in any context that expects a term of type $T$.
While implementability for MSTs was originally defined on syntactic local types\cite{DBLP:conf/concur/Honda93,DBLP:conf/popl/HondaYC08}, other implementation models have since been investigated, including communicating session automata~\cite{DBLP:conf/esop/DenielouY12} and behavioral contracts~\cite{DBLP:journals/toplas/CastagnaGP09}. 
We motivate our work with the observation that a subtyping relation is only as powerful as its notion of safety, and the expressivity of its underlying implementation model. 
Existing subtyping relations adopt a notion of safety that is agnostic to a global specification.
For example, \cite{DBLP:journal/mscs/BarbaneraD15,DBLP:journals/mscs/BernardiH16} define safety as the successful completion of a single role in binary sessions, \cite{DBLP:conf/cav/LangeY19} defines safety as eventual reception and progress of all roles in multiparty sessions, and \cite{DBLP:journals/jlamp/GhilezanJPSY19} defines safety as the termination of all roles in multiparty sessions.
As a result, these subtyping relations eagerly reject subtypes that are viable for the specific global type at hand. 
In addition, existing implementation models are restricted to local types with \emph{directed choice} for branching, or equivalent representations thereof~\cite{DBLP:journals/ssm/BravettiZ21}, which prohibit a role from sending messages to or receiving messages from different participants in a choice.  This restrictiveness undermines the flexibility that subtyping is fundamentally designed to provide. 

We present a subtyping relation that extends prior work along both dimensions. 
We define a stronger notion of safety with respect to a given global type: a substitution is safe if in all \emph{well-behaved} contexts, the resulting implementation satisfies both deadlock freedom and subprotocol fidelity. 
We assume an implementation model of unrestricted communicating state machines (CSMs)~\cite{DBLP:journals/jacm/BrandZ83} communicating via FIFO channels, which subsumes implementation models in prior work~\cite{DBLP:conf/cav/LangeY19, DBLP:journals/jlamp/GhilezanJPSY19, DBLP:conf/ppopp/CutnerYV22}. 
We demonstrate that this generalization renders existing subtyping relations which are precise for a restrictive implementation model incomplete.
As a result of both extensions, our subtyping relation requires reasoning about available messages~\cite{DBLP:conf/concur/MajumdarMSZ21} for completeness, a novel feature that is absent from existing subtyping relations.

Our result applies to global types with \emph{sender-driven} choice, which generalize global types from their original formulation with directed choice~\cite{DBLP:conf/popl/HondaYC08}, and borrows insights from recent work on a sound and complete projection operator for this class of global types~\cite{DBLP:conf/cav/LiSWZ23}.

\myparagraph{Contributions.}
\newcommand{\ProblemOne}{\emph{Protocol Verification}\xspace}
\newcommand{\ProblemTwo}{\emph{Protocol Refinement}\xspace}
In this paper, we present the first precise subtyping relation that guarantees deadlock freedom and subprotocol fidelity with respect to a global type, and that assumes an unrestricted, asynchronous CSM implementation model. 
We solve the \ProblemOne problem and the \ProblemTwo problem with respect to global type $\GG$ and a set of roles $\Procs$: 
\begin{enumerate}
	\item \ProblemOne: Given a CSM $\mathcal{A}$, does $\mathcal{A}$ implement $\GG$? 
\item \ProblemTwo: Let $\procA$ be a role and let $B$ be a safe implementation for $\procA$ in any well-behaved context for $\GG$. Given $A$, can $A$ safely replace $B$ in any well-behaved context for $\GG$? 
\end{enumerate}
We exploit the connection between MST subtyping and CSM refinement to formulate concise conditions that are directly checkable on candidate state machines. Using this characterization, we show that both problems are decidable in polynomial time.

 \section{Motivation}
\label{sec:motivation}
We first showcase that sound and complete projection operators can yield local implementations that are exponential in the size of its global type, but can be reduced to constant size by subtyping. 
We then demonstrate the restrictiveness of existing subtyping relations both in terms of their notion of safety and their implementation model.

\myparagraph{Subset projection with exponentially many states.}
We first construct a family of implementable global types $\GG_n$ for $n \in \Nat$ such that $\GG_n$ has size linear in $n$ and the deterministic finite state machine for $\procB$ that recognizes
the projection of the global language onto $\procB$'s alphabet $\Alphabet_\procB$, denoted  $\lang(\GG_n)\wproj_{\Alphabet_\procB}$, has size exponential in $n$.

The construction of the $\GG_n$'s builds on the regular expression $(a^* (ab^*)^n a)^*$, which can only be recognized by a deterministic finite state machine that grows exponentially with~$n$ \cite[Thm.\,11]{DBLP:journals/jalc/EllulKSW05}.

First, we construct the part for $(ab^*)^{i} a$ recursively.
In global types, $\msgFromTo{\procA}{\procB}{\val}$ denotes role $\procA$ sending a message $\val$ to role $\procB$, $+$ denotes choice, $\mu t$ binds a recursion variable $t$ that can be used in the continuation, and $0$ denotes termination.
\[\small
	G_{i} \is
    \msgFromTo{\procA}{\procB}{a}. \,
	μ t_{3,i}. +
	\begin{cases}
		\msgFromTo{\procA}{\procC}{m_{3}}. \,
		\msgFromTo{\procA}{\procB}{b}. \, t_{3,i} \, \\
		\msgFromTo{\procA}{\procC}{n_{3}}. \,
		G_{i-1}\end{cases}
	\;
	\text{for } i > 0
\quad \text{ and } \quad
  G_0 \is
    \msgFromTo{\procA}{\procB}{a}. \,
    t_1
\]
Here, each $G_{i}$ for $i > 0$ generates $(ab^*)$ and $G_0$ adds the last $a$.
Role $\procA$'s choice to send either $m_3$ or $n_3$ to $\procC$ respectively encodes the choice to continue iterating $b$'s or to stop in $b^*$; $\procB$ however, is not involved in this exchange and thus $\procB$'s local language is isomorphic to $(ab^*)^{i}a$. 

Next, we define some scaffolding $G(\hole)$ for the outermost Kleene Star and the first~$a^*$:
\[
  \small
	G(G') \is
	μ t_1. \,
	+
\begin{cases}
		\msgFromTo{\procA}{\procC}{m_1}. \, μ t_2.
		+
		\begin{cases}
			\msgFromTo{\procA}{\procC}{m_2}. \,
			\msgFromTo{\procA}{\procB}{a}. \, t_2 \\
			\msgFromTo{\procA}{\procC}{n_2}. \,
G'
		\end{cases}
		\\
		\msgFromTo{\procA}{\procC}{n_1}. \, 0
	\end{cases}
	\enspace. 
\]
We combine both to obtain the family
$\GG_n \is G(G_n)$.

As $\GG_n$ is implementable, the subset projection~\cite{DBLP:conf/cav/LiSWZ23} for each role is defined. 
One feature of the implementations computed by this projection operator is \emph{local language preservation}, meaning that the language recognized by the local implementation is precisely the projection of the global language onto its alphabet, e.g.\
$\lang(\GG_n)\wproj_{\Alphabet_\procB}$
for role~$\procB$ with alphabet $\Alphabet_\procB$.
In this case,
because $\lang(\GG_n)\wproj_{\Alphabet_\procB}$ can only be recognized by a deterministic finite state machine with size exponential in $n$, the corresponding local language preserving implementation also has size exponential in $n$.

However, not all implementations need to satisfy local language preservation. 
Consider the type $\mu t. (\msgFromTo{\procA}{\procB}{\msgO}. \, t + \msgFromTo{\procA}{\procB}{\msgB}.\, 0)$. The projection of the global language onto~$\procB$ limits $\procB$ to only receiving a sequence of $\msgO$ messages terminated by a $\msgB$ message.
However, an implementation for $\procB$ can rely on $\procA$ to send correct sequences of messages, and instead accept any message that it receives. 
A similar pattern arises in the family $\GG_n$, where the exponentially-sized implementation for role $\procB$ can simply be substituted with an automaton that allows to receive any message from $\procA$. 

\myparagraph{The restrictiveness of existing MST subtyping relations.} 
Consider the two implementations for role $\procA$, represented as finite state machines $A$ and $B$ in \cref{fig:motivation-A,fig:motivation-A'}. 
State machine $A$ embodies the idea of input covariance~\cite{DBLP:journals/acta/GayH05} by adding receive actions, namely \fbox{$\rcv{\procB}{\procA}{\val}$}, which denotes role $\procA$ receiving a message $\val$ from role $\procB$. 
But is it the case that $A$ is a subtype of $B$? 
A preliminary answer based on prior work~\cite{DBLP:conf/tacas/LangeY16, DBLP:journals/jlamp/GhilezanJPSY19} is \emph{no}, for the reason that $A$ falls outside of the implementation models considered in these works: the initial state in $A$ contains outgoing receive transitions from two distinct senders, $\procB$ and $\procC$, and one of the final states contains an outgoing transition. Thus, there exists no local type representation of $A$. 

\begin{figure}[t]
\hfill
\begin{subfigure}[t]{.4\textwidth}
\centering
\resizebox{.9\textwidth}{!}{
	\begin{tikzpicture}[node distance=1cm and 2cm,>=stealth', line width=0.25mm]
		\node[draw, circle, minimum size=0.7cm, initial left=, initial text =](t0){};
		\node[draw, circle, accepting, minimum size = 0.7cm, right=2cm of t0](t1){};	
		\node[draw, circle, accepting, minimum size = 0.7cm, right=2cm of t1](t2){};	
		\path[->](t0) edge node[above] {$\rcv{\procB}{\procA}{\val}$} (t1);
		\path[->](t1) edge node[above] {$\rcv{\procC}{\procA}{\val}$} (t2);
		\path[->](t0) edge node[below] {\fbox{$\rcv{\procC}{\procA}{\val}$}} (t1);
	\end{tikzpicture}
	}
	\caption{$A$}
	\label{fig:motivation-A'}
\end{subfigure}
\hfill
\begin{subfigure}[t]{0.4\textwidth}
\centering
\resizebox{.9\textwidth}{!}{
	\begin{tikzpicture}[node distance=1cm and 2cm,>=stealth', line width=0.25mm]
		\node[draw, circle, minimum size=0.7cm, initial left=, initial text =](t0){};
		\node[draw, circle, accepting, minimum size = 0.7cm, right=2cm of t0](t1){};	
		\node[draw, circle, accepting, minimum size = 0.7cm, right=2cm of t1](t2){};	
		\path[->](t0) edge node[above] {$\rcv{\procB}{\procA}{\val}$} (t1);
		\path[->](t1) edge node[above] {$\rcv{\procC}{\procA}{\val}$} (t2);
		\path[->](t0) edge node[below] {\phantom{\fbox{$\rcv{\procC}{\procA}{\val}$}}} (t1);
	\end{tikzpicture}
}
	\caption{$B$}
	\label{fig:motivation-A}
\end{subfigure}
\hspace{.7cm}
\caption{Two state machines for role $\procB$}
\end{figure}
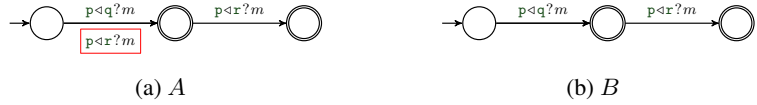

As a first step, let us generalize the implementation model to machines with arbitrary finite state control, and revisit the question. It turns out that the answer now depends on what protocol role $\procA$, alongside the other roles in the context, is following. 
Consider the two global types
\[\small
	\GG_1 \is 
	\msgFromTo{\procB}{\procA}{\val}. \,
	\msgFromTo{\procC}{\procA}{\val}. \,
	0 \qquad\text{and}\qquad
	\GG_2 \is 
	\msgFromTo{\procB}{\procA}{\val}. \,
	0 \,\enspace.
\]
We observe that $A$ is a subtype of $B$ under the context of $\GG_2$, but not under the context of $\GG_1$. 
Suppose that roles $\procB$ and $\procC$ are both following $\GG_1$, and thus both roles send a message $\val$ to $\procA$. Under asynchrony, the two messages can arrive in $\procA$'s channel in any order; this holds even in a synchronous setting. 
Therefore, there exists an execution trace in which $\procA$ takes the transition labeled \fbox{$\rcv{\procC}{\procA}{\val}$} in $A$ and first receives from $\procC$. Role $\procA$ then finds itself in a final state with a pending message from $\procB$ that it is unable to receive, thus causing a deadlock in the CSM. 
On the other hand, if $\procB$ were following $\GG_2$, the addition of the receive transition \fbox{$\rcv{\procC}{\procA}{\val}$} is safe because it is never enabled, and thus $A$ can safely compose with any context following $\GG_2$ without violating protocol fidelity and deadlock freedom.

 \section{Preliminaries}
\label{sec:prelim}

We restate relevant definitions from~\cite{DBLP:conf/cav/LiSWZ23}.

\paragraph{Words.}
Let $\Alphabet$ be a finite alphabet.
$\Alphabet^*$ denotes the set of finite words over $\Alphabet$, $\Alphabet^\omega$ the set of infinite words, and $\Alphabet^\infty\negthinspace$ their union $\Alphabet^* \cup \Alphabet^\omega$.
A word $u \in \Alphabet^*$ is a \emph{prefix} of word $v \in \Alphabet^\infty$, denoted $u \leq v$, if there exists $w \in \Alphabet^\infty$ with $u \cdot w = v$.

\paragraph{Message Alphabet.}
Let $\Procs$ be a set of roles and $\MsgVals$ be a set of messages. We define the set of \emph{synchronous events} $\AlphSync \is \set{ \msgFromTo{\procA}{\procB}{\val} \mid \procA,\procB ∈ \Procs \text{ and } \val ∈ \MsgVals}$ where $\msgFromTo{\procA}{\procB}{\val}$ denotes that message $\val$ is sent by $\procA$ to $\procB$ atomically.
This is split for \emph{asynchronous events}.
For a role $\procA \in\Procs$, we define the alphabet
    $\Alphabet_{\procA,!} = \set{\snd{\procA}{\procB}{\val} \mid \procB \in \Procs,\; \val \in \MsgVals }$ of \emph{send} events
and the alphabet
    $\Alphabet_{\procA,?} = \set{\rcv{\procB}{\procA}{\val} \mid \procB \in \Procs,\; \val \in \MsgVals }$ of \emph{receive} events.
The event $\snd{\procA}{\procB}{\val}$ denotes role $\procA$ sending a message $\val$ to $\procB$,
and $\rcv{\procB}{\procA}{\val}$ denotes role~$\procA$ receiving a message $\val$ from $\procB$.
We write $\Alphabet_{\procA} = \Alphabet_{\procA,!} \union \Alphabet_{\procA,?}$,
$\Alphabet_! = \Union_{\procA \in \Procs} \Alphabet_{\procA,!}$, and
$\Alphabet_? = \Union_{\procA \in \Procs} \Alphabet_{\procA,?}$.
Finally, $\AlphAsync = \Alphabet_! \cup \Alphabet_?$.
We say that $\procA$ is \emph{active} in $x \in \AlphAsync$ if $x \in \Alphabet_{\procA}$.
For each role $\procA\in \Procs$, we define a homomorphism~$\wproj_{\Alphabet_\procA}$, where $x \wproj_{\Alphabet_\procA} = x$ if $x \in \Alphabet_\procA$ and $\emptystring$ otherwise.
We fix $\Procs$ and~$\MsgVals$ in the rest of the paper.

\paragraph{Global Types -- Syntax.}
Global types for MSTs \cite{DBLP:conf/concur/MajumdarMSZ21} are defined by the grammar:
    \begin{grammar}
     G \is
       0
     | \sum_{i ∈ I} \msgFromTo{\procA}{\procB_{i}}{\val_i.G_i}
     | \mu t. \; G
     | t
    \end{grammar}\\[-3ex]
where $\procA, \procB_i$ range over $\Procs$, $\val_i$ over $\MsgVals$, and $t$ over a set of recursion variables. 

We require each branch of a choice to be distinct: 
$∀ i,j ∈ I.\, i≠j ⇒ (\procB_{i},\val_i) ≠ (\procB_{j},\val_j)$, 
the sender and receiver of an event to be distinct: 
$\procA ≠ \procB_i$ for each $i \in I$, 
and recursion to be guarded:
in $μ t. \, G$, there is at least one message between $μt$ and each $t$ in $G$.
We omit $\sum$ for singleton choices. 
When working with a protocol described by a global type, we use~$\GG$ to refer to the top-level type, and $G$ to refer to its subterms.

We use the extended definition of global types from \cite{DBLP:conf/concur/MajumdarMSZ21} featuring \emph{sender-driven choice}. 
This definition subsumes classical MSTs that only allow \emph{directed choice} \cite{DBLP:conf/popl/HondaYC08}.
We focus on communication primitives and omit features like delegation or parametrization, and refer the reader to \cref{sec:related} for a discussion of different MST frameworks.

\paragraph{Global Types -- Semantics.}
As a basis for the semantics of a global type $\GG$, we construct a finite state machine
$
\semglobalsync(\GG) = (Q_{\GG}, \AlphSync, δ_{\GG}, q_{0, \GG}, F_{\GG})
$
where
\begin{itemize}
	\item $Q_{\GG}$ is the set of all syntactic subterms in $\GG$ together with the term $0$,
	\item $δ_{\GG}$ consists of the transitions $(\sum_{i ∈ I} \msgFromTo{\procA}{\procB_i}{\val_i.G_i}, \msgFromTo{\procA}{\procB_i}{\val_i}, G_i)$ for each $i ∈ I$,
	as well as $(μ t. G', ε, G')$ and $(t, ε, μ t. G')$ for each subterm~$\mu t.G'$, \item $q_{0, \GG} = \GG$ and
	$F_{\GG} = \set{0}$.
\end{itemize}
We define a homomorphism 
$\SyncToAsync$ onto the asynchronous alphabet:
\[
\SyncToAsync(\msgFromTo{\procA}{\procB}{\val})
\is
\snd{\procA}{\procB}{\val}. \,
\rcv{\procA}{\procB}{\val}\enspace.
\]
The semantics $\lang(\GG)$ of a global type $\GG$ is given by
$\interswaplang(\SyncToAsync(\lang(\semglobalsync(\GG))))$ where $\interswaplang$ is the closure under the indistinguishability relation $\interswap$ \cite{DBLP:conf/concur/MajumdarMSZ21}.
Two events are independent if they are not related by the \emph{happened-before} relation \cite{DBLP:journals/cacm/Lamport78}. For instance, any two send events from distinct senders are independent.
Two words are indistinguishable if one can be reordered into the other by repeatedly swapping consecutive independent events. The full definition can be found in \appendixRef{app:indist-rel}.

We call a state $q_G \in Q_{\GG}$ a \emph{send-originating} state, denoted $q_G \in Q_{\GG,!}$ for role $\procA$ if there exists a transition $q_G \xrightarrow{\msgFromTo{\procA}{\procB}{\val}} q_{G'} \in \delta_\GG$, and a \emph{receive-originating} state, denoted $q_G \in Q_{\GG,?}$ for $\procA$ if there exists a transition $q_G \xrightarrow{\msgFromTo{\procB}{\procA}{\val}} q_{G'} \in \delta_\GG$. 
We omit mention of role $\procA$ when clear from context.

\paragraph{Communicating State Machine \cite{DBLP:journals/jacm/BrandZ83}.}
\label{def:csm-formalisation}
$\mathcal{A} = \CSM{A}$ is a CSM over $\Procs$ and~$\MsgVals$ if
${A}_\procA = (Q_\procA, \Alphabet_\procA, \delta_\procA, q_{0, \procA}, F_\procA)$
is a deterministic finite state machine
over~$\Alphabet_\procA$ for every $\procA\in\Procs$.
Let 
$\prod_{\procA \in \Procs} Q_\procA$ 
denote the set of global states and
\mbox{$\channels = \set{(\channel{\procA}{\procB}) \mid \procA,\procB\in \Procs, \procA\neq \procB}$}
denote the set of channels. 
A~\emph{configuration} of $\mathcal{A}$ is a pair $(\vec{s}, \xi)$, where $\vec{s}\,$ is a global state and
$\xi : \channels \rightarrow \MsgVals^*$
is a mapping from each channel to a sequence of messages.
We use $\vec{s}_\procA$ to denote the state of $\procA$ in $\vec{s}$.
The CSM transition relation, denoted $\rightarrow$, is defined as~follows.
\begin{itemize}
	\item
	$(\vec{s},\xi) \xrightarrow{\snd{\procA}{\procB}{\val}} (\pvec{s}',\xi')$ if
	$(\vec{s}_\procA, \snd{\procA}{\procB}{\val}, \pvec{s}'_\procA)\in\delta_\procA$,
	$\vec{s}_\procC = \pvec{s}'_\procC$ for every role $\procC \neq \procA$,
	$\xi'(\channel{\procA}{\procB}) =  \xi(\channel{\procA}{\procB})\cdot\val$ and $\xi'(c) = \xi(c)$ for every other channel $c\in \channels$.
	
	\item
	$(\vec{s},\xi) \xrightarrow{\rcv{\procA}{\procB}{\val}} (\pvec{s}',\xi')$ if
	$(\vec{s}_\procB, \rcv{\procA}{\procB}{\val}, \pvec{s}'_\procB)\in\delta_\procB$,
	$\vec{s}_\procC = \pvec{s}'_\procC$ for every role $\procC \neq \procB$,
	$\xi(\channel{\procA}{\procB}) = \val\cdot \xi'(\channel{\procA}{\procB})$
	and $\xi'(c) = \xi(c)$ for every other channel $c\in \channels$.
\end{itemize}
In the initial configuration $(\vec{s}_0, \xi_0)$, each role's state in $\vec{s}_0$ is the initial state $q_{0,\procA}$ of $A_\procA$, and $\xi_0$ maps each channel to $\emptystring$.
A configuration $(\vec{s}, \xi)$ is said to be \emph{final} iff $\vec{s}_\procA$ is final for every $\procA$ and $\xi$ maps each channel to~$\emptystring$.
Runs and traces are defined in the expected way. 
A run is \emph{maximal} if either it is finite and ends in a final configuration, or it is infinite. 
The language $\lang(\mathcal{A})$ of the CSM $\mathcal{A}$ is defined as the set of maximal traces.
A~configuration $(\vec{s}, \xi)$ is a \emph{deadlock} if it is not final and has no outgoing transitions.
A~CSM is \emph{deadlock-free} if no reachable configuration is a deadlock.

\begin{definition}[Implementability]
\label{def:implementability}
We say that a CSM $\CSM{A}$ implements a global type $\GG$ if the following two properties hold: 
\begin{inparaenum}[(i)]
	\item \label{def:implementability-protocol-fidelity}
	\emph{protocol fidelity:} $\lang(\CSM{A}) = \lang(\GG)$, and
	\item \label{def:implementability-deadlock-freedom}
	\emph{deadlock freedom:} $\CSM{A}$ is deadlock-free.
\end{inparaenum}
A global type $\GG$ is \emph{implementable} if there exists a CSM that implements it. 
\end{definition}

One candidate implementation for global types can be computed directly from $\semglobal(\GG)$, by removing actions unrelated to each role and determinizing the result. 
The following two definitions define this candidate implementation in two steps. 
\begin{definition}[Projection by Erasure~\cite{DBLP:conf/cav/LiSWZ23}]
	\label{def:projection-by-erasure}
	Let $\GG$ be some global type
	with its state machine
	$
	\semglobalsync(\GG) =
	(Q_{\GG},
	\AlphSync,
	\delta_{\GG},
	q_{0, \GG},
	F_{\GG})
	$.
	For each role \mbox{$\procA \in \Procs$}, we define the state machine
	$
	\projerasure{\GG}{\procA} \,=
	(Q_{\GG},
	\Alphabet_\procA \dunion \set{\emptystring},
	\projerasuretrans,
	q_{0, \GG},
	F_{\GG})
	$
	where
	$\projerasuretrans \is
	\set{q \xrightarrow{\SyncToAsync(a) \wproj_{\Alphabet_\procA}} q'
		\mid q \xrightarrow{a} q' \in \delta_{\GG}}$. 
	By definition of $\SyncToAsync(\hole)$, it holds that $\SyncToAsync(a) \wproj_{\Alphabet_\procA} \in \Alphabet_\procA \dunion \set{\emptystring}$.
\end{definition}

We determinize $\projerasure{\GG}{\procA}$ via a standard subset construction~\cite{DBLP:books/daglib/0086373} to obtain a deterministic local state machine for $\procA$. 
Note that the construction ensures that $Q_\procA$ only contains subsets of $Q_{\GG}$ whose states are reachable via the same traces.
\begin{definition}[Subset Construction~\cite{DBLP:conf/cav/LiSWZ23}]
	\label{def:subset-construction}
	Let $\GG$ be a global type and $\procA$ be a role. 
	Then, the \emph{subset construction} for $\procA$ is defined as
	\[
	\subsetcons{\GG}{\procA} =
	\bigl(
	Q_{\procA},
	\Alphabet_\procA,
	\delta_{\procA},
	s_{0, \procA},
	F_{\procA}
	\bigr)
	\text{ where }
	\]
	\begin{itemize}
\item $ \delta(s, a) \is
		\set{q' \in Q_{\GG}
			\mid
			\exists q \in s,
			q \xrightarrow{a} \xrightarrow{\emptystring}\mathrel{\vphantom{\to}^*} q' \in \projerasuretrans
		},
		$
		for
		every
		$s \subseteq Q_{\GG}$ and
		$a \in \Alphabet_{\procA}$,\item $s_{0, \procA} \is
		\set{q \in Q_{\GG} \mid
			q_{0, \GG} \xrightarrow{\emptystring} \mathrel{\vphantom{\to}^*} q \in \projerasuretrans}$,
		\item $Q_{\procA} \is \lfp_{\set{s_{0,\procA}}}^\subseteq \lambda Q.\, Q \union \set{ \delta(s,a) \mid s \in Q \land a \in \Alphabet_{\procA}} \setminus \set{\emptyset}$,
		
\item $\delta_{\procA} \is \restrict{\delta}{Q_{\procA} \times \Alphabet_{\procA}}$, and
		\item $F_{\procA} \is
		\set{s \in Q_{\procA}
			\mid s \inters F_{\GG} \neq \emptyset}$.
	\end{itemize}
\end{definition}

Li et al.~\cite{DBLP:conf/cav/LiSWZ23} showed that if $\GG$ is implementable, then $\CSMl{\subsetcons{\GG}{\procA}}$ implements~$\GG$ and satisfies the following property:

\begin{definition}
	Let $\GG$ be a global type.
	We call an implementation $\CSM{A}$ \emph{local language preserving} with respect to $\GG$ if $\lang(A_\procA) = \lang(\GG)\wproj_{\Alphabet_\procA}$ for all $\procA \in \Procs$.
\end{definition}

For the remainder of the paper, we fix a global type $\GG$ that we assume is implementable.

 \section{Deciding \emph{\ProblemOne}}
\label{sec:first-problem}

\ProblemOne asks: Given a CSM $\mathcal{A}$, does $\mathcal{A}$ implement $\GG$? 
For two CSMs $\mathcal{A}$ and $\mathcal{B}$, we say that $\mathcal{A}$ refines $\mathcal{B}$ if and only if every trace in $\mathcal{A}$ is a trace in $\mathcal{B}$, and a trace in $\mathcal{A}$ terminates maximally in $\mathcal{A}$ if and only if it terminates maximally in $\mathcal{B}$. If $\mathcal{A}$ and $\mathcal{B}$ refine each other, we say that they are equivalent.
Further, in the case that $\mathcal{B}$ is deadlock-free, one can simplify the condition to the following: every trace in $\mathcal{A}$ is a trace in $\mathcal{B}$, and if a trace terminates in $\mathcal{A}$, then it terminates in $\mathcal{B}$ and is maximal in $\mathcal{A}$. 

We can recast \ProblemOne in terms of CSM refinement using the fact that $\subsetprojCSM$ is an implementation for $\GG$. 
Therefore, the question amounts to asking whether $\mathcal{A}$ and $\subsetprojCSM$ are equivalent. 

\newcommand{\CharacterizationOne}{\emph{$C_1$}\xspace}
\newcommand{\CharacterizationTwo}{\emph{$C_2$}\xspace}

Our goal is then to present a characterization \CharacterizationOne that satisfies the following: 
\begin{theorem}
	\label{thm:equivalence-one}
	Let $\GG$ be an implementable global type and $\mathcal{A}$ be a CSM. 
	Then, $\subsetprojCSM$ and $\mathcal{A}$ are equivalent if and only if \CharacterizationOne is satisfied. 
\end{theorem}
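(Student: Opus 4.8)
The plan is to prove both directions of the biconditional after fixing the shape of the characterization \CharacterizationOne{}. Since $\subsetprojCSM$ is a fixed, deadlock-free, local-language-preserving implementation of $\GG$, equivalence with $\mathcal{A}$ reduces, via the simplified refinement condition noted above, to mutual trace inclusion together with agreement on maximal termination. My proposed \CharacterizationOne{} is a conjunction of per-role conditions comparing each component $A_\procA$ of $\mathcal{A}$ against the subset projection $\subsetproj{\GG}{\procA}$: informally, (i) the send behavior of $A_\procA$ coincides with that prescribed by $\subsetproj{\GG}{\procA}$ along every common run, and (ii) the receive behavior of $A_\procA$ agrees with $\subsetproj{\GG}{\procA}$ up to the set of messages that can actually become available in $\procA$'s incoming channels, as extracted from $\semglobal(\GG)$. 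Condition (ii) is where available-messages reasoning enters: $A_\procA$ may legally add receive transitions (input covariance) provided they can never be enabled by a reachable configuration, and may legally omit receives only for messages that never become available. Phrasing \CharacterizationOne{} this way keeps it checkable role by role, without constructing the exponential CSM product, which yields the polynomial bound.

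First I would prove soundness: \CharacterizationOne{} implies equivalence. I would relate reachable configurations of $\mathcal{A}$ and of $\subsetprojCSM$ and show the relation is a bisimulation up to channel contents, by induction on runs. The send agreement (i) ensures every send enabled in one CSM is enabled in the other with matching channel updates; the receive agreement (ii), combined with an invariant that the multiset of in-flight messages in any reachable configuration matches the available messages predicted from $\semglobal(\GG)$, does the same for receives. From the bisimulation I extract mutual inclusion of finite and infinite traces, and argue that a configuration is final in one CSM exactly when its image is final in the other, giving agreement on maximal termination; deadlock freedom of $\subsetprojCSM$ then transfers to $\mathcal{A}$.

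Second, I would prove completeness, equivalence implies \CharacterizationOne{}, by contraposition: assuming some per-role condition fails for a role $\procA$, I construct a global execution witnessing non-equivalence. A send of $A_\procA$ deviating from $\subsetproj{\GG}{\procA}$ yields a reachable run of $\mathcal{A}$ that either emits a trace whose projection leaves $\lang(\GG)\wproj_{\Alphabet_\procA}$, contradicting trace inclusion, or blocks a send the protocol mandates, breaking maximal termination. An added enabled receive, or an omitted receive for an available message, is driven to its offending configuration using the reachability supplied by the available-messages computation, producing a trace outside $\lang(\GG)$ or a deadlock. The key device is to lift a local witness in $A_\procA$ to a global run by scheduling the other roles according to the subset projection and then using the indistinguishability closure $\interswap$ to reorder events so that the relevant message is, or is not, present at the right moment.

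I expect this lifting of a local deviation to a reachable global configuration to be the main obstacle. Under asynchrony a receive of $A_\procA$ is dangerous only if some reachable configuration actually places the message in $\procA$'s channel, so the argument hinges on a precise reachability characterization: the available-messages predicate read off from $\semglobal(\GG)$ must capture exactly which messages can be in flight at each protocol state. Proving that this predicate is sound and complete for reachability, and computable in polynomial time from the subterm structure of $\GG$, is the crux; once it is established, both the simulation invariant of the soundness direction and the witness construction of the completeness direction follow by induction along runs, using determinism of the subset construction and the fact that each of its states is the set of $\semglobal(\GG)$-states reachable by a common trace.
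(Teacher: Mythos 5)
Your overall architecture coincides with the paper's: per-role conditions that decorate the states of each $A_\procA$ with the protocol states reachable on the same local words, a send-agreement condition, a receive-agreement condition phrased with available messages, soundness by induction over traces, and completeness by contraposition. The genuine gap is in your completeness argument. You propose to turn a violating send of $A_\procA$ directly into a reachable run that ``emits a trace whose projection leaves $\lang(\GG)\wproj_{\Alphabet_\procA}$'' or blocks a mandated action. This step fails as sketched: if $s \xrightarrow{x} s'$ is a send transition of $A_\procA$ and some $G \in d(s)$ does not enable $x$, firing $x$ after a trace $w$ consistent with $G$ need not leave $\mathrm{pref}(\lang(\GG))$, because $x$ may well originate from a \emph{different} protocol state in the same decoration set $d(s)$ (reached by the same local word $w\wproj_{\Alphabet_\procA}$); the extended trace stays legal, the roles merely disagree about which run of $\GG$ they are following, and the damage surfaces only after an unboundedly long continuation. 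Making that divergence observable is exactly the hard completeness construction behind the projection operator of Li et al.~\cite{DBLP:conf/cav/LiSWZ23}, which your sketch would in effect have to reprove. The paper avoids this entirely: assuming equivalence, it replays $w$ and $wx$ through both CSMs, deduces that the subset construction itself would have a transition $t \xrightarrow{x} t'$ with $G \in t$, and this contradicts Send Validity (resp.\ Receive Validity / No Mixed Choice) of the subset construction, which is known to hold because $\GG$ is implementable (Theorem~7.1 of \cite{DBLP:conf/cav/LiSWZ23}).

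Relatedly, your soundness invariant---``the multiset of in-flight messages in any reachable configuration matches the available messages predicted from $\semglobal(\GG)$''---is false as stated, and the exactness of $\semavail$ for reachability that you identify as the crux is neither needed nor claimed by the paper. The set $\semavail^{\procA}_{(G\ldots)}$ over-approximates, across all continuations of all runs consistent with the current trace in which $\procA$ stays blocked, the messages that can come to the head of $\procA$'s channels; the channel contents of any single configuration form only a subset of it. What the paper actually proves and uses (\cref{lm:about-receive-decoration-validity}) is one-directional: along a run consistent with $w$ with unique splitting $\alpha \cdot G \xrightarrow{l} G' \cdot \beta$ for $\procB$, if $\snd{\procC}{\procB}{\val} \notin \semavail^{\procB}_{(G'\ldots)}$, then the receive $\rcv{\procC}{\procB}{\val}$ just performed must coincide with $\SyncToAsync(l)\wproj_{\Alphabet_\procB}$, the protocol's next step for $\procB$. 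That lemma, combined with the intersection-set and unique-splitting machinery inherited from \cite{DBLP:conf/cav/LiSWZ23}, is what drives the trace-inclusion induction; the converse refinement direction follows from Local Language Inclusion (a consequence of Transition Exhaustivity and Final State Validity), not from a bisimulation with a channel-content invariant.
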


We motivate our characterization for \ProblemOne using a series of examples. 
Consider the following simple global type $\GG_1$:
\[
\small
	\GG_1 \is 
	+ \;
	\begin{cases}
		\msgFromTo{\procA}{\procB}{\msgB}. \,
		\msgFromTo{\procB}{\procA}{\msgB}. \,
		0 \,
		\\
		\msgFromTo{\procA}{\procB}{\msgM}. \,
		\msgFromTo{\procB}{\procA}{\msgM}. \,
		0 \,
	\end{cases}
\]
This global type is trivially implementable; the subset construction for role $\procB$ obtained by the projection operator in \cite{DBLP:conf/cav/LiSWZ23} is depicted in \cref{fig:G1-subset-projection-p}. Clearly, in any CSM implementing $\GG_1$, the subset construction can be replaced with the more compact state machine~$A_1$, shown in \cref{fig:G1-alternative-1}.

\begin{figure}[t]
\hfill
\begin{subfigure}[b]{.3\textwidth}
\centering
\resizebox{0.99\textwidth}{!}{
	\begin{tikzpicture}[node distance=1cm and 2cm,>=stealth', line width=0.25mm]
		\node[draw, circle, minimum size=0.5cm, initial left=, initial text =](q0){};
		\node[draw, circle, minimum size = 0.5cm, below right=0.4cm and 1.5cm of q0](q1){};
		\node[draw, circle, minimum size = 0.5cm, above right=0.4cm and 1.5cm of q0](q2){};
		\node[draw, circle, minimum size = 0.5cm, accepting, right=1.5cm of q1](q3){};
		\node[draw, circle, minimum size = 0.5cm, accepting, right= 1.5cm of q2](q4){};
		\path[->](q0) edge node[sloped, pos=0.5, below] {$\snd{\procA}{\procB}{\msgM}$} (q1);
		\path[->](q0) edge node[sloped, pos=0.5, above] {$\snd{\procA}{\procB}{\msgB}$} (q2);
		\path[->](q1) edge node[below] {$\rcv{\procB}{\procA}{\msgM}$} (q3);
		\path[->](q2) edge node[above] {$\rcv{\procB}{\procA}{\msgB}$} (q4);
	\end{tikzpicture}
}
	\caption{$\subsetproj{\GG_1}{\procA}$	\label{fig:G1-subset-projection-p}}
\end{subfigure}\hfill \begin{minipage}[b]{.3\textwidth}
\begin{subfigure}[b]{\textwidth}
\centering
\resizebox{0.9\textwidth}{!}{
	\begin{tikzpicture}[node distance=1cm and 2cm,>=stealth', line width=0.25mm]
		\node[draw, circle, minimum size=0.5cm, initial left=, initial text =](q0){};
		\node[draw, circle, minimum size = 0.5cm, right=1.5cm of q0](q1){};
		\node[draw, circle, minimum size = 0.5cm, accepting, right=1.5cm of q1](qf){};
		\path[->](q0) edge node[below] {$\snd{\procA}{\procB}{\msgM}$} (q1);
		\path[->](q0) edge node[above] {$\snd{\procA}{\procB}{\msgB}$} (q1);
		\path[->](q1) edge node[below] {$\rcv{\procB}{\procA}{\msgM}$} (qf);
		\path[->](q1) edge node[above] {$\rcv{\procB}{\procA}{\msgB}$} (qf);
\end{tikzpicture}
}
	\caption{$A_1$ 	\label{fig:G1-alternative-1}
}
\end{subfigure}\\\\
\begin{subfigure}[b]{\textwidth}
\centering
\resizebox{0.9\textwidth}{!}{
	\begin{tikzpicture}[node distance=1cm and 2cm,>=stealth', line width=0.25mm]
		\node[draw, circle, minimum size=0.5cm, initial left=, initial text =](q0){};
		\node[draw, circle, minimum size = 0.5cm, accepting, right=1.5cm of q0](q1){};
\path[->](q0) edge node[below] {$\snd{\procA}{\procB}{\msgM}$} (q1);
		\path[->](q0) edge node[above] {$\snd{\procA}{\procB}{\msgB}$} (q1);
		\path[->](q1) edge [loop right] node {$\rcv{\_}{\procA}{\_}$} (q1);
	\end{tikzpicture}
}
	\caption{$A_2$ 	\label{fig:G1-alternative-2}}
      \end{subfigure}
\end{minipage}
\hfill
\begin{subfigure}[b]{0.3\textwidth}
	\centering
	\resizebox{.9\textwidth}{!}{
		\begin{tikzpicture}[node distance=1cm and 2cm,>=stealth', line width=0.25mm]
			\node[draw, circle, minimum size=0.5cm, initial left=, initial text =](q0){};
			\node[draw, circle, minimum size=0.5cm, above right=1cm and 1.5cm of q0](qfake){};
			\node[draw, circle, minimum size=0.5cm, right=1.5cm of qfake](qfake2){};
			\node[draw, circle, minimum size = 0.5cm, accepting, right=1.5cm of q0](q1){};
\path[->](q0) edge node[above] {$\snd{\procA}{\procB}{\msgB}$} (q1);			
			\path[->](q0) edge node[below] {$\snd{\procA}{\procB}{\msgM}$} (q1);
			\path[->](q0) edge node[sloped, pos=0.5, above] {$\rcv{\procB}{\procA}{\msgO}$} (qfake);
			\path[->](qfake) edge node[above] {$\snd{\procA}{\procB}{\msgO}$} (qfake2);
			\path[->](q1) edge [loop right] node {$\rcv{\_}{\procA}{\_}$} (q1);
\end{tikzpicture}
	}
	\caption{$A_3$ 	\label{fig:G1-alternative-3}} \end{subfigure}
	\caption{Subset construction of $\GG_1$ onto $\procA$ and three alternative implementations} \end{figure}
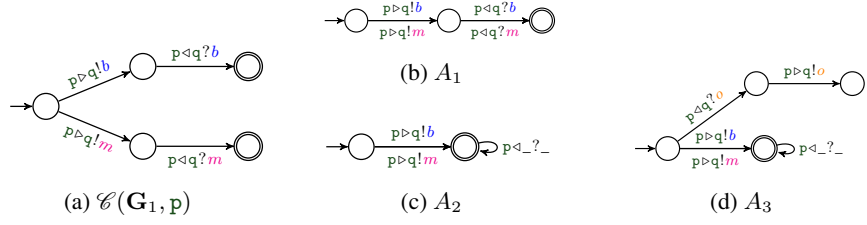

For a local state machine in a CSM, control flow is determined by both the local transition relation and the global channel state. 
However, in some cases, the local information is redundant: the role's channel contents alone are enough to enforce that it produces the correct behaviors. 
In the example above, after $\procA$ chooses to send $\procB$ either $\msgM$ or $\msgB$, $\procB$ will guarantee that the correct message, i.e. the same one, is sent back to~$\procA$.
Role $\procA$'s state machine can rely on its channel contents to follow the protocol -- it does not need separate control states for each message. 
In fact, we can further replace $\procA$'s control states after sending with an accepting universal receive state, as shown in $A_2$ in \cref{fig:G1-alternative-2}.
Finally, we can add send transitions from unreachable states, as shown in $A_3$ in \cref{fig:G1-alternative-3}. 

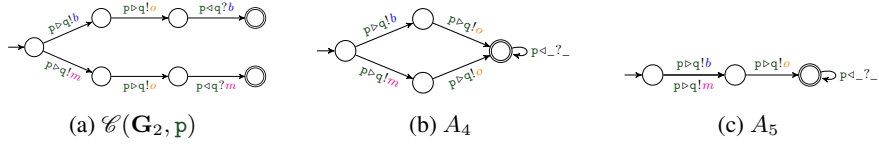
\begin{figure}[t]
\hfill
\begin{subfigure}[b]{.3\textwidth}
\centering
\resizebox{0.99\textwidth}{!}{
	\centering
	\begin{tikzpicture}[node distance=1cm and 2cm,>=stealth', line width=0.25mm]
		\node[draw, circle, minimum size=0.5cm, initial left=, initial text =](q0){};
		\node[draw, circle, minimum size = 0.5cm, below right=0.4cm and 1.4cm of q0](q1){};
		\node[draw, circle, minimum size = 0.5cm, above right=0.4cm and 1.4cm of q0](q2){};
		\node[draw, circle, minimum size = 0.5cm, right=1.5cm of q1](q3){};
		\node[draw, circle, minimum size = 0.5cm, right=1.5cm of q2](q4){};
		\node[draw, circle, minimum size = 0.5cm, accepting, right=1.5cm of q3](q5){};
		\node[draw, circle, minimum size = 0.5cm, accepting, right= 1.5cm of q4](q6){};
		\path[->](q0) edge node[sloped, pos=0.5, below] {$\snd{\procA}{\procB}{\msgM}$} (q1);
		\path[->](q0) edge node[sloped, pos=0.5, above] {$\snd{\procA}{\procB}{\msgB}$} (q2);
		\path[->](q1) edge node[below] {$\snd{\procA}{\procB}{\msgO}$} (q3);
		\path[->](q2) edge node[above] {$\snd{\procA}{\procB}{\msgO}$} (q4);
		\path[->](q3) edge node[below] {$\rcv{\procB}{\procA}{\msgM}$} (q5);
		\path[->](q4) edge node[above] {$\rcv{\procB}{\procA}{\msgB}$} (q6);
\end{tikzpicture}
}
	\caption{$\subsetproj{\GG_2}{\procA}$}
	\label{fig:G2-subset-projection-p}
\end{subfigure}
\hfill
\begin{subfigure}[b]{.3\textwidth}
\centering
\resizebox{0.99\textwidth}{!}{
\begin{tikzpicture}[node distance=1cm and 2cm,>=stealth', line width=0.25mm]
		\node[draw, circle, minimum size=0.5cm, initial left=, initial text =](q0){};
		\node[draw, circle, minimum size = 0.5cm, below right=0.4cm and 1.5cm of q0](q1){};
		\node[draw, circle, minimum size = 0.5cm, above right=0.4cm and 1.5cm of q0](q2){};
		\node[draw, circle, minimum size = 0.5cm, accepting, below right=0.4cm and 1.5cm of q2](q34){};
\path[->](q0) edge node[sloped, pos=0.5, below] {$\snd{\procA}{\procB}{\msgM}$} (q1);
		\path[->](q0) edge node[sloped, pos=0.5, above] {$\snd{\procA}{\procB}{\msgB}$} (q2);
		\path[->](q1) edge node[sloped, pos=0.5, below] {$\snd{\procA}{\procB}{\msgO}$} (q34);
		\path[->](q2) edge node[sloped, pos=0.5, above] {$\snd{\procA}{\procB}{\msgO}$} (q34);
		\path[->](q34) edge [loop right] node {$\rcv{\_}{\procA}{\_}$} (q34);
\end{tikzpicture}
}
	\caption{$A_4$}
	\label{fig:G2-alternative-1}
\end{subfigure}
\hfill
\begin{subfigure}[b]{.3\textwidth}
\centering
\resizebox{0.99\textwidth}{!}{
    \begin{tikzpicture}[node distance=1cm and 2cm,>=stealth', line width=0.25mm]
		\node[draw, circle, minimum size=0.5cm, initial left=, initial text =](q0){};
		\node[draw, circle, minimum size = 0.5cm, right=1.5cm of q0](q12){};
		\node[draw, circle, accepting, minimum size = 0.5cm, right=1.3cm of q12](q34){};
		\path[->](q0) edge node[below] {$\snd{\procA}{\procB}{\msgM}$} (q12);
		\path[->](q0) edge node[above] {$\snd{\procA}{\procB}{\msgB}$} (q12);
		\path[->](q12) edge node[above] {$\snd{\procA}{\procB}{\msgO}$} (q34);
\path[->](q34) edge [loop right] node {$\rcv{\_}{\procA}{\_}$} (q34);
    \end{tikzpicture}
}
	\caption{$A_5$}
	\label{fig:G2-alternative-2}
\end{subfigure}
	\caption{Subset construction of $\GG_2$ onto $\procA$ and two alternative implementations} \end{figure}

Similar patterns arise for send actions. 
Consider the following variation of the first global type, $\GG_2$:
\[\small
\GG_2 \is 
+ \;
\begin{cases}
	\msgFromTo{\procA}{\procB}{\msgB}. \,
	\msgFromTo{\procA}{\procC}{\msgO}. \,
	\msgFromTo{\procB}{\procA}{\msgB}. \, 
	0 \,
	\\
	\msgFromTo{\procA}{\procB}{\msgM}. \,
	\msgFromTo{\procA}{\procC}{\msgO}. \,
	\msgFromTo{\procB}{\procA}{\msgM}. \, 
	0 \,
\end{cases}
\]
The subset construction from \cite{DBLP:conf/cav/LiSWZ23} yields the state machine for $\procA$ shown in \cref{fig:G2-subset-projection-p}. 

Our reasoning above shows that $A_4$, depicted in~\cref{fig:G2-alternative-1}, is a correct alternative implementation for $\procA$. 
Now observe that the pre-states of the two $\snd{\procA}{\procB}{\msgO}$ transitions can be collapsed because their continuations are identical. This yields another correct alternative implementation $A_5$, shown in \cref{fig:G2-alternative-2}. 

Informally, the subset construction takes a ``maximalist'' approach, creating as many distinct states as possible from the global type, and checking whether they are enough to guarantee that the role behaves correctly. 
However, sometimes this maximalism creates redundancy: just because two states are distinct according to the global type does not mean they need to be. In these cases, an implementation has the flexibility to merge certain distinct states together, or add transitions to a state. 
We wish to precisely characterize when such modifications to local state machines preserve protocol fidelity and deadlock freedom. 

Our conditions for \CharacterizationOne are inspired by the Send and Receive Validity conditions that precisely characterize implementability for global types, given in \cite{DBLP:conf/cav/LiSWZ23}. 
We restate the conditions, in addition to relevant definitions, for clarity. 
\begin{definition}[Available messages~\cite{DBLP:conf/concur/MajumdarMSZ21}]
	The set of available messages  is recursively defined on the structure of the global type.
	For completeness, we need to unfold the distinct recursion variables once.
	For this, we define a map $\getMu$ from variable to subterms and write $\getMuG$ for $\getMu(\GG)$:
	
\begin{footnotesize}
		{} \hfill
		$
		\getMu(0) \is [\,]
		$ \hfill $
		\getMu(t) \is [\,]
		$ \hfill $
		\getMu(μt.G) \is [t \mapsto G] ∪ \getMu(G)
		$  \hfill {}
		
		{} \hfill $
		\getMu(\sum_{i ∈ I} \msgFromTo{\procA}{\procB_i}{\val_i.G_i}) \is \bigcup_{i∈I} \getMu(G_i)
		$  \hfill {}
	\end{footnotesize}

	\noindent The function $\semavaildef{\blockedset}{T}{\hole}$ keeps a set of unfolded variables $T$, which is empty initially.	
	\begin{center}
		\begin{minipage}{0.98\textwidth}
			\begin{small}
				\noindent
				$
				\semavaildef{\blockedset}{T}{0} \is ∅ \hfill
				\semavaildef{\blockedset}{T}{μt.G} \is \semavaildef{\blockedset}{T ∪ \set{t}}{G} \hfill
				\semavaildef{\blockedset}{T}{t} \is \begin{cases}
					∅ & \text{if} ~ t ∈ T\\
					\semavaildef{\blockedset}{T ∪ \set{t}}{\getMuG(t)} & \text{if} ~ t ∉ T
				\end{cases}\\[2mm]
				\semavaildef{\blockedset}{T}{\sum_{i ∈ I} \msgFromTo{\procA}{\procB_i}{\val_i.G_i}}
				\is
				\begin{cases}
					\bigcup_{i∈I,m∈\MsgVals} (\semavaildef{\blockedset}{T}{G_i} \setminus \set{ \snd{\procA}{\procB_i}{\val} }) ∪  \set{ \snd{\procA}{\procB_i}{\val_i} } \quad \hfill \text{if} ~ \procA ∉ \blockedset \\
					\bigcup_{i∈I} \semavaildef{\blockedset ∪ \set{ \procB_i }}{T}{G_i} \quad \hfill \text{if} ~ \procA ∈ \blockedset
				\end{cases}
				$
			\end{small}
		\end{minipage}
	\end{center}\smallskip	
	\noindent We write $\semavail^{\blockedset}_{(G \ldots)}$ for $\semavaildef{\blockedset}{\emptyset}{G}$.
	If $\blockedset$ is a singleton set, we omit set notation and write $\semavail^{\procA}_{(G \ldots)}$
	for $\semavail^{\set{\procA}}_{(G \ldots)}$.
\end{definition}

Intuitively, the available messages definition captures all of the messages that can be at the head of their respective channels when a particular role is blocked from taking further transitions. 

For notational convenience, we define the \emph{origin} and \emph{destination} of a transition following~\cite{DBLP:conf/cav/LiSWZ23}, but generalized from the subset construction automaton. 
\begin{definition}[Transition Origin and Destination]
	Let $\GG$ be a global type and let $\projerasuretrans$ be the transition relation of $\projerasure{\GG}{\procA}$.
For $x \in \Alphabet_\procA$ and $s,s' \subseteq Q_\GG$,
	we define the set of \emph{transition origins} $\transAnnoFunc(s \xrightarrow{x} s')$ and \emph{transition destinations} $\transAnnoFunDest(s \xrightarrow{x} s')$ as~follows:
\begin{align*}
		\transAnnoFunc(s \xrightarrow{x} s')
		\is {} &
		\set{G \in s
			\mid
			\exists G' \in s'. \,
			G \xrightarrow{x}\mathrel{\vphantom{\to}^*} G' \in \projerasuretrans} \; \text{ and }\\
		\transAnnoFunDest(s \xrightarrow{x} s')
		\is {} &
		\set{G' \in s'
			\mid
			\exists G \in s. \,
			G \xrightarrow{x}\mathrel{\vphantom{\to}^*} G' \in \projerasuretrans} \enspace.
	\end{align*}
\end{definition}

Li et al.~\cite{DBLP:conf/cav/LiSWZ23} showed that $\GG$ is implementable if and only if the subset construction CSM $\CSMl{\subsetcons{\GG}{\procA}}$ satisfies Send and Receive Validity for each $\subsetcons{\GG}{\procA}$.

\begin{definition}[Send Validity]
	\label{cond:send-state-validity-transition-origins}
	$\subsetcons{\GG}{\procA}$ satisfies \emph{Send Validity} iff every send transition $s  \xrightarrow{x} s' \in \delta_\procA$ is enabled in all states contained in $s$: 
	\[
	\forall s  \xrightarrow{x} s' \in \delta_\procA.
	~x \in \Alphabet_{\procA,!} \implies
	\transAnnoFunc(s  \xrightarrow{x} s') = s \enspace.
	\]
\end{definition}

\begin{definition}[Receive Validity]
	\label{cond:rcv-state-validity}
	$\subsetcons{\GG}{\procA}$ satisfies \emph{Receive Validity} iff no receive transition is enabled in an alternative continuation that originates from the same source state:
	\[
	\begin{array}{l}
		\forall 
		s \xrightarrow{\rcv{\procB_1}{\procA}{\val_1}} s_1,\,
		s \xrightarrow{\rcv{\procB_2}{\procA}{\val_2}} s_2 \in \delta_\procA.\, \\
		\qquad  \procB_1 \neq \procB_2
		\; \implies \;
		\forall~G_2 \in \transAnnoFunDest(s \xrightarrow{\rcv{\procB_2}{\procA}{\val_2}} s_2). \;
		\snd{\procB_1}{\procA}{\val_1} \notin \semavail^{\procA}_{(G_2 \ldots)} \enspace.
	\end{array}
	\]
\end{definition}

We wish to adapt these conditions to define \CharacterizationOne.
However, unlike Send and Receive Validity, which are defined on special state machines, namely the subset construction for each role, the \ProblemOne problem asks whether arbitrary state machines implement the given~$\GG$.

We first present a \emph{state decoration} function which maps local states in an arbitrary deterministic finite state machine to sets of global states in $\GG$. 
Intuitively, state decoration captures all global states that can be reached in the projection by erasure automaton $\projerasure{\GG}{\procB}$ on the same prefixes that reach the present state in the local state machine. 
\begin{definition}[State decoration with respect to $\GG$]
	\label{def:state-decoration}
Let $\procA\in \Procs$ be a role and let $A = (Q, \Alphabet_\procA, s_0, \delta, F)$ be a deterministic finite state machine for $\procA$.
	Let
	$
	\projerasure{\GG}{\procA} \,=
	(Q_{\GG},
	\Alphabet_\procA \dunion \set{\emptystring},
	\projerasuretrans,
	q_{0, \GG},
	F_{\GG})
	$ 
	be $\procA$'s projection by erasure state machine for $\GG$. 
	We define a total function $d_{\GG,A} : Q \rightarrow \powersetof{Q_\GG}$ that maps each state in $A$ to a subset of states in $\projerasure{\GG}{\procA}$ such that:
\[
	d_{\GG,A,\procA}(s) = 
	\{ q \in Q_{\GG} 
	\mid 
	\exists u \in \Alphabet^*_\procA.~
	s_0 \xrightarrow{u}\mathrel{\vphantom{\to}^*} s \in \delta \land 
	q_{0,\GG} \xrightarrow{u}\mathrel{\vphantom{\to}^*} q \in \projerasuretrans 
	\}\enspace.
	\]
	We refer to $d_{\GG,A,\procA}(s)$ as the \emph{decoration set} of $s$, and omit the subscripts $\GG,A,\procA$ when clear from context. 
\end{definition}

\begin{remark}
	Note that the subset construction can be viewed as a special state machine for which the state decoration function is the identity function. 
	In other words, for all $s \in Q_\procA$ where $Q_\procA$ is the set of states of $\subsetcons{\GG}{\procA}$, $d(s) = s$. 
\end{remark}

\newcommand{\sendDecVal}{\emph{Send Decoration Validity}\xspace}
\newcommand{\receiveDecVal}{\emph{Receive Decoration Validity}\xspace}
\newcommand{\localLangIncl}{\emph{Local Language Inclusion}\xspace}
\newcommand{\transitionExhaustive}{\emph{Transition Exhaustivity}\xspace} 
\newcommand{\finalVal}{\emph{Final State Validity}\xspace}
\noindent
We are now equipped to present \CharacterizationOne.  
\begin{definition}[\CharacterizationOne]
	\label{def:characterization-one}
	Let $\GG$ be a global type and $\mathcal{A}$ be a CSM.
\CharacterizationOne is satisfied when for all $\procA \in \Procs$, with $A_\procA = (Q_\procA, \Alphabet_\procA, \delta_\procA, s_{0,\procA}, F_\procA)$ denoting the state machine for $\procA$ in $\mathcal{A}$, the following conditions hold:
	\begin{itemize}
		\item \label{cond:send-decoration-validity} 
		\sendDecVal: every send transition $s \xrightarrow{x} s' \in \delta_\procA$ is enabled in all states decorating $s$: \\$\forall s \xrightarrow{\snd{\procA}{\procB}{\val}} s' \in \delta_\procA.~\transAnnoFunc(d(s)  \xrightarrow{\snd{\procA}{\procB}{\val}} d(s')) = d(s)$. \\
		\item \label{cond:receive-decoration-validity} 
		\receiveDecVal: no receive transition is enabled in an alternative continuation originating from the same state: \\$\begin{array}{l}
		\forall s \xrightarrow{\rcv{\procB_1}{\procA}{\val_1}} s_1,~s \xrightarrow{x} s_2 \in \delta_\procA.~x \neq \rcv{\procB_1}{\procA}{\_} \implies \\\qquad
		\forall G' \in \transAnnoFunDest(d(s) \xrightarrow{x} d(s_2)). \;
		\snd{\procB_1}{\procA}{\val_1} \notin \semavail^{\procA}_{(G' \ldots)} .
		\end{array}$ \\
\item \label{cond:send-state-exhaustive}
		\transitionExhaustive: every transition that is enabled in some global state decorating $s$ must be an outgoing transition from $s$: \\
		$\forall s \in Q.~\forall G \xrightarrow{x}\mathrel{\vphantom{\to}^*} G' \in \projerasuretrans.~G \in d(s) \implies \exists s' \in Q.~s \xrightarrow{x} s' \in \delta_\procA$.  \\
		\item \label{cond:final-states-final} 
		\finalVal: a reachable state with a non-empty decorating set is final if its decorating set contains a final global state: \\		
		$\forall s \in Q.~d(s) \neq \emptyset \implies (d(s) \cap F_\GG \neq \emptyset \implies s \in F_\procA)$. 
	\end{itemize}
\end{definition}

\noindent
We want to show the following equivalence to prove \cref{thm:equivalence-one}: 
\begin{center}
	\CharacterizationOne $\Leftrightarrow \mathcal{A}$ refines $\subsetprojCSM$ and $\subsetprojCSM$ refines $\mathcal{A}$.
\end{center}

We address soundness (the forward direction) and completeness (the backward direction) in turn. 
Soundness states that \CharacterizationOne is sufficient to show that $\mathcal{A}$ preserves all behaviors of the subset construction, and does not introduce new behaviors. 

We say that a state machine $A$ for role $\procA$ satisfies \localLangIncl if it satisfies $\lang(\GG) \wproj_{\Alphabet_{\procA}} \subseteq \lang(A)$. 
The following lemma, proven in \appendixRef{app:proofs}, establishes that \localLangIncl follows from \transitionExhaustive and \finalVal.
\begin{restatable}{lemma}{impliesLocalLangIncl}
	\label{lm:implies-local-lang-incl}
	Let $A_\procA = (Q_\procA, \Alphabet_\procA, \delta_\procA, s_{0,\procA}, F_\procA)$ denote the state machine for $\procA$ in $\mathcal{A}$. 
	Then, \transitionExhaustive and \finalVal imply
	$\lang(\GG) \wproj_{\Alphabet_{\procA}} \subseteq \lang(A_\procA)$. 
\end{restatable}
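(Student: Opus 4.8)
The plan is to fix an arbitrary $w \in \lang(\GG)\wproj_{\Alphabet_\procA}$ and construct a maximal run of $A_\procA$ on $w$, which gives $w \in \lang(A_\procA)$. The first step connects $w$ to a run of the projection-by-erasure automaton $\projerasure{\GG}{\procA}$. Because $\lang(\GG) = \interswaplang(\SyncToAsync(\lang(\semglobalsync(\GG))))$ and the indistinguishability relation $\interswap$ only reorders \emph{independent} events — and thus never two events active at the same role, which are totally ordered — the projection $\wproj_{\Alphabet_\procA}$ is invariant under $\interswap$. Hence $w = \SyncToAsync(v)\wproj_{\Alphabet_\procA}$ for some $v \in \lang(\semglobalsync(\GG))$. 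Since $\projerasure{\GG}{\procA}$ is obtained from $\semglobalsync(\GG)$ by relabelling each transition $q \xrightarrow{a} q'$ to $q \xrightarrow{\SyncToAsync(a)\wproj_{\Alphabet_\procA}} q'$, and since $\SyncToAsync$ and $\wproj_{\Alphabet_\procA}$ are homomorphisms, the run of $\semglobalsync(\GG)$ witnessing $v$ is at the same time a run of $\projerasure{\GG}{\procA}$ that reads $w$; it is finite and ends in $F_\GG$ when $w$ is finite, and infinite when $w$ is infinite.

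The core of the argument is then a \emph{progress invariant}, proved by induction on the finite prefixes $u$ of $w$: $A_\procA$ admits a run on $u$, necessarily reaching a unique state $s_u$ by determinism. Once $s_u$ is reached via $u$, the definition of $d$ immediately gives that every $q$ with $q_{0,\GG} \xrightarrow{u}\mathrel{\vphantom{\to}^*} q \in \projerasuretrans$ satisfies $q \in d(s_u)$ (taking $u$ itself as the witness). The base case $u = \emptystring$ is trivial. For the step from $u$ to $ux$ with $x \in \Alphabet_\procA$, the run on $w$ passes after $u$ through some $q$ with $q_{0,\GG}\xrightarrow{u}\mathrel{\vphantom{\to}^*} q \in \projerasuretrans$ and then reads $x$, i.e.\ $q \xrightarrow{x}\mathrel{\vphantom{\to}^*} q' \in \projerasuretrans$. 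Since $q \in d(s_u)$ by the remark above, \transitionExhaustive supplies a transition $s_u \xrightarrow{x} s' \in \delta_\procA$, so $A_\procA$ does not get stuck and reaches $s_{ux} := s'$. By determinism these prefix runs cohere into a single run of $A_\procA$ on all of $w$.

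It remains to establish maximality. If $w$ is infinite, the run of $A_\procA$ is infinite and hence maximal, so $w \in \lang(A_\procA)$. If $w$ is finite, let $q \in F_\GG$ be the endpoint of the $\projerasure{\GG}{\procA}$-run on $w$; then $q \in d(s_w)$, so $d(s_w)$ is non-empty and meets $F_\GG$. As $s_w$ is reachable, \finalVal forces $s_w \in F_\procA$, so the finite run is maximal and $w \in \lang(A_\procA)$.

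I expect the main obstacle to lie in the first step: making precise that a word in $\lang(\GG)\wproj_{\Alphabet_\procA}$ induces a run of $\projerasure{\GG}{\procA}$ with the right finite/infinite and final-state behavior. This rests on the invariance of single-role projection under the indistinguishability closure and on matching acceptance between $\semglobalsync(\GG)$ and $\projerasure{\GG}{\procA}$. After that, the two hypotheses map cleanly onto the two proof obligations: \transitionExhaustive is exactly what guarantees that $A_\procA$ never blocks while reading $w$, and \finalVal is exactly what certifies acceptance of finite words, together yielding the desired inclusion $\lang(\GG) \wproj_{\Alphabet_{\procA}} \subseteq \lang(A_\procA)$.
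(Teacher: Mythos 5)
Your proof takes essentially the same route as the paper's: an induction on prefixes in which the state-decoration definition places the relevant global state $G$ in $d(s_u)$ so that \transitionExhaustive supplies the next transition of $A_\procA$, followed by \finalVal to accept finite words and determinism to handle infinite ones. The only point where you go beyond the paper is your explicit justification of the passage from $\lang(\GG)\wproj_{\Alphabet_\procA}$ to runs of $\projerasure{\GG}{\procA}$ via $\interswap$-invariance of single-role projections (the paper simply asserts this step); that argument is sound for finite words, though for infinite words the asymmetric closure $\preceq_\interswap^\omega$ can make a role's projection a strict prefix of the projection of the underlying run, a subtlety that your write-up and the paper's own proof gloss over in exactly the same way.
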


The fact that $\mathcal{A}$ preserves behaviors follows immediately from \localLangIncl. 
The fact that $\mathcal{A}$ does not introduce new behaviors, on the other hand, is enforced by \sendDecVal and \receiveDecVal.

In the soundness proof for each of our conditions, we prove refinement via structural induction on traces. 
We show refinement in two steps, first showing that any trace in one CSM is a trace in the other, and then showing that any terminated trace in one CSM is terminated in the other and maximal. 

We recall two definitions from~\cite{DBLP:conf/cav/LiSWZ23} used in the soundness proof. 
\begin{definition}[Intersection sets]
	\label{def:intersection-sets}
	Let $\GG$ be a global type and $\semglobalsync(\GG)$ be the corresponding state machine.
	Let $\procA$ be a role and
	$w \in \AlphAsync^*$ be a word.
	We define the set of possible runs $\globcomplocal{\GG}{\procA}{w}$
	as all maximal runs of $\semglobalsync(\GG)$ that are consistent with $\procA$'s local view of $w$:
	\[
	\globcomplocal{\GG}{\procA}{w}
	\is
	\set{
		\run
		\text{ is a maximal run of }
		\semglobalsync(\GG)
		\mid
		w \wproj_{\Alphabet_\procA} \preforder \SyncToAsync(\trace(\run)) \wproj _{\Alphabet_\procA}
	}
	\enspace .
	\]
	We denote the intersection of the possible run sets for all roles as
	\[
	I(w) 
	\is 
	\Inters_{\procA \in \Procs} \globcomplocal{\GG}{\procA}{w}
	\enspace.
	\]
\end{definition}

\begin{definition}[Unique splitting of a possible run]
	Let $\GG$ be a global type, $\procA$ a role, and $w \in \AlphAsync^*$ a word. Let $\run$ be a possible run in $\globcomplocal{\GG}{\procA}{w}$. 
	We define the longest prefix of $\run$ matching $w$:
	\[
	\alpha'
	\is
	\max
	\set{
		\run'
		\mid
		\run' \leq \run ~\wedge~
		\SyncToAsync(\trace(\run')) \wproj _{\Alphabet_\procA} \preforder w \wproj_{\Alphabet_\procA}
	}
	\enspace .
	\]
	If $\alpha' \neq \run$, we can split $\run$ into
$
	\run = \alpha \cdot G \xrightarrow{l} G' \cdot \beta
	$
where $\alpha' = \alpha \cdot G$, $G'$ denotes the state following $G$, and $\beta$ denotes the suffix of $\run$ following $\alpha \cdot G \cdot G'$.
We call $\alpha \cdot G \xrightarrow{l} G' \cdot \beta$ the unique splitting of $\run$ for $\procA$ matching $w$. 
	We omit the role $\procA$ when obvious from context.
	This splitting is always unique because the maximal prefix of any $\run \in \globcomplocal{\GG}{\procA}{w}$ matching $w$ is unique.
\end{definition}

\begin{lemma}[Soundness of \CharacterizationOne]
	\CharacterizationOne implies that $\mathcal{A}$ and $\subsetprojCSM$ are equivalent. 
\end{lemma}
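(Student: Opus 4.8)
The plan is to prove the two directions of refinement separately, as the surrounding text indicates, and within each direction to split the argument into the ``trace inclusion'' part and the ``maximal/terminated'' part. Concretely, I would show that \CharacterizationOne implies (a) $\mathcal{A}$ refines $\subsetprojCSM$ and (b) $\subsetprojCSM$ refines $\mathcal{A}$. For direction (b), the key observation is that \localLangIncl already gives $\lang(\GG)\wproj_{\Alphabet_\procA} \subseteq \lang(A_\procA)$ for each role via \cref{lm:implies-local-lang-incl} (which follows from \transitionExhaustive and \finalVal). Since $\subsetprojCSM$ is local language preserving, its per-role language is exactly $\lang(\GG)\wproj_{\Alphabet_\procA}$, so every local trace of the subset construction is a local trace of $A_\procA$. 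I would lift this from per-role language inclusion to CSM trace inclusion by an induction on the length of a trace of $\subsetprojCSM$, checking that each step can be matched in $\mathcal{A}$: sends are always enabled (they only depend on the local state), and a receive is enabled in $\mathcal{A}$ because the corresponding message sits at the head of the channel in the shared configuration, and \transitionExhaustive guarantees the receiving machine $A_\procA$ has a matching outgoing transition from its current state.

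For direction (a), that $\mathcal{A}$ introduces no new behaviors, I would argue by structural induction on traces of $\mathcal{A}$, maintaining as an invariant a correspondence between the configuration reached in $\mathcal{A}$ and a configuration (or set of runs) in $\semglobalsync(\GG)$, tracked through the decoration sets $d(s)$ and the intersection sets $I(w)$ from \cref{def:intersection-sets}. The inductive step is where \sendDecVal and \receiveDecVal do the work: when $\mathcal{A}$ fires a send $\snd{\procA}{\procB}{\val}$, \sendDecVal ensures that this send is enabled in every global state decorating the source, so the move is consistent with all surviving possible runs and does not fabricate a send absent from the global type; when $\mathcal{A}$ fires a receive, \receiveDecVal, phrased in terms of available messages $\semavail^{\procA}_{(G'\ldots)}$, rules out receiving a message from the ``wrong'' sender along an alternative continuation, so the received message genuinely matches what the global protocol would have placed at the channel head. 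Using the unique-splitting lemma, each step of $\mathcal{A}$ is aligned with a transition of a consistent global run, yielding that the trace of $\mathcal{A}$ lies in $\lang(\GG)$.

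Finally, for the terminated/maximal part that upgrades trace inclusion to full refinement on both sides, I would use \finalVal together with the definition of a final CSM configuration (all local states final and all channels empty). In one direction, if a trace of $\mathcal{A}$ reaches a final configuration, then each role's decoration set contains a final global state, so by the correspondence the matching global run is complete and the trace is maximal in $\subsetprojCSM$; conversely, \finalVal forces a state of $A_\procA$ whose decoration contains a final global state to itself be final, so that terminated global-type traces terminate in $\mathcal{A}$ rather than deadlocking. Since $\subsetprojCSM$ is known to be deadlock-free (it implements $\GG$), I can invoke the simplified refinement criterion noted earlier in \cref{sec:first-problem}, which reduces the proof obligation to trace inclusion plus the terminated-implies-maximal condition, discharged by the above.

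I expect the main obstacle to be direction (a): making the invariant connecting $\mathcal{A}$'s reachable configurations to the decoration sets and intersection sets precise enough that \sendDecVal and \receiveDecVal can be applied at each step, while correctly handling asynchrony. The subtlety is that a message at a channel head in $\mathcal{A}$ must be shown to correspond, via the indistinguishability closure $\interswaplang$ and the happened-before reordering, to a genuinely reachable global state; the available-messages machinery is exactly what \receiveDecVal uses to certify this, but threading it through the induction — especially ensuring the set of consistent possible runs stays non-empty and that the unique splitting aligns the local receive with the right global send — will be the delicate bookkeeping.
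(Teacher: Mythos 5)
Your proposal is correct and follows essentially the same route as the paper: the same decomposition into the two refinement directions, with the claim that $\subsetprojCSM$ refines $\mathcal{A}$ discharged via \localLangIncl (the paper simply cites an adaptation of \cite[Lemma 4.4]{DBLP:conf/cav/LiSWZ23}), the converse direction proved by induction on traces using intersection sets, unique splittings, and \sendDecVal/\receiveDecVal through the available-messages machinery, and the terminated/maximal part handled by \finalVal together with deadlock freedom of the subset construction. The only detail worth flagging is one you leave implicit: in the receive case of the inductive step, \transitionExhaustive is also needed, since it supplies the second outgoing transition of the local machine against which \receiveDecVal is instantiated before the unique-splitting lemma can be applied.
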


\begin{proof}
	The proof that \CharacterizationOne implies $\subsetprojCSM$ refines $\mathcal{A}$ depends only on \localLangIncl and can be straightforwardly adapted from \cite[Lemma 4.4]{DBLP:conf/cav/LiSWZ23}.
	We instead focus on showing that \CharacterizationOne implies $\mathcal{A}$ refines $\subsetprojCSM$, which depends on the other two conditions in \CharacterizationOne.
First, we prove that any trace in $\mathcal{A}$ is a trace in $\subsetprojCSM$: 
	
	\noindent \textit{Claim 1: } $\forall~w \in \AlphAsync^\infty.~w$ is a trace in $\mathcal{A}$ implies $w$ is a trace in $\subsetprojCSM$.
	
	We prove the claim by induction for all finite $w$. 
	The infinite case follows from the finite case because $\subsetprojCSM$ is deterministic and all prefixes of $w$ are traces of~$\mathcal{A}$ and, hence, of $\subsetprojCSM$.
The base cases, where $w = \emptystring$, is trivially discharged by the fact that $\emptystring$ is a trace of all CSMs.
	In the inductive step, assume that $w$ is a trace of~$\mathcal{A}$.
	Let $x \in \AlphAsync$ such that $wx$ is a trace of $\mathcal{A}$. 
	We want to show that $wx$ is also a trace of $\subsetprojCSM$. 
	
	From the induction hypothesis, we know that $w$ is a trace of $\subsetprojCSM$. 
	Let~$\xi$ be the channel configuration uniquely determined by $w$.
	Let $(\vec{s},\xi)$ be the $\mathcal{A}$ configuration reached on $w$, and let $(\vec{t},\xi)$ be the $\subsetprojCSM$ configuration reached on~$w$.
	
	Let $\procB$ be the role such that $x \in \Alphabet_\procB$, and let $s$, $t$ denote $\vec{s}_\procB$, $\vec{t}_\procB$ from the respective CSM configurations reached on $w$ for $\mathcal{A}$ and $\subsetprojCSM$.  
	
	To show that $wx$ is a trace of $\subsetprojCSM$, it thus suffices to show that there exists a state $t'$ and a transition $t \xrightarrow{x} t'$ in $\subsetproj{\GG}{\procB}$.

	Since $\subsetprojCSM$ implements $\GG$, all finite traces of $\subsetprojCSM$ are prefixes of $\lang(\GG)$. 
	In other words, $w \in \text{pref}(\lang(\GG))$. 
	Let $\run$ be a run such that $\run \in I(w)$; such a run must exist from~\cite[Lemma 6.3]{DBLP:conf/cav/LiSWZ23}. 
	Let $\alpha \cdot G \xrightarrow{l} G' \cdot \beta$ be the unique splitting of $\run$ for $\procB$ matching $w$. 
From the definition of state decoration, it holds that $G \in d(s)$. 
	From the definition of the subset construction, it holds that $G \in t$.

	We proceed by case analysis on whether $x$ is a send or receive event. 
	\begin{itemize}
		\item Case $x \in \Alphabet_{\procB,!}$. 
		Let $x = \snd{\procB}{\procC}{\val}$. 
By assumption, there exists $s \xrightarrow{\snd{\procB}{\procC}{\val}} s'$ in $A_\procB$. 
		We instantiate \sendDecVal from \CharacterizationOne with $\procB$ and this transition to obtain: 
		\[
		\transAnnoFunc(d(s) \xrightarrow{\snd{\procB}{\procC}{\val}} d(s')) = d(s)
		\enspace .
		\]
		From $G \in d(s)$, it follows that there exists $G' \in Q_\GG$ such that $G \xrightarrow{x} \mathrel{\vphantom{\to}^*} G' \in \projerasuretrans$. 
		Because $G \in t$, the existence of $t'$ such that $t \xrightarrow{\snd{\procB}{\procC}{\val}} t'$ is a transition in $\subsetproj{\GG}{\procA}$ follows immediately from the definition of $\subsetproj{\GG}{\procB}$'s transition relation. 
\item Case $x \in \Alphabet_{\procB,?}$. 
		Let $x = \rcv{\procC}{\procB}{\val}$. 
		
		From the fact that $\run$ is a maximal run in $\GG$ with unique splitting $\alpha \cdot G \xrightarrow{l} G' \cdot \beta$ for $\procB$ matching w, it holds that $w \wproj_{\Alphabet_{\procB}} \cdot
		\, \SyncToAsync(l) \wproj_{\Alphabet_{\procB}} \in \text{pref}(\lang(\GG)) \wproj_{\Alphabet_{\procB}}$.
		From~\cite[Lemma 4.3]{DBLP:conf/cav/LiSWZ23}, $\lang(\GG) \wproj_{\Alphabet_\procB} = \lang(\subsetproj{\GG}{\procB})$. 
		Therefore, there exists a $t''$ such that $t \xrightarrow{\SyncToAsync(l) \wproj_{\Alphabet_\procB}} t''$ is a transition in $\subsetproj{\GG}{\procB}$. 
		From \transitionExhaustive, there likewise exists an $s''$ such that $s \xrightarrow{\SyncToAsync(l) \wproj_{\Alphabet_\procB}} s''$ is a transition in $A_\procB$. 
		
		We now proceed by showing that it must be the case that $\SyncToAsync(l) \wproj_{\Alphabet_{\procB}} = x$.
		The reasoning closely follows that in \cite[Lemma 6.4]{DBLP:conf/cav/LiSWZ23}, which showed that if Receive Validity holds for the subset construction, and some role's subset construction automaton can perform a receive action, then the trace extended with the receive action remains consistent with any global run it was consistent with before.
		We generalize this property in terms of available message sets in the following lemma, whose proof can be found in \appendixRef{app:proofs}.

		\begin{restatable}{lemma} {aboutReceiveDecorationValidity}
			\label{lm:about-receive-decoration-validity}
			Let $\mathcal{A}$ be a CSM, $\procB$ be a role, and $w$, $wx$ be traces of $\mathcal{A}$ such that $x = \rcv{\procC}{\procB}{\val}$. 
			Let $s$ be the state of $\procB$'s state machine in the $\mathcal{A}$ configuration reached on $w$. 
			Let $\run$ be a run that is consistent with $w$, i.e. for all $\procA \in \Procs.~w \wproj_{\Alphabet_{\procA}} \leq \SyncToAsync(\trace(\run)) \wproj_{\Alphabet_{\procA}}$. 
			Let $\alpha \cdot G \xrightarrow{l} G' \cdot \beta$ be the unique splitting of $\run$ for $\procB$ matching~$w$.
			If $\snd{\procC}{\procB}{\val} \notin \semavail^{\procB}_{(G' \ldots)}$, then 
$x = \SyncToAsync(l) \wproj_{\Alphabet_{\procB}}$. 
		\end{restatable}
		
	\noindent 
	We wish to apply \cref{lm:about-receive-decoration-validity} with $\run$ to conclude that $\SyncToAsync(l) \wproj_{\Alphabet_{\procB}} = x$. 
	We satisfy the assumption that $\snd{\procC}{\procB}{\val} \notin \semavail^{\procB}_{(G' \ldots)}$ by instantiating \receiveDecVal with $s \xrightarrow{\rcv{\procC}{\procB}{\val}} s'$, $s \xrightarrow{\SyncToAsync(l) \wproj_{\Alphabet_\procB}} s''$ and $G'$.
The fact that $G' \in \transAnnoFunDest(d(s) \xrightarrow{\SyncToAsync(l) \wproj_{\Alphabet_\procB}} d(s''))$ follows from the fact that $\alpha \cdot G \xrightarrow{l} G' \cdot \beta$ is a run in $\GG$ and the definition of state decoration  (\cref{def:state-decoration}).
	Thus, we conclude from $\SyncToAsync(l) \wproj_{\Alphabet_{\procB}} = x$ that there exists a transition $t \xrightarrow{x} t''$ in $\subsetproj{\GG}{\procB}$. 
\end{itemize}

This concludes our proof that any trace in $\mathcal{A}$ is also a trace of $\subsetprojCSM$.

\noindent \textit{Claim 2: } $\forall~w \in \AlphAsync^*$.~$w$ is terminated in $\mathcal{A} \implies w$ is terminated in $\subsetprojCSM$ and $w$ is maximal in $\mathcal{A}$.

Let $w$ be a terminated trace in $\mathcal{A}$. 
By Claim 1, $w$ is also a trace in $\subsetprojCSM$. 
Let $\xi$ be the channel configuration uniquely determined by $w$.
Let the $\subsetprojCSM$ configuration reached on $w$ be $(\vec{t},\xi)$, and let $(\vec{s},\xi)$ be the $\mathcal{A}$ configuration reached on $w$. 
To see that every terminated trace in $\mathcal{A}$ is also terminated in $\subsetprojCSM$, assume by contradiction that $w$ is not terminated in $\subsetprojCSM$.
Because $\subsetprojCSM$ is deadlock-free, there must exist a role that can take a step in $\subsetprojCSM$. 
Let $\procB$ be this role, and let $x$ be the transition that is enabled from~$\vec{t}_\procB$.
From \localLangIncl and the fact that $\subsetprojCSM$ is deadlock-free, it holds that $x$ is also enabled from $\vec{s}_\procB$. 
We arrive at a contradiction. 
To see that every terminated trace in $\mathcal{A}$ in maximal, from the above we know that $w$ is terminated in $\subsetprojCSM$. From the fact that $\subsetprojCSM$ is deadlock-free, $w$ is maximal in $\subsetprojCSM$: all states in $\vec{t}$ are final and all channels in $\xi$ are empty. 
From \localLangIncl, it follows that all states in $\vec{s}$ are also final, and thus $w$ is maximal in $\mathcal{A}$.
\proofend
\end{proof}

\begin{lemma}[Completeness of \CharacterizationOne]
	\label{lm:completeness-characterization-one}
If $\mathcal{A}$ and $\subsetprojCSM$ are equivalent, then \CharacterizationOne holds. 
\end{lemma}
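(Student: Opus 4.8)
The plan is to prove the contrapositive in a per-condition fashion: assuming $\mathcal{A}$ and $\subsetprojCSM$ are equivalent (so each refines the other and a trace terminates maximally in one iff it does in the other), I establish each of the four conditions of \CharacterizationOne for an arbitrary role $\procA$ with machine $A_\procA$. The common engine is the \emph{decoration witness} property: whenever $G \in d(s)$ there is a word $u \in \Alphabet_\procA^*$ with $s_{0,\procA} \xrightarrow{u}\mathrel{\vphantom{\to}^*} s$ in $A_\procA$ and $q_{0,\GG} \xrightarrow{u}\mathrel{\vphantom{\to}^*} G \in \projerasuretrans$, and any such global run through $G$ can be realized as a CSM trace $w$ of $\subsetprojCSM$ with $w\wproj_{\Alphabet_\procA} = u$, since the finite traces of $\subsetprojCSM$ are exactly the prefixes of $\lang(\GG)$. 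I also use the Remark that for $\subsetprojCSM$ decoration is the identity, so the subset-construction state reached on $u$ is $\hat u = \set{q \mid q_{0,\GG}\xrightarrow{u}\mathrel{\vphantom{\to}^*}q} \subseteq d(s)$ with $G \in \hat u$, together with the fact that each $\subsetcons{\GG}{\procA}$ satisfies Send Validity and Receive Validity because $\GG$ is implementable.

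\transitionExhaustive and \finalVal both follow by contradiction against the direction ``$\subsetprojCSM$ refines $\mathcal{A}$''. If \transitionExhaustive fails there is a reachable $s$, a state $G \in d(s)$ with witness $u$, and a transition $G \xrightarrow{x}\mathrel{\vphantom{\to}^*} G' \in \projerasuretrans$ while $A_\procA$ has no outgoing $x$ from $s$. I realize the global run through $G$ as a trace $w$ of $\subsetprojCSM$ with $w\wproj_{\Alphabet_\procA}=u$ (when $x$ is a reception I first fire its matching send so that $x$ is enabled in the channel); then $w$ is a trace of $\mathcal{A}$ by refinement, $\procA$ sits in $s$, and $w\cdot x$ is a trace of $\subsetprojCSM$ but not of $\mathcal{A}$, a contradiction. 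If \finalVal fails there is a reachable $s \notin F_\procA$ with $0 \in d(s)$ via some witness $u$; realizing the terminating global run whose $\procA$-view is $u$ gives a maximal terminated trace $w$ of $\subsetprojCSM$, which by equivalence must terminate maximally in $\mathcal{A}$ as well, yet $\procA$ ends in the non-final state $s$, so the reached configuration is not final -- contradiction.

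\sendDecVal is argued directly. Fix a send transition $s \xrightarrow{\snd{\procA}{\procB}{\val}} s'$ and any $G \in d(s)$ with witness $u$. Realizing $u$ as a trace $w$ of $\subsetprojCSM$ (hence of $\mathcal{A}$), role $\procA$ reaches $s$ in $\mathcal{A}$; since send actions are always enabled in a CSM, $w \cdot \snd{\procA}{\procB}{\val}$ is a trace of $\mathcal{A}$, and by the reverse refinement also a trace of $\subsetprojCSM$. Hence $\subsetcons{\GG}{\procA}$ has a $\snd{\procA}{\procB}{\val}$-transition out of $\hat u \ni G$, so by Send Validity the send is enabled in every state of $\hat u$, in particular $G \xrightarrow{\snd{\procA}{\procB}{\val}}\mathrel{\vphantom{\to}^*} G^\ast$. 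The witness $u \cdot \snd{\procA}{\procB}{\val}$ then places $G^\ast \in d(s')$, so $G \in \transAnnoFunc(d(s) \xrightarrow{\snd{\procA}{\procB}{\val}} d(s'))$; as $G$ was arbitrary and the reverse inclusion is immediate, the two sets coincide.

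\receiveDecVal is the main obstacle, and I again argue by contradiction against ``$\mathcal{A}$ refines $\subsetprojCSM$'', generalizing Li et al.'s necessity argument for Receive Validity. Suppose there are transitions $s \xrightarrow{\rcv{\procB}{\procA}{\val}} s_1$ and $s \xrightarrow{x} s_2$ with $x$ not a reception from $\procB$, and a state $G' \in \transAnnoFunDest(d(s)\xrightarrow{x}d(s_2))$ with $\snd{\procB}{\procA}{\val} \in \semavail^{\procA}_{(G' \ldots)}$. Availability yields a reachable configuration in which $\procB$ has placed $\val$ at the head of channel $\channel{\procB}{\procA}$ while $\procA$ is blocked at $G'$; because $\procB \neq \procA$, this send is independent of $\procA$'s action $x$ and can be commuted before it, so $\val$ can already sit in $\channel{\procB}{\procA}$ while $\procA$ is still at $s$ in the run through the witness $G \in d(s)$ for $G \xrightarrow{x}\mathrel{\vphantom{\to}^*} G'$. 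In this $\mathcal{A}$-execution $A_\procA$ may instead fire $s \xrightarrow{\rcv{\procB}{\procA}{\val}} s_1$, yielding a trace $w$ of $\mathcal{A}$ in which $\procA$ consumes $\val$ from $\procB$ at a position where the global run through $G$ mandates $x$. The delicate step -- which I expect to dominate the proof -- is to show that $w$ is consistent with no run of $\semglobal(\GG)$, hence is not even a prefix of any word in $\lang(\GG) = \lang(\subsetprojCSM)$; this needs the available-message bookkeeping and the happened-before reasoning behind the intersection sets and unique splitting of \cref{def:intersection-sets} to rule out that the out-of-order reception is absorbed by some alternative global run. Granting this, $w$ is a trace of $\mathcal{A}$ but not of $\subsetprojCSM$, contradicting equivalence and completing the argument.
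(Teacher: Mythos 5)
Your treatment of \transitionExhaustive, \finalVal, and \sendDecVal is correct and essentially identical to the paper's: the same decoration-witness replay of a global run as a trace of $\subsetprojCSM$, the same shuttling of traces across the two refinement directions, and (for sends) the same appeal to Send Validity of the subset construction, which holds because $\GG$ is implementable~\cite{DBLP:conf/cav/LiSWZ23}.

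The genuine gap is in the \receiveDecVal case, and it is exactly the step you flag as ``delicate'' and do not carry out: showing that the trace $w$ in which $\procA$ consumes $\val_1$ from $\procB_1$ out of order is consistent with no run of $\semglobal(\GG)$, i.e., $w \notin \pref(\lang(\GG))$. This is not deferrable bookkeeping; it is the entire content of the case, and attacking it head-on with happened-before/intersection-set reasoning amounts to re-proving the soundness half of Receive Validity from~\cite{DBLP:conf/cav/LiSWZ23}. Your sketch also never distinguishes whether the competing transition $x$ is a send or a receive, and the two subcases require different arguments. The paper closes this hole with the same move you already made for sends but did not make for receives: rather than proving the offending trace lies outside $\pref(\lang(\GG))$, it constructs $w$ (reusing the construction behind~\cite[Thm.~7.1]{DBLP:conf/cav/LiSWZ23}) so that \emph{both} $w \cdot \rcv{\procB_1}{\procA}{\val_1}$ and $w \cdot x$ are traces of $\mathcal{A}$, pushes both into $\subsetprojCSM$ via the direction ``$\mathcal{A}$ refines $\subsetprojCSM$'', and observes that the subset state $t$ reached on $w$ then has two outgoing transitions which, together with your availability witness $G_2 \in \transAnnoFunDest(t \xrightarrow{x} t'')$, violate Receive Validity when $x$ is a receive, or No Mixed Choice~\cite[Cor.~5.5]{DBLP:conf/cav/LiSWZ23} when $x$ is a send --- contradicting implementability of $\GG$. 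Note that this reduction also yields the fact you wanted (the bad trace indeed cannot be a trace of $\subsetprojCSM$), so mirroring your own Send Validity argument on the receive side eliminates the delicate step entirely; as written, your proposal is incomplete at precisely that point.
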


We show completeness via modus tollens: we assume a violation in \CharacterizationOne and the fact that $\mathcal{A}$ and $\subsetprojCSM$ are equivalent, and prove a contradiction. 
Since \CharacterizationOne is a conjunction of four conditions, we derive a contradiction from the violation of each condition in turn. 
In the interest of proof reuse, we specify which of the two refinement conjuncts we contradict for each condition, and refer the reader to \appendixRef{app:proofs} for the full proofs.

From the negation of \transitionExhaustive and \finalVal, we contradict the fact that $\subsetprojCSM$ refines $\mathcal{A}$. 
\begin{restatable}{lemma}{characterizationOneTransitionExhaustiveFinalValComplete}
	\label{lm:characterization-one-transition-exhaustive-final-val-complete}
If $\mathcal{A}$ violates \transitionExhaustive or \finalVal, then it does not hold that $\subsetprojCSM$ refines $\mathcal{A}$.
\end{restatable}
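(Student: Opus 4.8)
The plan is to prove the contrapositive directly for each disjunct: assuming $\mathcal{A}$ violates \transitionExhaustive (resp.\ \finalVal), I exhibit a concrete trace of $\subsetprojCSM$ that witnesses the failure of one of the two refinement obligations for ``$\subsetprojCSM$ refines $\mathcal{A}$''. The common technical device is that a decoration membership $G \in d(s)$ comes with a witness word $u \in \Alphabet_\procA^*$ satisfying $s_{0,\procA} \xrightarrow{u} s \in \delta_\procA$ and $q_{0,\GG} \xrightarrow{u} G \in \projerasuretrans$; since $\projerasure{\GG}{\procA}$ is just $\semglobal(\GG)$ with each transition relabeled by $\SyncToAsync(\cdot)\wproj_{\Alphabet_\procA}$, this lifts to a run of $\semglobal(\GG)$ from $q_{0,\GG}$ to $G$ whose synchronous trace $v_0$ satisfies $\SyncToAsync(v_0)\wproj_{\Alphabet_\procA} = u$. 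Throughout I use that the local machines in $\mathcal{A}$ are deterministic, so the $\procA$-projection of any trace of $\mathcal{A}$ labels a path from $s_{0,\procA}$ in $A_\procA$; hence if $A_\procA$ cannot read a word $y$, then no trace of $\mathcal{A}$ projects to $y$ on $\procA$.

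Case $\mathcal{A}$ violates \transitionExhaustive: I obtain a role $\procA$, a state $s \in Q_\procA$, and a transition $G \xrightarrow{x}\mathrel{\vphantom{\to}^*} G' \in \projerasuretrans$ with $G \in d(s)$ but no outgoing $x$-transition from $s$ in $\delta_\procA$. Taking the witness run to $G$ and appending the $\semglobal(\GG)$-edge underlying $G \xrightarrow{x}$, I extend to a maximal run $\run$ of $\semglobal(\GG)$; then $\SyncToAsync(\trace(\run))$ is a maximal trace of $\lang(\GG) = \lang(\subsetprojCSM)$ whose $\procA$-projection has $u\cdot x$ as a prefix. Cutting this trace right after the $\procA$-event corresponding to $x$ yields a finite prefix $w \in \traces(\subsetprojCSM)$ with $w \wproj_{\Alphabet_\procA} = u\cdot x$. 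Since $A_\procA$ is deterministic, reading $u$ leads to $s$, which has no outgoing $x$-transition, so $u\cdot x$ labels no path in $A_\procA$ and $w \notin \traces(\mathcal{A})$. Thus $\subsetprojCSM$ has a trace not in $\mathcal{A}$, and $\subsetprojCSM$ does not refine $\mathcal{A}$.

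Case $\mathcal{A}$ violates \finalVal: I obtain a reachable $s \in Q_\procA$ with $d(s) \cap F_\GG \neq \emptyset$ (i.e.\ $0 \in d(s)$, as $F_\GG = \set{0}$) yet $s \notin F_\procA$. The witness word $u$ for $0 \in d(s)$ satisfies $s_{0,\procA}\xrightarrow{u} s$ and $q_{0,\GG}\xrightarrow{u} 0 \in \projerasuretrans$; lifting to $\semglobal(\GG)$ gives an \emph{accepting} run (it ends in $0 \in F_\GG$) with synchronous trace $v$, so $w \coloneq \SyncToAsync(v)$ is a finite trace of $\subsetprojCSM$ that terminates maximally, with $w\wproj_{\Alphabet_\procA} = u$. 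In $\mathcal{A}$, if $w$ is a trace at all, determinism forces $\procA$ into state $s$ after reading $u$; as $s \notin F_\procA$, the reached configuration is not final, so $w$ does not terminate maximally in $\mathcal{A}$. Hence either $w \notin \traces(\mathcal{A})$, or $w$ terminates maximally in $\subsetprojCSM$ but not in $\mathcal{A}$; in both cases ``$\subsetprojCSM$ refines $\mathcal{A}$'' fails.

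The main obstacle, and the step requiring the most care, is the faithful lifting from the local decoration world back to genuine global CSM traces: I must (i) justify that a path of $\projerasure{\GG}{\procA}$, equivalently of $\semglobal(\GG)$, extends to a maximal run and that its $\SyncToAsync$-image is a trace of $\subsetprojCSM$ with the intended $\procA$-projection, and (ii) cut the resulting, possibly infinite, trace at exactly the right position so that the $\procA$-projection equals $u\cdot x$ (\transitionExhaustive case) or so that a maximally terminating finite trace with $\procA$-projection $u$ is obtained (\finalVal case). The subtlety in (ii) is that $\SyncToAsync$ splits one synchronous event into a send and a receive, so the cut point depends on whether $\procA$ is the sender or the receiver of the critical event; in either case the prefix remains a valid trace of $\subsetprojCSM$ and projects to the desired word on $\procA$. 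Once these bridging facts are in place, determinism of $A_\procA$ makes the contradiction with refinement immediate.
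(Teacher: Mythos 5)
Your proposal is correct and takes essentially the same route as the paper: in both cases one extracts the decoration witness word $u$, lifts the projection-by-erasure run to a (maximal or accepting) run of $\semglobal(\GG)$ to obtain a trace of $\subsetprojCSM$ whose $\procA$-projection extends $u$ appropriately, and then uses determinism of $A_\procA$ to refute trace inclusion (for \transitionExhaustive) or maximal termination (for \finalVal). The only differences are cosmetic: you exhibit the witness trace directly (and cut it so its projection is exactly $u\cdot x$) where the paper argues by contradiction from the refinement assumption, and you spell out the lifting/extension steps that the paper leaves implicit.
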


Unlike the proofs for \transitionExhaustive and \finalVal, the proofs for the remaining two conditions require both refinement conjuncts to prove a contradiction.
Both proofs find a contradiction by obtaining a witness from the violation of \sendDecVal and \receiveDecVal respectively, and showing that the same witness can be used to refute Send and Receive Validity for the subset construction. 

\begin{restatable}{lemma}{characterizationOneSendReceiveDecValComplete}
		\label{lm:characterization-one-send-receive-dec-val-complete}
If $\mathcal{A}$ violates \sendDecVal or \receiveDecVal, then it does not hold that $\mathcal{A}$ and $\subsetprojCSM$ are equivalent.
\end{restatable}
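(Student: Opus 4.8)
The plan is to argue by contradiction: assume $\mathcal{A}$ and $\subsetprojCSM$ are equivalent yet $\mathcal{A}$ violates \sendDecVal or \receiveDecVal. Since $\GG$ is implementable, every $\subsetproj{\GG}{\procA}$ satisfies Send and Receive Validity~\cite{DBLP:conf/cav/LiSWZ23}; the goal is to convert the local witness of the $C_1$-violation into a witness that refutes one of these two conditions for the subset construction, giving the contradiction. Both cases rest on a common \emph{realization step}. Given an $\mathcal{A}$-state $s$ of $\procA$ and a decorating global state $G \in d(s)$, there is a word $u \in \Alphabet_\procA^*$ with $s_{0,\procA} \xrightarrow{u}\mathrel{\vphantom{\to}^*} s$ and $q_{0,\GG} \xrightarrow{u}\mathrel{\vphantom{\to}^*} G \in \projerasuretrans$. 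The $\semglobal(\GG)$-run underlying the latter path has a synchronous trace $v$ with $\SyncToAsync(v)\wproj_{\Alphabet_\procA} = u$; as $\SyncToAsync(v)$ is a prefix of a word in $\lang(\GG)$ and $\subsetprojCSM$ implements $\GG$, the word $w \is \SyncToAsync(v)$ is a trace of $\subsetprojCSM$, hence by equivalence also of $\mathcal{A}$. Determinism then forces $w$ to drive $\procA$ to $s$ in $\mathcal{A}$ and to a subset-construction state $\hat s \ni G$.

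For \sendDecVal, I first observe that the only possible violation is that some $G \in d(s)$ has \emph{no} outgoing $\snd{\procA}{\procB}{\val}$-transition at all: were there one, $G \xrightarrow{\snd{\procA}{\procB}{\val}}\mathrel{\vphantom{\to}^*} G'' \in \projerasuretrans$, then $u\cdot\snd{\procA}{\procB}{\val}$ would reach $s'$ in $A_\procA$ and $G''$ in $\projerasuretrans$, placing $G''$ in $d(s')$ and thus $G$ in $\transAnnoFunc(d(s)\xrightarrow{\snd{\procA}{\procB}{\val}} d(s'))$. Applying the realization step to this $G$ yields $w$ reaching $\procA$ at $s$; since a send carries no channel precondition, $w\cdot\snd{\procA}{\procB}{\val}$ is a trace of $\mathcal{A}$, hence of $\subsetprojCSM$. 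Therefore $\hat s$ has a $\snd{\procA}{\procB}{\val}$-transition although its member $G$ does not enable it, which is a violation of Send Validity for $\subsetproj{\GG}{\procA}$ — contradicting~\cite{DBLP:conf/cav/LiSWZ23}.

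For \receiveDecVal, the violation supplies a state $s$ with transitions $s \xrightarrow{\rcv{\procB_1}{\procA}{\val_1}} s_1$ and $s \xrightarrow{x} s_2$ where $x \neq \rcv{\procB_1}{\procA}{\_}$, together with $G' \in \transAnnoFunDest(d(s) \xrightarrow{x} d(s_2))$ such that $\snd{\procB_1}{\procA}{\val_1} \in \semavail^{\procA}_{(G' \ldots)}$. Choosing $G \in d(s)$ with $G \xrightarrow{x}\mathrel{\vphantom{\to}^*} G'$ and realizing $w$ as above, the subset construction inherits $\hat s \xrightarrow{x} \hat s_2$ with $G' \in \transAnnoFunDest(\hat s \xrightarrow{x} \hat s_2)$. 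The availability of $\snd{\procB_1}{\procA}{\val_1}$ at $G'$ means the protocol can, from $G'$ and without $\procA$ moving, bring $\val_1$ from $\procB_1$ to the head of channel $\channel{\procB_1}{\procA}$; since these events are independent of $\procA$, I would reorder them before $\procA$'s $x$-step under $\interswap$ to expose a configuration reachable by $\subsetprojCSM$ (hence by $\mathcal{A}$) in which $\procA$ is still at $s$ with $\val_1$ pending. There $\procA$ may fire $\rcv{\procB_1}{\procA}{\val_1}$, and because this commutes past $\procA$'s own, causally ordered $x$-action, the resulting trace is matchable by $\subsetprojCSM$ only if $\hat s$ itself offers $\rcv{\procB_1}{\procA}{\val_1}$ as an alternative to the $x$-branch whose continuation $G'$ still makes $\val_1$ available — exactly a Receive Validity witness for $\subsetproj{\GG}{\procA}$, again contradicting~\cite{DBLP:conf/cav/LiSWZ23}. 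The precise link between the static set $\semavail^{\procA}_{(G' \ldots)}$ and the reachable pending message is provided by \cref{lm:about-receive-decoration-validity}.

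I expect the receive case to be the main obstacle. The delicate point is materializing the static available-message set $\semavail^{\procA}_{(G' \ldots)}$ as a concrete reachable CSM configuration that keeps $\procA$ parked at $s$ while $\val_1$ becomes pending, and doing this uniformly whether $x$ is a send or a receive from a sender other than $\procB_1$; the indistinguishability reordering together with the bookkeeping of \cref{lm:about-receive-decoration-validity} (and the intersection-set and unique-splitting constructions) carry this weight. The send case, by contrast, reduces to a short extension argument once the realization step is in place.
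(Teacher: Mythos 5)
Your overall architecture matches the paper's: assume equivalence, use the fact that $\subsetproj{\GG}{\procA}$ satisfies Send and Receive Validity because $\GG$ is implementable~\cite{DBLP:conf/cav/LiSWZ23}, realize a decorating state $G \in d(s)$ as a concrete trace $w$ shared by both CSMs (using one refinement direction), extend $w$ in $\mathcal{A}$, push the extension back into $\subsetprojCSM$ (using the other direction), and refute the validity conditions there. Your send case is correct and essentially identical to the paper's proof, including the observation that $G \in d(s) \setminus \transAnnoFunc(d(s) \xrightarrow{\snd{\procA}{\procB}{\val}} d(s'))$ forces $G$ to have no outgoing $\snd{\procA}{\procB}{\val}$-path at all.

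The gap is in the receive case, in the subcase where the second transition $s \xrightarrow{x} s_2$ has $x \in \Alphabet_{\procA,!}$. \receiveDecVal quantifies over \emph{any} $x \neq \rcv{\procB_1}{\procA}{\_}$, so a violation may well involve a send event $x$. Your construction then exhibits a subset-construction state $\hat{s}$ with an outgoing receive $\rcv{\procB_1}{\procA}{\val_1}$ and an outgoing \emph{send} $x$ --- but this is not a Receive Validity witness: Receive Validity constrains only pairs of receive transitions from distinct senders, so no contradiction follows from it in this subcase. The paper closes exactly this case by invoking No Mixed Choice \cite[Corollary 5.5]{DBLP:conf/cav/LiSWZ23}, which the subset construction of an implementable type satisfies; your proof never appeals to it, so the case $x \in \Alphabet_{\procA,!}$ remains open (your remark about handling sends and receives ``uniformly'' does not repair this, since the target condition itself simply does not apply). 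A secondary issue: you credit \cref{lm:about-receive-decoration-validity} with materializing the available message $\snd{\procB_1}{\procA}{\val_1}$ as a concretely pending one, but that lemma goes in the opposite direction (non-availability forces the next receive to match the run); the materialization you need --- a trace $w$ such that both $w \cdot \rcv{\procB_1}{\procA}{\val_1}$ and $w \cdot x$ are traces of $\mathcal{A}$ --- is the witness-trace construction of \cite[Theorem 7.1]{DBLP:conf/cav/LiSWZ23}, which is what the paper cites at precisely this point. Your $\interswap$-reordering sketch is the right intuition for that construction, but as written it rests on the wrong lemma.
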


 \section{Deciding \emph{\ProblemTwo}}
We now turn our attention to \ProblemTwo, which asks when an implementation can safely substitute another in all well-behaved contexts with respect to $\GG$. 
Here, we introduce a new notion of refinement with respect to a global type. 
\begin{definition}[Protocol refinement with respect to $\GG$]
	\label{def:protocol-refinement}
	We say that a CSM $\CSM{A}$ refines a CSM $\CSM{B}$ with respect to a global type $\GG$ if the following properties hold: 
	\begin{inparaenum}[(i)]
		\item \label{def:refinement-subprotocol-fidelity}
		\emph{subprotocol fidelity:} $\exists S \subseteq \lang(\semglobal(\GG)).~\lang(\CSM{A}) = \interswaplang(\SyncToAsync(S))$,
		\item \label{def:refinement-language-inclusion}
		\emph{language inclusion:} $\lang(\CSM{A}) \subseteq \lang(\CSM{B})$, and 
		\item \label{def:refinement-deadlock-freedom}
		\emph{deadlock freedom:} $\CSM{A}$ is deadlock-free.
	\end{inparaenum}
\end{definition}
\cref{def:refinement-subprotocol-fidelity}, \emph{subprotocol fidelity}, sets our notion of refinement apart from standard refinement.
We motivate this difference briefly using an example.
Consider the CSM consisting of the subset construction for $\procA$ and $B'_\procB$, depicted in \cref{fig:refinement-def}. 
This CSM recognizes only words of the form 
$(\snd{\procA}{\procB}{\val})^ω$. 
It is nonetheless considered to refine the global type 
$\GG_{loop} \is \mu t.~\procA \xrightarrow{} \procB: \val.~t$ according to the standard notion of refinement, despite the fact that $\procA$'s messages are never received by $\procB$.  
This is because $\lang(\GG_{loop})$, containing only infinite words, is defined in terms of an asymmetric downward closure operator $\preceq_\interswap^\omega$, which allows receives to be infinitely postponed.
We desire a notion of refinement that allows roles to select which runs to follow in a global type, but disallows them from selecting which words to implement among ones that follow the same run. 
More formally, our notion of protocol refinement prohibits selectively implementing words that are equivalent under the indistinguishability relation $\interswap$: any CSM that refines another with respect to a global type has a language that is closed under $\interswap$. 

\begin{figure}[t]
\centering
	\begin{subfigure}[b]{.45\textwidth}
		\centering
			\begin{tikzpicture}[node distance=1cm and 2cm,>=stealth', line width=0.25mm]
				\node[draw, circle, minimum size=0.5cm, initial left=, initial text =](q0){};
				\path[->](q0) edge [loop right] node {$\snd{\procA}{\procB}{\val}$} (q0);
			\end{tikzpicture}
		\caption{State machine $\subsetcons{\GG}{\procA}$} \label{fig:refinement-def-fsm-p}
\end{subfigure}\begin{subfigure}[b]{.45\textwidth}
		\centering
			\begin{tikzpicture}[node distance=1cm and 2cm,>=stealth', line width=0.25mm]
				\node[draw, circle, minimum size=0.5cm, initial left=, initial text =](q0){};
			\end{tikzpicture}
		\caption{State machine $B'_\procB$} \label{fig:refinement-def-fsm-q}
\end{subfigure}\hfill
\caption{CSM violating subprotocol fidelity with respect to $\GG_{loop}$}
\label{fig:refinement-def}
\end{figure}
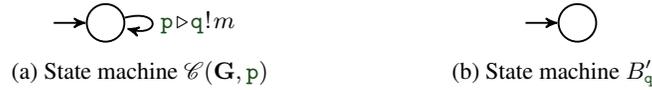

In the remainder of the paper, we refer to refinement with respect to $\GG$, and omit mention of $\GG$ when clear from context. 
Again using the fact that $\subsetprojCSM$ is an implementation for $\GG$, we say that a CSM $\CSM{A}$ refines $\GG$ if it refines $\subsetprojCSM$.

\newcommand{\MonolithicRefinement}{\emph{Monolithic Protocol Refinement}\xspace}
We motivate our formulation of the \ProblemTwo problem by posing the following variation of \ProblemOne, which we call \MonolithicRefinement: 
\begin{center}
	Given an implementable global type $\GG$ and a CSM $\mathcal{A}$, does $\mathcal{A}$ \emph{refine} $\subsetprojCSM$? 
\end{center}

\newcommand{\CharacterizationOnePrime}{$C_1'$\xspace}
This variation asks for a condition, \CharacterizationOnePrime, that satisfies the equivalence: 
\begin{center}
	\CharacterizationOnePrime $\Leftrightarrow$ $\mathcal{A}$ refines $\subsetprojCSM$.
\end{center}

Clearly, \CharacterizationOne is still a sound candidate as equivalence of two CSMs implies bi-directional protocol refinement. 
It is instructive to analyze why the completeness arguments for \CharacterizationOne fail. 
Recall that the completeness proofs for \sendDecVal and \receiveDecVal used the violation of each condition to obtain a local state with a non-empty decoration set, which in turn gives rise to a prefix in $\lang(\GG)$ that must be a trace in the subset construction. This trace is then replayed in the arbitrary CSM, extended in the arbitrary CSM, and then replayed again in the subset construction. 
This sequence of replaying arguments critically relied on both the assumption that $\mathcal{A}$ refines $\subsetprojCSM$, and the assumption that $\subsetprojCSM$ refines $\mathcal{A}$. 

If we cannot assume that $\mathcal{A}$ recognizes every behavior of $\subsetprojCSM$, then the reachable local states of $\mathcal{A}$ are no longer precisely characterized by having a non-empty decoration set.

\begin{figure}[t]
\centering
\begin{subfigure}[b]{.35\textwidth}
    \resizebox{\textwidth}{!}{
        \begin{tikzpicture}[node distance=1cm and 2cm,>=stealth', line width=0.25mm]
            \node[draw, circle, minimum size=0.5cm, initial left=, initial text =](q0){};
            \node[draw, circle, minimum size = 0.5cm, right=1.5cm of q0](q1){};
            \node[draw, circle, minimum size=0.5cm, accepting, right=1.5cm of q1](q2){};
            \path[->](q0) edge node[above] {$\snd{\procA}{\procB}{\msgM}$} (q1);
            \path[->](q1) edge node[above] {$\snd{\procA}{\procC}{\msgM}$} (q2);
        \end{tikzpicture}
    }
    \caption{State machine $\subsetcons{\GG}{\procA}$} \label{fig:refinement-incompleteness-fsm-p}
\end{subfigure}\\
\vspace{2ex}
\begin{subfigure}[b]{.48\textwidth}
	\centering
	\resizebox{\textwidth}{!}{
		\begin{tikzpicture}[node distance=1cm and 2cm,>=stealth', line width=0.25mm]
			\node[draw, circle, minimum size=0.5cm, initial left=, initial text =](q0){};
			\node[draw, circle, minimum size = 0.5cm, right=1.5cm of q0](q1){};
			\node[draw, circle, minimum size = 0.5cm, right=1.5cm of q1](q2){};
			\node[draw, circle, minimum size = 0.5cm, accepting, right=1.5cm of q2](q3){};
\path[->](q0) edge node[above] {$\rcv{\procA}{\procB}{\msgM}$} (q1);
			\path[->](q0) edge node[below] {\fbox{$\rcv{\procC}{\procB}{\msgB}$}} (q1);
			\path[->](q1) edge node[above] {$\rcv{\procC}{\procB}{\msgB}$} (q2);
			\path[->](q1) edge node[below] {$\rcv{\procC}{\procB}{\msgO}$} (q2);
			\path[->](q2) edge node[above] {$\snd{\procB}{\procC}{\msgB}$} (q3);
\end{tikzpicture}
	}
	\caption{State machine $A'_\procB$} \label{fig:refinement-incompleteness-fsm-q}
\end{subfigure}\hfill \begin{subfigure}[b]{.48\textwidth}
\centering
\resizebox{\textwidth}{!}{
	\begin{tikzpicture}[node distance=1cm and 2cm,>=stealth', line width=0.25mm]
		\node[draw, circle, minimum size=0.5cm, initial left=, initial text =](q0){};
		\node[draw, circle, minimum size = 0.5cm, right=1.5cm of q0](q1){};
		\node[draw, circle, minimum size = 0.5cm, right=1.5cm of q1](q2){};
		\node[draw, circle, minimum size = 0.5cm, accepting, right=1.5cm of q2](q3){};
\path[->](q0) edge node[below] {$\snd{\procC}{\procB}{\msgO}$} (q1);
		\path[->](q1) edge node[above] {$\rcv{\procA}{\procC}{\msgM}$} (q2);
		\path[->](q1) edge node[below] {\fbox{$\rcv{\procB}{\procC}{\msgO}$}} (q2);
		\path[->](q2) edge node[sloped, pos=0.5, above] {$\rcv{\procB}{\procC}{\msgB}$} (q3);
		\path[->](q2) edge node[sloped, pos=0.5, below] {$\rcv{\procB}{\procC}{\msgO}$} (q3);
	\end{tikzpicture}
}
\caption{State machine $A'_\procC$} \label{fig:refinement-incompleteness-fsm-r}
\end{subfigure}
\hfill
\caption{Subset construction for $\procA$ and two state machines for $\procB$ and $\procC$\label{fig:refinement-incompleteness} for $\GG'$}
\end{figure}
Consider the example global type $\GG'$:
\[\small
\GG' \is 
\msgFromTo{\procA}{\procB}{\msgM}. \,
+ \;
\begin{cases}
\msgFromTo{\procC}{\procB}{\msgB}. \,
\msgFromTo{\procA}{\procC}{\msgM}. \,
+ \;
\begin{cases}
\msgFromTo{\procB}{\procC}{\msgB}. \,
0 \,
\\
\msgFromTo{\procB}{\procC}{\msgO}. \,
0 \,
\end{cases}
\\
\msgFromTo{\procC}{\procB}{\msgO}. \,
\msgFromTo{\procA}{\procC}{\msgM}. \,
+ \;
\begin{cases}
\msgFromTo{\procB}{\procC}{\msgB}. \,
0 \,
\\
\msgFromTo{\procB}{\procC}{\msgO}. \,
0 \,
\end{cases}
\end{cases}
\]
Let the CSM $\mathcal{A}'$ consist of the subset construction automaton for $\procA$, and the state machines $A'_\procB$ and $A'_\procC$, given in
\cref{fig:refinement-incompleteness-fsm-q,fig:refinement-incompleteness-fsm-r}.
The receive transitions highlighted in red are safe despite violating \receiveDecVal, because $\procB$ and $\procC$ coordinate with each other on which runs of $\GG$ they eliminate: $\procC$ chooses to never send a $\msgB$ to $\procB$, thus $\procB$'s highlighted transition is safe, and conversely, $\procB$ never chooses to send $\msgO$ to $\procC$, thus $\procC$'s highlighted transition is safe.
Consequently, $\mathcal{A}'$ refines $\GG'$ despite violating \CharacterizationOne. 

This example shows that any condition \CharacterizationOnePrime that is compositional must sacrifice completeness. In fact, deciding whether an arbitrary CSM $\mathcal{A}$ refines the subset construction $\subsetprojCSM$ for some global type $\GG$ can be shown to be PSPACE-hard via a reduction from the deadlock-freedom problem for 1-safe Petri nets~\cite{DBLP:journals/eik/EsparzaN94}. 
We refer the reader to \appendixRef{app:proofs} for the full construction. 

\begin{restatable}{lemma}{monolithicRefinementHardness}
\label{lm:monolithic-refinement-hardness}
The \MonolithicRefinement problem is PSPACE-hard.
\end{restatable}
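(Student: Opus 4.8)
The plan is to reduce from the \emph{deadlock-freedom problem for 1-safe Petri nets}, which is PSPACE-complete~\cite{DBLP:journals/eik/EsparzaN94}. Given a 1-safe net $N$ with places $P$, transitions $\mathcal{T}$, and initial marking $M_0$, I will build in polynomial time an implementable global type $\GG_N$ and a CSM $\mathcal{A}_N$ over a role set $\Procs$ such that $\mathcal{A}_N$ refines \CSMl{\subsetcons{\GG_N}{\procA}} if and only if $N$ is deadlock-free. The decisive observation is that \emph{deadlock freedom} is itself one of the three conjuncts of protocol refinement (\cref{def:protocol-refinement}); the reduction is therefore engineered so that the deadlock behavior of $\mathcal{A}_N$ mirrors exactly the reachability of a dead marking in $N$, while the remaining two conjuncts---language inclusion and subprotocol fidelity---hold by construction whenever $N$ is deadlock-free. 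Since the subset construction is deadlock-free and satisfies $\lang(\CSMl{\subsetcons{\GG_N}{\procA}}) = \lang(\GG_N)$, this will localize all the difficulty into the deadlock-freedom conjunct.

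For the CSM I use one role per place $p_i$, whose two-state machine records whether $p_i$ currently carries a token (its initial state fixed by $M_0$), together with a single \emph{controller} role that serializes transition firings. To fire a transition $t$, the controller first runs a \emph{non-destructive check}, querying every place in ${}^\bullet t \cup t^\bullet$ and branching its deterministic machine on the replies. If every input place reports a token and, by 1-safeness, every output place reports none, the controller \emph{commits} by sending a $\mathsf{consume}$ message to each input place and a $\mathsf{produce}$ message to each output place---receptions guaranteed to be enabled precisely because the check just succeeded---and then returns to its initial state to select another transition. If a check fails, the controller moves on to the next transition; if every transition's check fails, it enters a distinguished non-final sink with no outgoing transitions. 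Because each round is confined to a single controller-driven exchange, this machinery has size polynomial in $N$ and leaves no orphaned messages in channels except in the sink case. The global type $\GG_N$ is the ``liberal'' protocol that loops over a sender-driven choice in which the controller picks an arbitrary transition and runs the corresponding query/reply/commit exchange \emph{regardless} of token availability; it is designed so that $\lang(\GG_N)$ contains every trace $\mathcal{A}_N$ can produce along a non-blocked run.

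Correctness then splits along the equivalence. A reachable dead marking of $N$ drives the controller, after exhausting all transitions, into its non-final sink with empty channels and idle place roles, i.e. a reachable deadlock of $\mathcal{A}_N$, so refinement fails on the deadlock-freedom conjunct. Conversely, if $N$ is deadlock-free the controller can always commit some enabled transition, $\mathcal{A}_N$ runs forever, and I must establish the other two conjuncts: language inclusion holds because every committed round is a legal round of $\GG_N$, and subprotocol fidelity holds because the trace set of any CSM is closed under the indistinguishability relation $\interswap$ (independent events of distinct roles may be reordered), so $\lang(\mathcal{A}_N) = \interswaplang(\SyncToAsync(S))$ for the subset $S \subseteq \lang(\semglobal(\GG_N))$ of runs whose synchronous skeletons $\mathcal{A}_N$ realizes. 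I expect the main obstacle to be making the asynchronous simulation \emph{faithful}: under FIFO message passing a controller could otherwise begin firing a transition whose inputs are absent and thereby introduce a \emph{spurious} deadlock unrelated to any dead marking of $N$. Serializing all firings through one controller and separating the non-destructive check phase from the commit phase is exactly what rules this out, and the delicate remaining work---verifying that $\GG_N$ is genuinely implementable (so that $\subsetcons{\GG_N}{\procA}$ is defined) and that its subprotocols coincide with the legal firing sequences of $N$---together with the detailed automata and invariants, is deferred to \appendixRef{app:proofs}.
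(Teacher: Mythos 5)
Your proposal is correct and takes essentially the same route as the paper: a reduction from deadlock freedom of 1-safe Petri nets, where a CSM faithfully simulates the net (place roles plus a serializing coordinator whose non-final sink is reached exactly when a dead marking is reached), while the global type $\GG_N$ liberally mimics the control flow with message contents decoupled from the token configuration, so that refinement fails precisely on the deadlock-freedom conjunct. The only cosmetic difference is that the paper additionally introduces one state machine per transition (the coordinator queries transition machines, which in turn query places), whereas your controller queries the place roles directly; this does not affect the argument.
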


Fortunately, we can recover completeness and tractability by only allowing changes to one state machine in $\mathcal{A}$ at a time. 
Next, we formalize the notions of \emph{CSM contexts} and \emph{well-behavedness} with respect to~$\GG$. 
We use $\CSMhole$ to denote a CSM context with a hole for role $\procA \in \Procs$, and $\CSMholew{A}$ to denote the CSM obtained by instantiating the context with state machine $A$ for $\procA$.
We define well-behaved contexts in terms of the canonical implementation $\subsetproj{\GG}{\procA}$. 
\begin{definition}[Well-behaved CSM contexts with respect to $\GG$]
	Let $\CSMhole$ be a CSM context. 
	We say that $\CSMhole$ is well-behaved with respect to $\GG$ if $\CSMholew{\subsetproj{\GG}{\procA}}$ refines $\GG$. 
	We omit $\GG$ when clear from context.
\end{definition}
\ProblemTwo asks to find a \CharacterizationTwo that satisfies the following:
\begin{theorem}
	\label{thm:equivalence-two}
	Let $\GG$ be an implementable global type, $\procA$ be a role, and $A$, $B$ be state machines for role $\procA$ such that for all well-behaved contexts $\CSMhole$, $\CSMholew{B}$ refines $\GG$.
	Then, for all well-behaved contexts $\CSMhole$, $\CSMholew{A}$ refines $\CSMholew{B}$ if and only if \CharacterizationTwo is satisfied.
\end{theorem}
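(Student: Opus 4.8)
The plan is to reduce the two-sided, context-quantified goal to two independently checkable properties of $A$, and to characterize each of them directly. Concretely, I would define \CharacterizationTwo as the conjunction of (i) a one-sided, single-role analogue of \CharacterizationOne for $A$ with respect to $\GG$ — namely \sendDecVal and \receiveDecVal, which forbid $A$ from producing any behaviour outside $\GG$ and thereby yield subprotocol fidelity, together with a receive-exhaustivity and final-state obligation restricted to the global states that $A$ actually commits to, which will yield deadlock freedom — and (ii) the language inclusion $\lang(A) \subseteq \lang(B)$. The heart of the argument is then the equivalence: for all well-behaved $\CSMhole$, $\CSMholew{A}$ refines $\CSMholew{B}$ iff (P1) $\CSMholew{A}$ refines $\GG$ for all well-behaved $\CSMhole$ and (P2) $\lang(A)\subseteq\lang(B)$. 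Part (i) of \CharacterizationTwo characterizes (P1), and part (ii) is exactly (P2); the restriction to a single role is precisely what makes (P1) characterizable, in contrast to the \MonolithicRefinement problem, which is PSPACE-hard.

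For the forward direction of this equivalence, (P1) is immediate from transitivity of refinement-with-respect-to-$\GG$: the relation is transitive because subprotocol fidelity and deadlock freedom are properties of the left CSM while language inclusion composes, so $\CSMholew{A}$ refining $\CSMholew{B}$ together with the hypothesis that $\CSMholew{B}$ refines $\GG$ gives $\CSMholew{A}$ refines $\GG$. For (P2), I would instantiate the goal at the canonical witness context that plugs in the subset constructions $\subsetproj{\GG}{\procB}$ for every role $\procB \neq \procA$; this context $\CSMhole$ is well-behaved since $\CSMholew{\subsetproj{\GG}{\procA}} = \subsetprojCSM$ implements and hence refines $\GG$. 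From $\lang(\CSMholew{A}) \subseteq \lang(\CSMholew{B})$ I would recover $\lang(A)\subseteq\lang(B)$ by a projection argument: using that $A$ is safe (so each of its maximal local traces is realizable as the $\procA$-projection of a maximal composed trace), every $u \in \lang(A)$ lifts to a maximal composed trace $w$ with $w\wproj_{\Alphabet_\procA}=u$, whence $w \in \lang(\CSMholew{B})$ and therefore $u \in \lang(B)$.

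For the backward direction, I fix a well-behaved $\CSMhole$ and establish the three refinement conditions for $\CSMholew{A}$ over $\CSMholew{B}$. Subprotocol fidelity and deadlock freedom of $\CSMholew{A}$ follow directly from (P1). The inclusion $\lang(\CSMholew{A})\subseteq\lang(\CSMholew{B})$ follows from (P2) by the reconstruction argument dual to the above: the context machines are identical in $\CSMholew{A}$ and $\CSMholew{B}$, so any maximal composed trace $w$ of $\CSMholew{A}$ has $w\wproj_{\Alphabet_\procA}\in\lang(A)\subseteq\lang(B)$, and by determinism of $B$ the same local string can be replayed against the same channel contents, making $w$ a maximal trace of $\CSMholew{B}$. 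To characterize (P1) itself I would mirror the soundness and completeness development of \CharacterizationOne but retain only the one-sided obligations: \sendDecVal and \receiveDecVal yield subprotocol fidelity exactly as in that soundness proof, reusing the state-decoration function, the intersection sets $I(w)$, unique splittings, and \cref{lm:about-receive-decoration-validity}, while the deadlock-freedom obligation is discharged by the restricted exhaustivity and final-state conditions against the well-behavedness of the context.

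The main obstacle is the deadlock-freedom half of (P1), quantified over all well-behaved contexts. Because $A$ may implement a strict subprotocol, naive behaviour reduction can abandon a branch that the context still pursues, creating a deadlock, and the crux is to show that the decoration-based conditions rule this out uniformly. Here sender-driven choice is essential: whenever $\procA$ is the sender it may freely prune branches, but whenever $\procA$ is a receiver it must remain ready for every message the context can still deliver under the runs $A$ has committed to — and "what the context can deliver" must be expressed purely in terms of $\GG$ (via the available messages $\semavail$ and the decoration set) rather than the specific context, so that a single argument covers all well-behaved $\CSMhole$ at once. Making this uniform reachability-and-progress argument precise, and establishing that the committed set of global runs is closed under $\interswap$ so that the existential $S$ in subprotocol fidelity is well-defined, is where the real work lies.
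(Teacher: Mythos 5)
There is a genuine gap: the equivalence your whole reduction rests on --- for all well-behaved contexts $\CSMhole$, $\CSMholew{A}$ refines $\CSMholew{B}$ iff (P1) holds and (P2) $\lang(A)\subseteq\lang(B)$ holds --- is false, because (P2) is not necessary. Concretely, take $\GG_1$ from \cref{sec:first-problem}, let $B=\subsetproj{\GG_1}{\procA}$, and let $A$ be the machine $A_2$ of \cref{fig:G1-alternative-2}, whose accepting state carries the universal receive self-loop $\rcv{\_}{\procA}{\_}$. In every well-behaved context, the only message that ever arrives in $\procA$'s channel is the echo of the message $\procA$ itself chose to send, so $\CSMholew{A_2}$ and $\CSMholew{B}$ generate exactly the same composed traces and $\CSMholew{A_2}$ refines $\CSMholew{B}$ in every well-behaved context; yet $\snd{\procA}{\procB}{\msgB}\cdot\rcv{\procB}{\procA}{\msgM}\in\lang(A_2)\setminus\lang(B)$, so (P2) fails. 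Your candidate \CharacterizationTwo would therefore wrongly reject this subtype, breaking the ``only if'' direction of \cref{thm:equivalence-two}. The flawed step is the lifting claim in your forward argument: ``every $u\in\lang(A)$ lifts to a maximal composed trace $w$ with $w\wproj_{\Alphabet_\procA}=u$.'' That property is local language preservation, which holds for the subset construction but precisely \emph{not} for arbitrary safe implementations --- the paper's motivating observation (\cref{sec:motivation}) is that a safe implementation may accept receive events that no well-behaved context ever emits, relying on the context to rule them out. Such transitions inflate $\lang(A)$ but are invisible in every composition, so no instantiation of the context quantifier can detect them, and no contradiction can be extracted from their presence.

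This is exactly the difficulty the paper's \CharacterizationTwo is engineered around: instead of comparing $\lang(A)$ with $\lang(B)$ outright, it compares them only up to contextually realizable behavior, via a second decoration function $d_B$ (\cref{def:state-decoration-supertype}) mapping each state of $A$ to the states of $B$ reached on the same local words \emph{and} having non-empty decoration with respect to $\GG$, and then composing $d_B$ with $d$ to reach global states. All conditions of the paper's \CharacterizationTwo thereby constrain only transitions out of states of $A$ that are reachable in some well-behaved context --- precisely the slack your flat inclusion removes --- and completeness is recovered by \cref{lm:hack-for-subtype-supertype}, which certifies that non-empty $d_B$-decoration implies reachability in the canonical witness context built from subset constructions for the other roles. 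Your (P1) half and its transitivity argument are correct, and your right-to-left (soundness) sketch would go through if (P2) were given; to repair the proposal you must replace (P2) by a $B$-relative condition of this decorated, reachability-restricted kind rather than a condition on $\lang(A)$ itself.
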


\subsection{\emph{\ProblemTwo} Relative to Subset Construction}

\newcommand{\CharacterizationTwoPrime}{\emph{$C_2'$}\xspace}

As a stepping stone, we first consider the special case of \ProblemTwo when $B$ is the subset construction automaton for role $\procA$.
That is, we present \CharacterizationTwoPrime that satisfies the following equivalence: 
\begin{center}
	\CharacterizationTwoPrime $\Leftrightarrow$ for all well-behaved contexts $\CSMhole$, $\CSMholew{A}$ refines
	$\CSMholew{\subsetproj{\GG}{\procA}}$. 
\end{center}
The relaxation on language equality from \ProblemOne means that state machine $A$ no longer needs to satisfy \localLangIncl, which grants us more flexibility: state machines are now permitted to remove send events. 
Let us revisit our example global type, $\GG_1$: 
\[\small
	\GG_1 \is 
	+ \;
	\begin{cases}
		\msgFromTo{\procA}{\procB}{\msgB}. \,
		\msgFromTo{\procB}{\procA}{\msgB}. \,
		0 \,
		\\
		\msgFromTo{\procA}{\procB}{\msgM}. \,
		\msgFromTo{\procB}{\procA}{\msgM}. \,
		0 \,
	\end{cases}
	\]

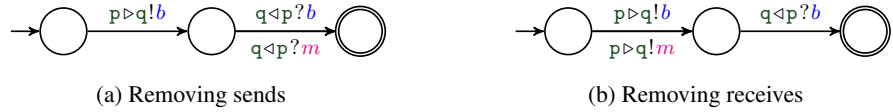
\begin{figure}[t]
\begin{subfigure}[b]{0.45\textwidth}
	\centering
\resizebox{.97\textwidth}{!}{
	\begin{tikzpicture}[node distance=1cm and 2cm,>=stealth', line width=0.25mm]
		\node[draw, circle, minimum size=0.7cm, initial left=, initial text =](q0){};
		\node[draw, circle, minimum size = 0.7cm, right=1.5cm of q0](q1){};
		\node[draw, circle, minimum size = 0.7cm, accepting, right=1.5cm of q1](qf){};
\path[->](q0) edge node[above] {$\snd{\procA}{\procB}{\msgB}$} (q1);
		\path[->](q1) edge node[below] {$\rcv{\procA}{\procB}{\msgM}$} (qf);
		\path[->](q1) edge node[above] {$\rcv{\procA}{\procB}{\msgB}$} (qf);
\end{tikzpicture}
}
    \caption{Removing sends \\\strut}
    \label{fig:prot-ref-candidate-implementation-p}
\end{subfigure}
\hfill
\begin{subfigure}[b]{0.45\textwidth}
	\centering
\resizebox{.97\textwidth}{!}{
	\begin{tikzpicture}[node distance=1cm and 2cm,>=stealth', line width=0.25mm]
		\node[draw, circle, minimum size=0.7cm, initial left=, initial text =](q0){};
		\node[draw, circle, minimum size = 0.7cm, right=1.5cm of q0](q1){};
		\node[draw, circle, minimum size = 0.7cm, accepting, right=1.5cm of q1](qf){};
		\path[->](q0) edge node[below] {$\snd{\procA}{\procB}{\msgM}$} (q1);
		\path[->](q0) edge node[above] {$\snd{\procA}{\procB}{\msgB}$} (q1);
\path[->](q1) edge node[above] {$\rcv{\procA}{\procB}{\msgB}$} (qf);
\end{tikzpicture}
}
    \caption{Removing receives\\\strut} \label{fig:prot-ref-candidate-implementation-p-adding-receives}
\end{subfigure}
\caption{Two candidate implementations for $\procA$} 
\end{figure}

Consider the candidate state machine
for role $\procA$ given in
\cref{fig:prot-ref-candidate-implementation-p}.
The CSM obtained from inserting this state machine into any well-behaved context refines~$\GG$, despite the fact that $\procA$ never sends $\msgM$.
In general, send events can safely be removed from reachable states in a local state machine without violating subprotocol fidelity or deadlock freedom, as long as \emph{not all} of them are~removed.

The same is not true of receive events, on the other hand. 
The state machine
in
\cref{fig:prot-ref-candidate-implementation-p-adding-receives}
is not a safe candidate for $\procA$, because it causes a deadlock in the well-behaved context that consists of the subset construction for every other role.

Our characterization intuitively follows the notion that input types (receive events) are covariant, and output types (send events) are contravariant. However, note that the state machine above cannot be represented in existing works~\cite{DBLP:journals/jlamp/GhilezanJPSY19,DBLP:journals/sefm/BravettiZ19,DBLP:conf/ppopp/CutnerYV22}: their local types support neither states with both outgoing send and receive events, nor states with outgoing send or receive events to/from different~roles.

\newcommand{\sendPreservation}{\emph{Send Preservation}\xspace} 
\newcommand{\receiveExhaustive}{\emph{Receive Exhaustivity}\xspace} 

Our characterization \CharacterizationTwoPrime reuses \sendDecVal, \receiveDecVal and \finalVal from \CharacterizationOne, but splits \transitionExhaustive into a separate condition for send and receive events, to reflect the aforementioned asymmetry between~them.
\begin{definition}[\CharacterizationTwoPrime]
	\label{def:characterization-two-prime}
	Let $\procA \in \Procs$ be a role and let $A = (Q, \Alphabet_{\procA}, s_0, \delta, F)$ be a state machine for $\procA$.
\CharacterizationTwoPrime is satisfied when the following conditions hold in addition to \sendDecVal, \receiveDecVal and \finalVal:
	\begin{itemize}
\item
		\sendPreservation: every state containing a send-originating global state must have at least one outgoing send transition: \\ 
		$\forall s \in Q.~\exists G \in Q_{\GG,!}.~G \in d(t) \implies \exists x \in \Alphabet_{\procA,!},~s' \in Q.~ s \xrightarrow{x} s' \in \delta$.
		\\
		\item \label{cond:receive-state-exhaustive}
		\receiveExhaustive: every receive transition that is enabled in some global state decorating $s$ must be an outgoing transition from $s$: \\
		$\forall s \in Q.~\forall G \xrightarrow{x}\mathrel{\vphantom{\to}^*} G' \in \projerasuretrans.~G \in d(s) \land x \in \Alphabet_{\procA,?} \implies \exists s' \in Q.~s \xrightarrow{x} s' \in \delta$. 
\end{itemize}
\end{definition}

\noindent
We want to show the following equivalence: 
\begin{center}
	\CharacterizationTwoPrime $\Leftrightarrow$ for all well-behaved contexts $\CSMhole$, $\CSMholew{A}$ refines
	$\CSMholew{\subsetproj{\GG}{\procA}}$. 
\end{center}

\noindent
We first prove the soundness of \CharacterizationTwoPrime.
\begin{lemma}[Soundness of \CharacterizationTwoPrime]
	If \CharacterizationTwoPrime holds, then for all well-behaved contexts $\CSMhole$, 
	$\CSMholew{A}$ refines 
	$\CSMholew{\subsetproj{\GG}{\procA}}$.
\end{lemma}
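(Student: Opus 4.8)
The plan is to fix a well-behaved context $\CSMhole$ and write $\mathcal{C}_A$ for $\CSMholew{A}$ and $\mathcal{C}_{\mathrm{sub}}$ for $\CSMholew{\subsetproj{\GG}{\procA}}$. Well-behavedness gives that $\mathcal{C}_{\mathrm{sub}}$ refines $\GG$, so in particular $\mathcal{C}_{\mathrm{sub}}$ is deadlock-free and enjoys subprotocol fidelity, i.e.\ $\lang(\mathcal{C}_{\mathrm{sub}}) = \interswaplang(\SyncToAsync(S_{\mathrm{sub}}))$ for some $S_{\mathrm{sub}} \subseteq \lang(\semglobal(\GG))$. I would then establish the three obligations of \cref{def:protocol-refinement} for $\mathcal{C}_A$ against $\mathcal{C}_{\mathrm{sub}}$. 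The backbone is a configuration invariant maintained along any trace $w$ common to both CSMs: since only role $\procA$ differs, the two reached configurations agree on all channels and on every role other than $\procA$, and, writing $s$ for $A$'s state and $t$ for $\subsetproj{\GG}{\procA}$'s state on $w$, the subset-construction state satisfies $\emptyset \neq t \subseteq d(s)$ (because $t$ is exactly the set of $\projerasure{\GG}{\procA}$-states reachable on $w \wproj_{\Alphabet_\procA}$, while $d(s)$ unions this set over all prefixes reaching $s$).

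For language inclusion I would first show finite-prefix containment by induction on $w$, adapting Claim~1 of the soundness proof for \CharacterizationOne. Steps by a context role $\procC \neq \procA$ transfer verbatim, since $\mathcal{C}_A$ and $\mathcal{C}_{\mathrm{sub}}$ share that automaton and the same channel contents. For a send step of $\procA$, \sendDecVal together with $t \subseteq d(s)$ forces every global state in $t$ to carry the matching transition, so $\subsetproj{\GG}{\procA}$ can fire it. For a receive step $\rcv{\procC}{\procA}{\val}$, I would invoke \cref{lm:about-receive-decoration-validity}: a global run $\run$ consistent with $w$ exists because $\mathcal{C}_{\mathrm{sub}}$ refines $\GG$, and its side condition $\snd{\procC}{\procA}{\val} \notin \semavail^{\procA}_{(G' \ldots)}$ is discharged by \receiveDecVal applied at $s$, yielding that the subset construction receives the same message. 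I would then lift this to maximal traces: infinite traces are maximal outright, and a finite maximal trace of $\mathcal{C}_A$ must end in a genuinely final configuration (empty channels, all states final); deadlock freedom of $\mathcal{C}_{\mathrm{sub}}$ forces $\mathcal{C}_{\mathrm{sub}}$'s configuration on the same trace to be final as well, since otherwise $\subsetproj{\GG}{\procA}$ would be blocked, or could only push a message into an already quiescent context, in both cases contradicting deadlock freedom of $\mathcal{C}_{\mathrm{sub}}$.

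Deadlock freedom of $\mathcal{C}_A$ follows by contradiction. A reachable non-final configuration of $\mathcal{C}_A$ on $w$ is matched by a configuration of $\mathcal{C}_{\mathrm{sub}}$ on $w$; were the latter final, \finalVal would make the $\mathcal{C}_A$ configuration final too, a contradiction. Otherwise deadlock freedom of $\mathcal{C}_{\mathrm{sub}}$ supplies an enabled step: a context-role step is enabled in $\mathcal{C}_A$ as well; an enabled receive of $\procA$ is preserved by \receiveExhaustive (its global source lies in $t \subseteq d(s)$); and an enabled send of $\procA$ means $t$, hence $d(s)$, contains a state of $Q_{\GG,!}$, so \sendPreservation guarantees $A$ offers some send. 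In every case $\mathcal{C}_A$ can move, so it has no deadlock.

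Finally, subprotocol fidelity of $\mathcal{C}_A$ is obtained from the subprotocol fidelity of $\mathcal{C}_{\mathrm{sub}}$, the just-proved inclusion $\lang(\mathcal{C}_A) \subseteq \interswaplang(\SyncToAsync(S_{\mathrm{sub}}))$, and the fact that CSM languages are closed under $\interswap$: taking $S = \set{v \in S_{\mathrm{sub}} \mid \SyncToAsync(v) \in \lang(\mathcal{C}_A)}$ and using $\interswap$-closure in both directions yields $\lang(\mathcal{C}_A) = \interswaplang(\SyncToAsync(S))$ with $S \subseteq \lang(\semglobal(\GG))$. I expect the main obstacle to be the receive case of language inclusion and, above all, the maximal-finite-trace argument: unlike for \CharacterizationOne, $A$ may drop sends, so local language preservation is lost, and one must lean on deadlock freedom of the well-behaved context together with \sendPreservation and \receiveExhaustive to rule out $\mathcal{C}_A$ terminating while $\mathcal{C}_{\mathrm{sub}}$ still wants to proceed.
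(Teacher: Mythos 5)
Your proposal follows the same skeleton as the paper's proof: fix a well-behaved context, prove trace inclusion by induction with a case split on the active role (context roles trivially, sends of $\procA$ via \sendDecVal, receives of $\procA$ via \cref{lm:about-receive-decoration-validity} plus \receiveDecVal), handle termination and deadlock freedom via \sendPreservation, \receiveExhaustive and \finalVal, and obtain subprotocol fidelity from closure of CSM languages under $\interswap$. All of those pieces match.

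However, there is a genuine gap in the receive case of the induction, and it is precisely the point where the paper's proof has to depart from the \CharacterizationOne argument. To discharge the side condition $\snd{\procC}{\procA}{\val} \notin \semavail^{\procA}_{(G'\ldots)}$ of \cref{lm:about-receive-decoration-validity} via \receiveDecVal, you must exhibit a \emph{second} transition of $A$ out of $s$ labelled with the splitting action $\SyncToAsync(l)\wproj_{\Alphabet_\procA}$, because the conclusion of \receiveDecVal only concerns destinations of transitions that actually exist in $\delta_A$. Under \CharacterizationOne that transition is supplied by \transitionExhaustive no matter whether $l$ projects to a send or a receive of $\procA$. Under \CharacterizationTwoPrime you only get it from \receiveExhaustive when $\SyncToAsync(l)\wproj_{\Alphabet_\procA} \in \Alphabet_{\procA,?}$; when $\SyncToAsync(l)\wproj_{\Alphabet_\procA} \in \Alphabet_{\procA,!}$ --- entirely possible, since the consistent run may prescribe a send of $\procA$ as its next action while $A$ has a receive enabled, and $A$ is allowed to drop sends --- the needed transition may not exist in $A$ at all, and your instantiation of \receiveDecVal is unavailable. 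The paper closes exactly this case with a separate contradiction argument: $G$ is then send-originating, so \sendPreservation yields some send $x' \in \Alphabet_{\procA,!}$ out of $s$; \sendDecVal forces $x'$ to originate from $G$, which yields a second run $\run'$ whose unique splitting $\alpha' \cdot G \xrightarrow{l'} G'' \cdot \beta'$ satisfies $\SyncToAsync(l')\wproj_{\Alphabet_\procA} = x'$; instantiating \receiveDecVal with the pair $s \xrightarrow{\rcv{\procC}{\procA}{\val}} s_1$, $s \xrightarrow{x'} s''$ and then \cref{lm:about-receive-decoration-validity} with $\run'$ forces the fired receive to equal $x'$, contradicting that one is a receive and the other a send. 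Without this step the induction does not go through. A smaller but real flaw: in your finite-maximal-trace argument, ``$\subsetproj{\GG}{\procA}$ could only push a message into an already quiescent context'' does not contradict deadlock freedom of $\CSMholew{\subsetproj{\GG}{\procA}}$ --- a configuration with an enabled send is by definition not a deadlock; the correct move (as in the paper's Claim~2) is to show that any action enabled for $\subsetproj{\GG}{\procA}$ would, by \sendPreservation or \receiveExhaustive, also be enabled for $A$, contradicting that the trace cannot be extended in $\CSMholew{A}$.
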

\begin{proof}
	Let $\CSMhole$ be a well-behaved context with respect to $\GG$. 
	Like before, we first prove that any trace in $\CSMholew{A}$ is a trace in $\CSMholew{\subsetproj{\GG}{\procA}}$.
	
	\noindent
	\textit{Claim 1: } $\forall~w \in \AlphAsync^\infty$.~$w$ is a trace in $\CSMholew{A} \implies w$ is a trace in $\CSMholew{\subsetproj{\GG}{\procA}}$. 
	
	The proof of Claim 1 for \CharacterizationTwoPrime differs from that for \CharacterizationOne in only two ways. 
	We discuss the differences in detail below, and avoid repeating the rest of the~proof. 
	\begin{enumerate}
		\item \CharacterizationOne grants that every role's state machine satisfies \sendDecVal and \receiveDecVal, whereas \CharacterizationTwo only guarantees the conditions for role $\procA$. 
		Correspondingly, $\CSMholew{A}$ only differs from $\CSMholew{\subsetproj{\GG}{\procA}}$ in $\procA$'s state machine; all other roles' state machines are identical between the two CSMs. 
		Therefore, the induction step requires a case analysis on the role whose alphabet the event $x$ belongs to.
		In the case that $x \in \Alphabet_{\procB}$ where $\procB \neq \procA$, the induction hypothesis is trivially re-established by the fact that $\procB$'s state machine is identical in both CSMs. 
		In the case that $x \in \Alphabet_{\procA}$, we proceed to reason that $x$ can also be performed by $\subsetproj{\GG}{\procA}$ in the same well-behaved context. 
		
		\item \CharacterizationOne includes \transitionExhaustive, which allows us to conclude that given a run with unique splitting $\alpha \cdot G \xrightarrow{l} G' \cdot \beta$ for $\procA$ matching $w$ and the fact that $G \in s$, there must exist a transition 
		$s \xrightarrow{\SyncToAsync(l) \wproj_{\Alphabet_{\procA}}} s''$ in $\procA$'s state machine. 
		\cref{lm:about-receive-decoration-validity} can then be instantiated directly with $\alpha \cdot G \xrightarrow{l} G' \cdot \beta$ to complete the proof. 
		\CharacterizationTwo, on the other hand, splits \transitionExhaustive into \sendPreservation and \receiveExhaustive, and we can only establish that such a transition exists and reuse the proof in the case that $\SyncToAsync(l) \wproj_{\Alphabet_{\procA}} \in \Alphabet_{\procA,?}$. 
		Since $A$ is permitted to remove send events, if $\SyncToAsync(l) \wproj_{\Alphabet_{\procA}} \in \Alphabet_{\procA,!}$, the transition $s \xrightarrow{\SyncToAsync(l) \wproj_{\Alphabet_{\procA}}} s''$ may not exist at all in $A$. 
However, the existence of a run $\alpha \cdot G \xrightarrow{l} G' \cdot \beta$ where $l$ is a send event for~$\procA$ makes $G$ a send-originating global state in $\procA$'s projection by erasure automaton.
		\sendPreservation thus guarantees that there exists a transition $s \xrightarrow{x'} s'''$ in $A$ such that $x' \in \Alphabet_{\procA,!}$. 
		By \sendDecVal, $x'$ originates from $G$ in the projection by erasure, and we can find another run~$\run'$ such that
		$\alpha' \cdot G \xrightarrow{l'} G'' \cdot \beta'$ is the unique splitting for $\procA$ matching $w$ and $\SyncToAsync(l') \wproj_{\Alphabet_{\procA}} = x'$. 
		We satisfy the assumption that $\snd{\procC}{\procA}{\val} \notin \semavail^{\procA}_{(G'' \ldots)}$ by instantiating \receiveDecVal with $\procA$, $s \xrightarrow{x} s'$, $s \xrightarrow{\SyncToAsync(l') \wproj_{\Alphabet_\procA}} s''$ and $G''$. 
		The fact that $G'' \in \transAnnoFunDest(d_\GG(s) \xrightarrow{\SyncToAsync(l') \wproj_{\Alphabet_\procA}} d_\GG(s''))$ follows from the fact that $\alpha \cdot G \xrightarrow{l'} G'' \cdot \beta'$ is a run in $\GG$ and \cref{def:state-decoration}. 
		Instantiating \cref{lm:about-receive-decoration-validity} with $\run'$, we obtain $\SyncToAsync(l') \wproj_{\Alphabet_{\procA}} = x$, which is a contradiction: $x$ is a receive event and $\SyncToAsync(l') \wproj_{\Alphabet_{\procA}}$ is a send event. 
		Thus, it cannot be the case that $\SyncToAsync(l') \wproj_{\Alphabet_{\procA}} \in \Alphabet_{\procA,!}$. 
	\end{enumerate}
	
	This concludes our proof that any trace in $\CSMholew{A}$ is also a trace in $\CSMholew{\subsetproj{\GG}{\procA}}$.
	
	The following claim completes our soundness proof: 
	
	\noindent
	\textit{Claim 2: } $\forall~w \in \AlphAsync^*$.~$w$ is terminated in $\CSMholew{A} \implies w$ is terminated in $\CSMholew{\subsetproj{\GG}{\procA}}$ and $w$ is maximal in $\CSMholew{A}$. 
	
	The proof of Claim 2 for \CharacterizationOne again relies on \localLangIncl, which is unavailable to \CharacterizationTwoPrime. 
	Instead, we turn to \sendPreservation, \receiveExhaustive and \finalVal to establish this claim.
	Let $w$ be a terminated trace in $\CSMholew{A}$. 
	By Claim 1, it holds that $w$ is a trace in $\CSMholew{\subsetproj{\GG}{\procA}}$. 
	Let $\xi$ be the channel configuration uniquely determined by $w$.
	Let $(\vec{s},\xi)$ be the $\CSMholew{\subsetproj{\GG}{\procA}}$ configuration reached on $w$, and let $(\vec{t},\xi)$ be the $\CSMholew{A}$ configuration reached on $w$. 
	To see that $w$ is terminated in $\CSMholew{\subsetproj{\GG}{\procA}}$, suppose by contradiction that $w$ is not terminated in $\CSMholew{\subsetproj{\GG}{\procA}}$.
	Because $\CSMholew{\subsetproj{\GG}{\procA}}$ is deadlock-free, and because the state machines for all non-$\procA$ roles are identical between the two CSMs, it must be the case that $\procA$ witnesses the non-termination of $w$, in other words, $\subsetproj{\GG}{\procA}$ can take a transition that $A$ cannot. 
	Let $\vec{s}_\procA \xrightarrow{x} s'$ be the transition that $\procA$ can take from $\vec{s}_\procA$. 
	Let $G$ be a state in $\vec{s}_\procA$; such a state is guaranteed to exist by the fact that no reachable states in the subset construction are empty. 
	Then, in the projection by erasure automaton, the initial state reaches $G$ on~$w \wproj_{\Alphabet_{\procA}}$.
	By the fact that $w$ is a trace of $\CSMholew{A}$, it holds that $s_0$ reaches $\vec{s}_\procA$ on $w \wproj_{\Alphabet_{\procA}}$ in~$A$.
	By the definition of state decoration, $G \in d(\vec{t}_\procA)$. 
	\begin{itemize}
		\item If $x \in \Alphabet_!$, it follows that $G$ is a send-originating global state. 
		By \sendPreservation, for any state in $A$ that contains at least one send-originating global state, of which $\vec{t}_\procA$ is one, there exists a transition $\vec{t}_\procA \xrightarrow{x'} t'$ such that $x' \in \Alphabet_{\procA,!}$. 
		Because send transitions in a CSM are always enabled, role $\procA$ can take this transition in $\CSMholew{A}$. 
		We reach a contradiction to the fact that $w$ is terminated in $\CSMholew{A}$. 
		\item If $x \in \Alphabet_?$, it follows that $G$ is a receive-originating global state.
		From \receiveExhaustive, any receive event that originates from any global state in $d(\vec{t}_\procA)$ must also originate from $\vec{t}_\procA$. 
		Therefore, there must exist $t'$ such that $\vec{t}_\procA \xrightarrow{x} t'$ is a transition in $B'_\procA$. 
		Because the channel configuration is identical in both CSMs, role $\procA$ can take this transition in $\CSMholew{A}$.
		We again reach a contradiction to the fact that $w$ is terminated in $\CSMholew{A}$. 
	\end{itemize}
	To see that $w$ is maximal in $\CSMholew{A}$, observe that for all roles $\procB \neq \procA$, $\vec{s}_\procB = \vec{t}_\procB$. Thus, it remains to show that $\vec{t}_\procA$ is a final state in $A$. 
	Because $\vec{s}_\procA$ is a final state, by the definition of the subset construction there exists a global state $G \in \vec{s}_\procA$ such that the projection erasure automaton reaches $G$ on $w \wproj_{\Alphabet_{\procA}}$ and $G$ is a final state. 
	Because $A$ reaches $\vec{t}_\procA$ on $w \wproj_{\Alphabet_{\procA}}$, by~\cref{def:state-decoration} it holds that $G \in d(\vec{t}_\procA)$. 
	By \finalVal, it holds that $\vec{t}_\procA$ is a final state in $A$.
	This concludes our proof that any terminated trace in $\CSMholew{A}$ is also a terminated trace in $\CSMholew{\subsetproj{\GG}{\procA}}$, and is maximal in $\CSMholew{A}$. 
	
	Together, Claim 1 and 2 establish that $\CSMholew{A}$ satisfies language inclusion (\cref{def:refinement-language-inclusion}) and deadlock freedom (\cref{def:refinement-deadlock-freedom}). 
	It remains to show that $\CSMholew{A}$ satisfies subprotocol fidelity (\cref{def:refinement-subprotocol-fidelity}). 
	This follows immediately from \cite[Lemma 22]{DBLP:conf/concur/MajumdarMSZ21}, which states that all CSM languages are closed under the indistinguishability relation $\interswap$. 
	\proofend
\end{proof}

\begin{lemma}[Completeness of \CharacterizationTwoPrime]
	\label{lm:completeness-characterization-two-prime}
	If for all well-behaved contexts $\CSMhole$, 
	$\CSMholew{A}$ refines 
	$\CSMholew{\subsetproj{\GG}{\procA}}$, then \CharacterizationTwoPrime holds.
\end{lemma}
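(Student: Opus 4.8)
The plan is to prove the contrapositive of each of the five conjuncts of \CharacterizationTwoPrime: assuming one condition is violated, I exhibit a \emph{single} well-behaved context in which $\CSMholew{A}$ fails to refine $\CSMholew{\subsetproj{\GG}{\procA}}$. Throughout I use the \emph{canonical} context $\CSMhole$ that runs $\subsetproj{\GG}{\procB}$ for every role $\procB \neq \procA$. This context is well-behaved because instantiating the hole with $\subsetproj{\GG}{\procA}$ yields exactly $\subsetprojCSM$, which implements $\GG$ and hence refines it; moreover, for this context the target CSM \emph{is} $\subsetprojCSM$. Refinement (\cref{def:protocol-refinement}) then supplies three facts to contradict: language inclusion $\lang(\CSMholew{A}) \subseteq \lang(\subsetprojCSM)$, deadlock freedom of $\CSMholew{A}$, and subprotocol fidelity. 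The common mechanism is to turn the local witness of a violation, a state $s$ of $A$ together with a global state $G \in d(s)$, into a genuine trace of $\CSMholew{A}$: by the definition of $d$ there is a word $u \in \Alphabet_\procA^*$ reaching both $s$ in $A$ and $G$ in $\projerasure{\GG}{\procA}$, and since $G$ lies on some maximal run, $u$ is a prefix of $\lang(\GG)\wproj_{\Alphabet_\procA} = \lang(\subsetproj{\GG}{\procA})$; because the other roles follow the subset construction, I can lift $u$ to a full trace $w$ of $\subsetprojCSM$ with $w\wproj_{\Alphabet_\procA} = u$ along a global run whose unique splitting for $\procA$ has $G$ as its current state, and as $A$ performs exactly $u$ on $w$, this $w$ is also a trace of $\CSMholew{A}$. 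This lifting step replaces the backward-refinement ``replay'' used in the \CharacterizationOne completeness proof, which is unavailable here.

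The two reused validity conditions are contradicted through language inclusion. For \sendDecVal, suppose a send transition $s \xrightarrow{\snd{\procA}{\procB}{\val}} s'$ has some $G \in d(s)$ with $G \notin \transAnnoFunc(d(s)\xrightarrow{\snd{\procA}{\procB}{\val}} d(s'))$. I realize $w$ with $w\wproj_{\Alphabet_\procA} = u$ reaching this $G$, so the $\procA$-state $t$ of $\subsetprojCSM$ after $w$ satisfies $G \in t$. Since $ux$ reaches $s'$ in $A$, the subset $x$-successor $\delta(t,\snd{\procA}{\procB}{\val})$ is contained in $d(s')$, so the violation forces $G$ to enable the send in no way at all. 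Because sends are always enabled, $\CSMholew{A}$ performs $w \cdot \snd{\procA}{\procB}{\val}$; but by \emph{Send Validity} of the subset construction any outgoing send from $t$ must be enabled in every state of $t$, so $\subsetproj{\GG}{\procA}$ offers no such transition from $t$ and $w\cdot\snd{\procA}{\procB}{\val} \notin \lang(\subsetprojCSM)$, contradicting language inclusion. For \receiveDecVal, from transitions $s \xrightarrow{\rcv{\procB_1}{\procA}{\val_1}} s_1$ and $s\xrightarrow{x} s_2$ with $x \neq \rcv{\procB_1}{\procA}{\_}$ and a destination $G' \in \transAnnoFunDest(d(s)\xrightarrow{x} d(s_2))$ with $\snd{\procB_1}{\procA}{\val_1} \in \semavail^{\procA}_{(G'\ldots)}$, I realize a trace reaching $s$ along a run whose $x$-branch leads to $G'$; using the available-message semantics together with \cref{lm:about-receive-decoration-validity}, the message $\val_1$ from $\procB_1$ becomes available at the head of $\procA$'s channel, so $A$ can consume it via its $\rcv{\procB_1}{\procA}{\val_1}$ transition, producing an out-of-order reception that \emph{Receive Validity} of the subset construction rules out, so the resulting trace again escapes $\lang(\subsetprojCSM)$.

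The remaining three conditions are contradicted through deadlock freedom. For \finalVal, I pick a \emph{terminated} trace $w$ of $\subsetprojCSM$ whose $\procA$-projection reaches a final $G \in d(s)\cap F_\GG$ with $s \notin F_\procA$; in $\CSMholew{A}$ all other roles are then final and all channels empty, yet $\procA$ sits in the non-final state $s$, so the configuration is non-final and either has no outgoing transition (a deadlock) or lets $\procA$ emit a send that no final partner can receive, contradicting deadlock freedom or language inclusion. For \sendPreservation, I choose the run so that $G \in Q_{\GG,!}\cap d(s)$ is send-originating and reached with $\procA$'s incoming channels empty while every other role blocks waiting for $\procA$'s message; if $s$ offers no send transition, $\procA$ can neither send nor receive, and the whole configuration is a non-final deadlock. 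For \receiveExhaustive, given $G \xrightarrow{\rcv{\procB}{\procA}{\val}} G'$ in $\projerasuretrans$ with $G \in d(s)$ but no outgoing $\rcv{\procB}{\procA}{\val}$ from $s$, I realize a run in which $\procB$ places $\val$ at the head of $\procA$'s channel and $\procA$ has no other pending obligation; since $A$ lacks the transition at $s$, the message can never be consumed and the run ends in a deadlock.

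I expect the main obstacle to be the \receiveDecVal case: it requires the available-message machinery of \cref{lm:about-receive-decoration-validity} to certify that the offending reception is genuinely enabled after the $x$-branch, and care to show that the resulting interleaving is exactly one forbidden by \emph{Receive Validity} of the subset construction rather than merely absent from $A$. A recurring secondary difficulty, shared by every case, is the lifting step of the first paragraph: because $A$ may delete send transitions, I must confirm that the witness word $u$ is realizable as a full trace of $\CSMholew{A}$, that is, that $A$'s removals do not block the very run exposing the violation, which is precisely where the assumption that all other roles run the subset construction is essential.
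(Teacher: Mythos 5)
Your proposal is correct and follows essentially the same route as the paper's: the same canonical witness context consisting of $\subsetproj{\GG}{\procB}$ for every $\procB \neq \procA$, the same lifting lemma turning a state of $A$ with non-empty decoration set into a trace shared by both CSMs (this is precisely \cref{lm:hack-for-subtype}), the same reuse of the \CharacterizationOne completeness construction against Send/Receive Validity of the subset construction to refute the two decoration-validity conditions via language inclusion (\cref{lm:stepping-stone-send-receive-dec-val-complete}), and the same deadlock-freedom/subprotocol-fidelity arguments for \sendPreservation, \receiveExhaustive, and \finalVal (\cref{lm:stepping-stone-completeness-rest}). The remaining differences are presentational rather than substantive: the paper invokes the witness-trace construction of \cite[Theorem 7.1]{DBLP:conf/cav/LiSWZ23} where you cite \cref{lm:about-receive-decoration-validity}, and it handles \sendPreservation by case-splitting on extensions of $\procA$'s local word (falling back on subprotocol fidelity and determinism of $A$) rather than asserting an immediate non-final deadlock, which covers the corner case where the reached configuration happens to be final.
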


\newcommand{\subsetprojholew}[1]{\mathscr{C}(\GG)[#1]_\procA}

As before, we prove the modus tollens of this implication, which states that if \CharacterizationTwoPrime does not hold, then there exists a well-behaved context $\CSMholew{\cdot}$ such that $\CSMholew{A}$ does not protocol-refine
$\CSMholew{\subsetproj{\GG}{\procA}}$. 

We first turn our attention to finding a well-behaved witness context $\CSMholew{\cdot}$ such that we can refute subprotocol fidelity, language inclusion, or deadlock freedom. 
It turns out that the context consisting of the subset construction automaton for every other role is a suitable witness. We denote this context by $\subsetprojholew{\cdot}$ and note that it is trivially well-behaved because $\subsetprojholew{\subsetproj{\GG}{\procA}} = \subsetprojCSM$.

Recall from the completeness arguments for \CharacterizationOne that we obtained a violating state in some state machine $A$ with a non-empty decoration set from the negation of each condition in \CharacterizationOne. 
From this state's decoration set we obtained a witness global state $G$, and in turn a run $\alpha \cdot G$ in $\GG$, and from the assumption that $\subsetprojCSM$ refines $\mathcal{A}$, we argued that $\SyncToAsync(\trace(\alpha \cdot G))$ is a trace in $\mathcal{A}$. 
We then showed that $A$ is in the violating state in the $\mathcal{A}$ configuration reached on $\SyncToAsync(\trace(\alpha \cdot G))$, and from there we used each violated condition to find a contradiction.

The completeness proof for \CharacterizationTwoPrime cannot similarly use the fact that $\subsetprojCSM$ refines $\subsetprojholew{A}$. 
Instead, we must separately establish that every state with a non-empty decoration set can be reached on a trace shared by both $\subsetprojholew{A}$ and $\subsetprojCSM$. 
The following lemma achieves~this: 

\begin{lemma}
	\label{lm:hack-for-subtype}
Let $A$ be a state machine for $\procA$ and $s$ be a state in $A$.
	Let $G \in d(s)$, and let $u \in \Alphabet^*_\procA$ be a word such that
	$s_0 \xrightarrow{u} \mathrel{\vphantom{\to}^*} s$ in $A$.
	Then, there exists a run $\alpha \cdot G$ of $\semglobalsync(\GG)$ such that 
	$\SyncToAsync(\trace(\alpha \cdot G)) \wproj_{\Alphabet_\procA} = u$, 
	$\SyncToAsync(\trace(\alpha \cdot G))$ is a trace in $\subsetprojholew{A}$ and in the CSM configuration reached on $\SyncToAsync(\trace(\alpha \cdot G))$, $A$ is in state $s$. 
\end{lemma}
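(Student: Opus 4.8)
The plan is to read conclusion~(1) as the object to be produced, and to interpret it through the transition-level bijection between $\semglobalsync(\GG)$ and $\projerasure{\GG}{\procA}$. By \cref{def:projection-by-erasure}, each transition $q\xrightarrow{a}q'$ of $\semglobalsync(\GG)$ induces exactly the one $\projerasuretrans$-transition $q\xrightarrow{\SyncToAsync(a)\wproj_{\Alphabet_\procA}}q'$, and conversely; hence a run $\alpha\cdot G$ of $\semglobalsync(\GG)$ ending in the syntactic state $G$ with $\SyncToAsync(\trace(\alpha\cdot G))\wproj_{\Alphabet_\procA}=u$ exists \emph{iff} $q_{0,\GG}\xrightarrow{u}\mathrel{\vphantom{\to}^*}G\in\projerasuretrans$. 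So conclusion~(1) is logically equivalent to the erasure-reachability of $G$ on the word $u$; this is not an extra hypothesis but exactly the condition under which~(1) is satisfiable, and it is supplied by reading the hypotheses jointly as in \cref{def:state-decoration}: the word $u$ reaching $s$ in $A$ is one that realizes $G\in d(s)$, so $q_{0,\GG}\xrightarrow{u}\mathrel{\vphantom{\to}^*}G$ holds. Lifting the corresponding $\projerasuretrans$-path back along the bijection yields the synchronous run $\alpha\cdot G$, discharging~(1). Conclusion~(3) will then follow from determinism of $A$: once the trace is shown executable, the $\procA$-events it contains are precisely $u$ in order, which $A$ reads into $s$.

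The substantive step is conclusion~(2), that $\SyncToAsync(\trace(\alpha\cdot G))$ is a genuine trace of $\subsetprojholew{A}$. I would realize the synchronous run asynchronously by the \emph{eager} interleaving that, for each synchronous label $\msgFromTo{\procX}{\procY}{\val}$ along $\alpha\cdot G$ (skipping the $\varepsilon$-labelled $\mu$-steps), emits $\snd{\procX}{\procY}{\val}$ immediately followed by $\rcv{\procX}{\procY}{\val}$. This interleaving is FIFO-consistent by construction and every receive fires on the message just deposited, so the only remaining obligation is that each role's local machine can follow its own projection of the trace. For a role $\procB\neq\procA$ the machine is $\subsetcons{\GG}{\procB}$, and $\SyncToAsync(\trace(\alpha\cdot G))\wproj_{\Alphabet_\procB}\in\pref(\lang(\GG))\wproj_{\Alphabet_\procB}$ because $\alpha\cdot G$ is a run-prefix of $\semglobalsync(\GG)$; since $\lang(\subsetcons{\GG}{\procB})=\lang(\GG)\wproj_{\Alphabet_\procB}$ by \cite[Lemma 4.3]{DBLP:conf/cav/LiSWZ23} and the subset construction is deterministic, $\subsetcons{\GG}{\procB}$ tracks this prefix. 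For $\procA$ the projection is $u$, which reads into $s$ by hypothesis. Feeding the eager interleaving into the CSM while letting each role take exactly these local steps produces the required run of $\subsetprojholew{A}$ and simultaneously places $A$ in $s$ and each $\subsetcons{\GG}{\procB}$ in the subset-state reached on $\SyncToAsync(\trace(\alpha\cdot G))\wproj_{\Alphabet_\procB}$.

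I expect the main obstacle to be the bookkeeping of this asynchronous realization: one must verify by induction along $\alpha\cdot G$ that after each eager send/receive pair the used channel is empty again, that the send is enabled in the current local state of $\procX$, and that the receive is enabled in that of $\procY$ — the channel facts being immediate from eagerness, and the two local-enabledness facts coming from the prefix-tracking above (determinism pins down the unique state each local machine is in after its projection). A secondary point to handle is that the eager interleaving is only one representative of its $\interswap$-class; since all CSM languages are $\interswap$-closed by \cite[Lemma 22]{DBLP:conf/concur/MajumdarMSZ21}, exhibiting this single representative suffices. Finally, although the statement only names $\subsetprojholew{A}$, the identical argument with $\subsetcons{\GG}{\procA}$ placed in the hole shows the trace also lies in $\subsetprojCSM$, which is what the completeness proof consumes when it needs a trace shared by both machines.
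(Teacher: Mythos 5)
The paper states \cref{lm:hack-for-subtype} without an explicit proof (none appears in the appendix either), so there is no official argument to compare against; judged on its own, your proof is correct, and it is precisely the argument that the paper's surrounding proofs implicitly rely on: realize the synchronous run by the eager interleaving that $\SyncToAsync$ itself prescribes, and check that each local machine can follow its projection. Your handling of the hypotheses deserves explicit credit: read literally, the lemma is false, since $G \in d(s)$ is witnessed by \emph{some} word while $u$ is quantified separately, and a global type with two branches reaching the same state $s$ of $A$ on words $u_1 \neq u_2$ but distinct erasure states refutes the conclusion $\SyncToAsync(\trace(\alpha \cdot G)) \wproj_{\Alphabet_\procA} = u$ for the mismatched pairing. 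Your coupled reading --- that $u$ is a witness realizing $G \in d(s)$, so $q_{0,\GG} \xrightarrow{u}\mathrel{\vphantom{\to}^*} G \in \projerasuretrans$ --- is the intended one, and is also what the application in \cref{lm:hack-for-subtype-supertype} tacitly assumes.

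Two minor inaccuracies, neither damaging. First, the correspondence between $\delta_\GG$ and $\projerasuretrans$ is a surjection rather than a bijection: two distinct synchronous branches not involving $\procA$ whose continuations are syntactically equal subterms collapse onto the same $\varepsilon$-transition; but your lifting step only needs a preimage at each transition, so the argument goes through unchanged. Second, the appeal to $\interswap$-closure of CSM languages is superfluous: the word the lemma demands is literally $\SyncToAsync(\trace(\alpha \cdot G))$, which \emph{is} your eager interleaving, so exhibiting that single representative is not a weakening requiring closure. The substantive obligations are all discharged correctly: extending $\alpha \cdot G$ to a maximal run is possible because guardedness excludes $\varepsilon$-cycles and only the state $0$ lacks outgoing transitions; monotonicity of projection under prefixes together with $\lang(\subsetcons{\GG}{\procB}) = \lang(\GG)\wproj_{\Alphabet_\procB}$ from \cite[Lemma 4.3]{DBLP:conf/cav/LiSWZ23} gives prefix-tracking for every $\procB \neq \procA$; the hypothesis $s_0 \xrightarrow{u}\mathrel{\vphantom{\to}^*} s$ handles $\procA$'s steps (note sends of $\procA$ along the run must be transitions of $A$, which is exactly what this hypothesis supplies); the eager schedule's channel-emptiness invariant makes every receive enabled; and determinism of all component machines pins down conclusion (3) and, as you observe, the same construction with $\subsetcons{\GG}{\procA}$ in the hole yields the shared trace the completeness proofs consume.
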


With \cref{lm:hack-for-subtype} replacing the assumption that $\subsetprojCSM$ refines $\subsetprojholew{A}$, we can reuse the construction in~\cref{lm:characterization-one-send-receive-dec-val-complete} to obtain a word that is a trace in $\subsetprojholew{A}$ but not a trace in $\subsetprojCSM$, thus evidencing the necessity of \sendDecVal and \receiveDecVal. The proof of \cref{lm:stepping-stone-send-receive-dec-val-complete} proceeds identically to that of~\cref{lm:characterization-one-send-receive-dec-val-complete} and is thus omitted. 

\begin{lemma}
	\label{lm:stepping-stone-send-receive-dec-val-complete}
If $A$ violates \sendDecVal or \receiveDecVal, then it does not hold that for all well-behaved contexts $\CSMhole$, $\CSMholew{A}$ refines  $\subsetprojholew{A}$. 
\end{lemma}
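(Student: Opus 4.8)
The plan is to prove the contrapositive by exhibiting a single well-behaved witness context, namely $\subsetprojholew{\cdot}$, which is well-behaved because $\subsetprojholew{\subsetproj{\GG}{\procA}} = \subsetprojCSM$ implements $\GG$. For this context $\CSMholew{\cdot}$ instantiates to $\subsetprojholew{A}$, so it suffices to show that $\subsetprojholew{A}$ does not refine $\subsetprojCSM$. I will do this by producing a finite word that is a trace of $\subsetprojholew{A}$ but not a trace of $\subsetprojCSM$: since protocol refinement requires language inclusion $\lang(\subsetprojholew{A}) \subseteq \lang(\subsetprojCSM)$, such a word refutes refinement. The construction is exactly the one in \cref{lm:characterization-one-send-receive-dec-val-complete}; the sole structural change is that the hypothesis ``$\subsetprojCSM$ refines $\mathcal{A}$'', which there supplied a trace of $\mathcal{A}$ reaching the violating state, is unavailable and is replaced by \cref{lm:hack-for-subtype}, whose conclusion provides precisely that prefix.

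Concretely, from a violation I first extract a reachable state $s$ of $A$ with $d(s) \neq \emptyset$, a global state $G \in d(s)$, and the offending outgoing transition(s). Applying \cref{lm:hack-for-subtype} to $G$ and a word $u$ with $s_0 \xrightarrow{u}\mathrel{\vphantom{\to}^*} s$ yields a run $\alpha \cdot G$ of $\semglobalsync(\GG)$ whose trace $w \coloneq \SyncToAsync(\trace(\alpha \cdot G))$ is a trace of $\subsetprojholew{A}$ that leaves $A$ in state $s$ and $\procA$'s global run at $G$. This $w$ plays the role of the replayed prefix in the earlier proof.

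For the \sendDecVal case the violation gives $s \xrightarrow{\snd{\procA}{\procB}{\val}} s'$ with $G \in d(s) \setminus \transAnnoFunc(d(s) \xrightarrow{\snd{\procA}{\procB}{\val}} d(s'))$; since any $G$-successor under this label would already lie in $d(s')$, the state $G$ does not enable $\snd{\procA}{\procB}{\val}$ in $\projerasuretrans$. As sends are always enabled in a CSM, $w \cdot \snd{\procA}{\procB}{\val}$ is a trace of $\subsetprojholew{A}$; but the $\subsetproj{\GG}{\procA}$-state $t$ reached on $w$ contains $G$, so by Send Validity of the subset construction (which holds because $\GG$ is implementable) $t$ has no $\snd{\procA}{\procB}{\val}$-transition, and $w \cdot \snd{\procA}{\procB}{\val}$ is not a trace of $\subsetprojCSM$. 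For the \receiveDecVal case the violation gives $s \xrightarrow{\rcv{\procB_1}{\procA}{\val_1}} s_1$ and $s \xrightarrow{x} s_2$ with $G' \in \transAnnoFunDest(d(s) \xrightarrow{x} d(s_2))$ and $\snd{\procB_1}{\procA}{\val_1} \in \semavail^{\procA}_{(G' \ldots)}$. I schedule the remaining roles (all running their subset constructions) along the run witnessing availability so that $\val_1$ from $\procB_1$ reaches the head of $\procA$'s channel while $A$ stays at $s$, then fire $\rcv{\procB_1}{\procA}{\val_1}$, obtaining a trace $w'$ of $\subsetprojholew{A}$. If $w'$ were also a trace of $\subsetprojCSM$, then the state $t \ni G$ of $\subsetproj{\GG}{\procA}$ would admit the receive $\rcv{\procB_1}{\procA}{\val_1}$ alongside the alternative branch reaching $G'$ where $\snd{\procB_1}{\procA}{\val_1}$ is available, contradicting Receive Validity of the subset construction; hence $w'$ is not a trace of $\subsetprojCSM$. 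In both cases language inclusion, and therefore refinement, fails.

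I expect the receive case to be the main obstacle. The delicate point is realizing the available message as an actual CSM execution of $\subsetprojholew{A}$: I must argue that the non-$\procA$ sends that deposit $\snd{\procB_1}{\procA}{\val_1}$ are independent of $\procA$'s choice between the receive branch and $x$, so that they can be scheduled while $\procA$ remains at $s$, and that this scheduling is consistent with the FIFO channel semantics and the witness for $\semavail^{\procA}_{(G' \ldots)}$. A secondary subtlety is that \receiveDecVal is stated more generally than the subset construction's Receive Validity, since the alternative branch $x$ need not be a receive from a distinct sender; I must confirm that the generalization still collapses to a genuine Receive Validity violation of $\subsetproj{\GG}{\procA}$ under the assumption $w' \in \lang(\subsetprojCSM)$. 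Because all of these steps already appear in \cref{lm:characterization-one-send-receive-dec-val-complete}, the only genuinely new verification is that substituting \cref{lm:hack-for-subtype} for the discarded refinement hypothesis preserves each step, which it does since that hypothesis was used there solely to produce the prefix $w$.
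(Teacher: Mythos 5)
Your proposal is correct and takes essentially the same route as the paper: the paper likewise uses $\subsetprojholew{\cdot}$ as the (trivially well-behaved) witness context, invokes \cref{lm:hack-for-subtype} in place of the discarded hypothesis that $\subsetprojCSM$ refines the candidate CSM, and otherwise reuses the construction of \cref{lm:characterization-one-send-receive-dec-val-complete} to exhibit a word that is a trace of $\subsetprojholew{A}$ but not of $\subsetprojCSM$, refuting language inclusion. The subtlety you flag in the receive case is resolved in the reused proof exactly as you anticipate, by splitting on whether the alternative transition $x$ is a send (contradicting No Mixed Choice) or a receive (contradicting Receive Validity of the subset construction).
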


We also use \cref{lm:hack-for-subtype} to show the necessity of \sendPreservation, \receiveExhaustive and \finalVal. 
As a starting point, let $A$, $s$, $u$ and $\alpha \cdot G$ be obtained from \cref{lm:hack-for-subtype} and the violation of \sendPreservation. 
To show the necessity of \sendPreservation, we consider the largest extension $v$ of $u$ in $\subsetprojholew{A}$. 
In the case that $u$ is terminated in $\subsetprojholew{A}$, we refute deadlock freedom from the fact that $u$ is not maximal: $G \in s$ is a send-originating state, and final states in $\semglobal(\GG)$ do not contain outgoing transitions. 
If $v \neq u$, there exists a run $\alpha \cdot G \xrightarrow{\procA \xrightarrow{} \procB: \val} G' \cdot \beta$ such that 
$\SyncToAsync(\trace(\alpha \cdot G \xrightarrow{\procA \xrightarrow{} \procB: \val} G' \cdot \beta) \wproj_{\Alphabet_{\procA}} = v \wproj_{\Alphabet_{\procA}}$. 
By subprotocol fidelity, $\SyncToAsync(\trace(\alpha \cdot G \xrightarrow{\procA \xrightarrow{} \procB: \val} G' \cdot \beta))$ is a trace in $\subsetprojholew{A}$. 
Consequently, $\SyncToAsync(\trace(\alpha \cdot G \xrightarrow{\procA \xrightarrow{} \procB: \val} G' \cdot \beta)) \wproj_{\Alphabet_{\procA}}$ is a prefix in $A$. 
We find a contradiction from the fact that $A$ is deterministic and there is no outgoing transition labeled $\snd{\procA}{\procB}{\val}$ from $s$. 
Similar arguments can be used to show the necessity of \receiveExhaustive. 
Finally, for \finalVal, in the case that $s$ is non-final in $A$ but contains a final state in $\semglobal(\GG)$, we can instantiate \cref{lm:hack-for-subtype} with this final state and show that $u$ evidences a deadlock. 

\begin{lemma}
	\label{lm:stepping-stone-completeness-rest}
If $A$ violates \sendPreservation, \receiveExhaustive or \finalVal, then it does not hold that for all well-behaved contexts $\CSMhole$, $\CSMholew{A}$ refines  $\subsetprojholew{A}$. 
\end{lemma}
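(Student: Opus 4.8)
The plan is to prove the contrapositive by producing a single well-behaved witness context and showing refinement fails there. I take the witness to be $\subsetprojholew{\cdot}$, which is well-behaved because $\subsetprojholew{\subsetproj{\GG}{\procA}} = \subsetprojCSM$ implements $\GG$; it then suffices to show that $\subsetprojholew{A}$ does not refine $\subsetprojCSM$. I argue by contradiction, assuming $\subsetprojholew{A}$ refines $\GG$, so that it satisfies subprotocol fidelity, language inclusion, and deadlock freedom (\cref{def:protocol-refinement}). The workhorse is \cref{lm:hack-for-subtype}: for the violating state $s$ and a chosen $G \in d(s)$ it supplies a run $\alpha \cdot G$ of $\semglobalsync(\GG)$ whose encoding $w = \SyncToAsync(\trace(\alpha \cdot G))$ is a genuine trace of $\subsetprojholew{A}$ that reaches $A$ in state $s$; this replaces the assumption ``$\subsetprojCSM$ refines $\subsetprojholew{A}$'' that powered the \CharacterizationOne completeness proofs. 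I will also repeatedly use that $\subsetprojholew{A}$ and $\subsetprojCSM$ differ only in $\procA$'s machine, so on any trace common to both they induce identical channel contents and identical non-$\procA$ component states.

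For \sendPreservation, let $s$ be reachable with a send-originating $G \in Q_{\GG,!} \cap d(s)$ that has no outgoing send in $A$, and obtain $w$ as above. I examine the maximal extension of $w$ in $\subsetprojholew{A}$. If $w$ is already terminated, I show its configuration is non-final, hence a deadlock: by subprotocol fidelity every maximal trace is $\interswap$-equivalent to $\SyncToAsync(v_0)$ for some complete $v_0 \in \lang(\semglobalsync(\GG))$, but Send Validity of the subset construction forces the state reached on $u = w \wproj_{\Alphabet_\procA}$ to be non-final (a subset containing a send-originating $G$ cannot contain $0$, since $0$ enables no send), so no complete global word projects to exactly $u$ and $w$ cannot be maximal, contradicting deadlock freedom. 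Otherwise the extension realizes a run $\alpha \cdot G \xrightarrow{\msgFromTo{\procA}{\procB}{\val}} G' \cdot \beta$; by subprotocol fidelity its encoding is a trace of $\subsetprojholew{A}$, and projecting onto $\procA$ forces $A$ to perform $\snd{\procA}{\procB}{\val}$ from $s$, contradicting determinism and the absence of any outgoing send from $s$.

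The case of \receiveExhaustive is symmetric. Here $G \in d(s)$ offers a global receive $G \xrightarrow{\rcv{\procC}{\procA}{\val}}\mathrel{\vphantom{\to}^*} G'$ in $\projerasuretrans$ that $A$ omits at $s$. Using Send Validity for $\procC$ together with subprotocol fidelity, I extend $w$ by $\procC$'s send $\snd{\procC}{\procA}{\val}$, placing $\val$ at the head of channel $\channel{\procC}{\procA}$ (all earlier $\procC$-to-$\procA$ messages of $\alpha \cdot G$ were already consumed along $w$). Since the next local action mandated by this run is exactly $\rcv{\procC}{\procA}{\val}$, which $A$ cannot perform from $s$, and since FIFO forbids consuming any later message first, completing the remaining roles leaves $\procA$ as the sole blocked role with $\val$ pending indefinitely, so no reachable configuration is final, contradicting deadlock freedom. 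For \finalVal I instantiate \cref{lm:hack-for-subtype} with the final state $G = 0 \in d(s) \cap F_\GG$, giving $w = \SyncToAsync(\trace(\alpha \cdot 0))$ whose run is complete and which reaches $A$ in the non-final $s \notin F_\procA$. Since $w \in \lang(\GG) = \lang(\subsetprojCSM)$, the $\subsetprojCSM$ configuration on $w$ is final; by the shared-component observation the $\subsetprojholew{A}$ configuration then has empty channels and all non-$\procA$ roles final, while $\procA$ sits in the non-final state $s$ with no legitimate continuation, which is a deadlock.

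The main obstacle is to argue, in the \sendPreservation and \finalVal cases, that the configuration reached on $w$ is genuinely non-final \emph{and} that $\procA$ is truly stuck rather than able to advance consistently with some complete global word; this is exactly where the coincidence of the non-$\procA$ components of $\subsetprojholew{A}$ with those of the deadlock-free $\subsetprojCSM$, together with Send Validity of the subset construction, is needed to exclude the benign alternative. For \receiveExhaustive the delicate point is instead the FIFO argument that the forced head message can never be consumed by $A$ (ruling out a detour through sends that later enables the receive), which is what turns a missing receive transition into a genuine refinement failure rather than a locally repairable gap.
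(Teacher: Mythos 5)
Your proposal follows the paper's route almost everywhere: the same witness context $\subsetprojholew{\cdot}$, justified by $\subsetprojholew{\subsetproj{\GG}{\procA}} = \subsetprojCSM$; \cref{lm:hack-for-subtype} as the substitute for the missing refinement assumption; the same two-case analysis for \sendPreservation (terminated trace yields a deadlock, an extension yields a determinism contradiction via subprotocol fidelity); and the same deadlock argument for \finalVal. Your sharpening of the terminated case --- using Send Validity of the subset construction to rule out that a reachable subset state contains both a send-originating subterm and $0$ --- is correct and fills in a step the paper leaves implicit.

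The genuine gap is in \receiveExhaustive, where you abandon the two-case scheme and argue only for a deadlock. Two things go wrong. First, ``no reachable configuration is final'' does not contradict deadlock freedom: a deadlock is a reachable configuration that is non-final \emph{and has no outgoing transitions}, so a CSM all of whose runs are infinite is deadlock-free regardless of how many messages sit unreceived in channels. Second, ``completing the remaining roles'' is not available when the protocol continuation never terminates and does not involve $\procA$. Concretely, take $\GG = \msgFromTo{\procC}{\procA}{\val}.\,\mu t.\,\msgFromTo{\procC}{\procB}{\val'}.\,t$ and let $A$ be the single-state machine for $\procA$ with no transitions and no final state: only \receiveExhaustive is violated, yet $\subsetprojholew{A}$ is deadlock-free (after $\procC$'s unreceived send, $\procC$ and $\procB$ exchange $\val'$ forever) and even satisfies language inclusion, because the asymmetric closure $\preceq_\interswap^\omega$ places the traces in which $\procA$'s receive is postponed forever inside $\lang(\GG) = \lang(\subsetprojCSM)$. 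What fails --- and what your argument never touches --- is subprotocol fidelity: any $S \subseteq \lang(\semglobalsync(\GG))$ whose closure contains these traces also forces into $\interswaplang(\SyncToAsync(S))$ words containing $\rcv{\procC}{\procA}{\val}$, which $\subsetprojholew{A}$ cannot produce. This is why the paper's ``similar arguments'' must reuse the extension case from \sendPreservation: by fidelity, the encoding of a complete global run whose $\procA$-projection continues $u$ with $\rcv{\procC}{\procA}{\val}$ is itself a trace of $\subsetprojholew{A}$, and then determinism of $A$ together with the missing transition at $s$ gives the contradiction. Your FIFO/deadlock reasoning cannot be repaired into this; the fidelity step is the missing idea.
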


 \subsection{\emph{\ProblemTwo} (General Case)}

Equipped with the solution to a special case, we are ready to revisit the general case of \ProblemTwo, which asks to find a \CharacterizationTwo that satisfies the following: 
\begin{center}
	\CharacterizationTwo $\Leftrightarrow$ for all well-behaved contexts $\CSMhole$, 
	$\CSMholew{A}$ refines
	$\CSMholew{B}$. 
\end{center}

Critical to the former problems is the fact that the state decoration function precisely captures those states in a local state machine that are reachable in some CSM execution, under some assumptions on the context: a state is reachable if and only if its decoration set is non-empty. 
This allows the conditions in \CharacterizationOne and \CharacterizationTwoPrime to precisely characterize the reachable local states. 

The second problem generalizes the subset projection to an arbitrary state machine $B$, and asks whether a candidate state machine $A$ (the subtype) refines $B$ (the supertype) in any well-behaved context.
Unfortunately, we cannot simply decorate the subtype with the supertype's states, because not all states in the supertype are reachable. 
Instead, we need to restrict the set of states in the supertype to those that themselves have non-empty decoration sets with respect to $\GG$. 

In the remainder of this section, let $\procA \in \Procs$ be a role, let $B = (Q_B, \Alphabet_\procA, t_0, \delta_B, F_B)$ denote the supertype state machine for $\procA$, and let $A = (Q_A, \Alphabet_\procA, s_0, \delta_A, F_A)$ denote the subtype state machine for $\procA$. 
We modify our state decoration function in~\cref{def:state-decoration} to map states of $A$ to subsets of states in $B$ that themselves have non-empty decoration sets with respect to $\GG$. 
\begin{definition}[State decoration with respect to a supertype]
	\label{def:state-decoration-supertype}
	Let $\GG$ be a global type. 
	Let $\procA \in \Procs$ be a role, and let $B = (Q_B, \Alphabet_\procA, t_0, \delta_B, F_B)$ and $A = (Q_A, \Alphabet_\procA, s_0, \delta_A, F_A)$ be two deterministic finite state machines for $\procA$.
	We define a total function $d_{\GG,B,A} : Q' \rightarrow \powersetof{Q}$ that maps each state in $A$ to a subset of states in $B$ such that:
\[
	d_{\GG,B,A}(s) = 
	\{ t \in Q_B
	\mid 
	\exists u \in \Alphabet_\procA^*.~ 
	s_0 \xrightarrow{u}\mathrel{\vphantom{\to}^*} s \in \delta_A \land 
	t_0 \xrightarrow{u}\mathrel{\vphantom{\to}^*} t \in \delta_B \land d(t) \neq \emptyset
	\}
	\]
\end{definition}
We again omit the subscripts $\GG$ and $A$ when clear from context, but retain the subscript $B$ to distinguish $d_{B}$ from $d$ in~\cref{def:state-decoration}.

We likewise require a generalization of $\transAnnoFunc$ and $\transAnnoFunDest$ to be defined in terms of~$B$, instead of the projection by erasure automaton for $\procA$.\begin{definition}[Transition origin and destination with respect to a supertype]
	Let $\GG$ be a global type, and let $B = (Q_B, \Alphabet_\procA, t_0, \delta_B, F_B)$ be a state machine. 
For $x \in \Alphabet_\procA$ and $s,s' \subseteq Q_B$,
	we define the set of \emph{transition origins} $\transAnnoFunc(s \xrightarrow{x} s')$ and \emph{transition destinations} $\transAnnoFunDest(s \xrightarrow{x} s')$ as follows:
\begin{align*}
		\transAnnoFunc_B(s \xrightarrow{x} s')
		\is {} &
		\set{t \in s
			\mid
			\exists t' \in s'. \,
			t \xrightarrow{x}\mathrel{\vphantom{\to}^*} t' \in \delta_B} \; \text{ and }\\
		\transAnnoFunDest_B(s \xrightarrow{x} s')
		\is {} &
		\set{t' \in s'
			\mid
			\exists t \in s. \,
			t \xrightarrow{x}\mathrel{\vphantom{\to}^*} t' \in \delta_B} \enspace.
	\end{align*}
\end{definition}

We present \CharacterizationTwo in terms of the newly defined decoration function $d_B$.

\newcommand{\sendDecSubVal}{\emph{Send Decoration Subtype Validity}\xspace}
\newcommand{\receiveDecSubVal}{\emph{Receive Decoration Subtype Validity}\xspace}
\newcommand{\sendPreservationSub}{\emph{Send Subtype Preservation}\xspace} 
\newcommand{\receiveExhaustiveSub}{\emph{Receive Subtype Exhaustivity}\xspace} 
\newcommand{\finalSubVal}{\emph{Final State Subtype Validity}\xspace}

\begin{definition}[\CharacterizationTwo]
	\label{def:characterization-three}
	Let $\GG$ be a global type, $\procA \in \Procs$ be a role, and $B = (Q_B, \Alphabet_\procA, t_0, \delta_B, F_B)$ and $A = (Q_A, \Alphabet_\procA, s_0, \delta_A, F_A)$ be two deterministic state machines for $\procA$.
\CharacterizationTwo is the conjunction of the following conditions:\begin{itemize}
		\item 
		\sendDecSubVal: every send transition $s \xrightarrow{x} s' \in \delta_A$ must be enabled in all states of $B$ decorating $s$: \\
		$\forall s \xrightarrow{\snd{\procA}{\procB}{\val}} s' \in \delta_A.~\transAnnoFunc_B(d_{B}(s)  \xrightarrow{\snd{\procA}{\procB}{\val}} d_{B}(s')) = d_B(s)$. \\
		\item 
		\receiveDecSubVal: no receive transition is enabled in an alternative continuation originating from the same state: \\
		$\begin{array}{l}\forall s \xrightarrow{\rcv{\procB_1}{\procA}{\val_1}} s_1,~s \xrightarrow{x} s_2 \in \delta_A.~x \neq \rcv{\procB_1}{\procA}{\_} \implies \\ \qquad
		\forall G \in \underset{t \in d_{B}(s_2)} \bigcup \{d(t) \mid t \in \transAnnoFunDest_B(d_B(s) \xrightarrow{x} d_B(s_2))\}.~ 
		\snd{\procB_1}{\procA}{\val_1} \notin \semavail^{\procA}_{(G \ldots)} .
                \end{array}$ \\
		\item 
		\sendPreservationSub: every state decorated by a send-originating global state must have at least one outgoing send transition: \\ 
		$\forall s \in Q_A.~(\underset{t \in d_{B}(s)} \bigcup d(t) \cap Q_{\GG,!} \neq \emptyset \,) \implies
		\exists x \in \Alphabet_{\procA,!},\,s' \in Q_A.~ s \xrightarrow{x} s' \in \delta_A$. \\
		\item 
		\receiveExhaustiveSub: every receive transition that is enabled in some global state decorating $s$ must be an outgoing transition from $s$: \\
		$\forall s \in Q_A.~\forall G \xrightarrow{x}\mathrel{\vphantom{\to}^*} G' \in \projerasuretrans.~G \in \underset{t \in d_B(s)} \bigcup d(t) \implies 
\exists s' \in Q_A.~s \xrightarrow{x} s' \in \delta_A$. \\
		\item \label{cond:final-states-final} 
		\finalVal: a reachable state is final if its decorating set contains a final global state: \\		
		$\forall s \in Q_A.~\underset{t \in d_{B}(s)} \bigcup d(t) \neq \emptyset \implies 
		(\underset{t \in d_{B}(s)} \bigcup d(t) \cap F_\GG \neq \emptyset \,) \implies s \in F_A$.
	\end{itemize}
\end{definition}

\newcommand{\subCSM}{$\CSMholew{A}$\xspace}
\newcommand{\superCSM}{$\CSMholew{B}$\xspace}

\noindent
We want to show the following equivalence to prove \cref{thm:equivalence-two}:
\begin{center}
	\CharacterizationTwo $\Leftrightarrow$ for all well-behaved contexts $\CSMhole$, \subCSM refines \superCSM. 
\end{center}

\begin{restatable}[Soundness of \CharacterizationTwo]{lemma}{soundnessCharacterizationTwo}
	\label{lm:soundness-characterization-two}
	If \CharacterizationTwo holds, then for all well-behaved contexts~$\CSMhole$, \subCSM refines
	\superCSM. 
\end{restatable}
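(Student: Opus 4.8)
The plan is to fix an arbitrary well-behaved context $\CSMhole$ and verify the three clauses of \cref{def:protocol-refinement} for $\CSMholew{A}$ against $\CSMholew{B}$, reusing the structure of the soundness argument for \CharacterizationTwoPrime with $\subsetproj{\GG}{\procA}$ replaced by $B$ and the decoration $d$ of \cref{def:state-decoration} replaced by the composite decoration $d_B$ of \cref{def:state-decoration-supertype}. By the standing hypothesis of \cref{thm:equivalence-two}, $\CSMholew{B}$ refines $\GG$; in particular it is deadlock-free, every finite trace $w$ of $\CSMholew{B}$ lies in $\pref(\lang(\GG))$, and $B$ itself satisfies \CharacterizationTwoPrime. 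The invariant that drives the whole proof is the following: on any word $w$ that is a trace of both $\CSMholew{A}$ and $\CSMholew{B}$, if $s$ and $t$ denote the states of $A$ and $B$ reached on $w \wproj_{\Alphabet_\procA}$, then $d(t) \neq \emptyset$ (because $w \wproj_{\Alphabet_\procA}$ is a valid projection-by-erasure path, as $\CSMholew{B}$ refines $\GG$), and therefore $t \in d_B(s)$. This is exactly what makes $d_B$ track reachable states faithfully, playing the role that non-emptiness of $d$ played in the previous proofs.

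First I would prove language inclusion by induction on finite traces, splitting it as before into Claim 1 (every trace of $\CSMholew{A}$ is a trace of $\CSMholew{B}$) and Claim 2 (every terminated trace of $\CSMholew{A}$ is terminated in $\CSMholew{B}$ and maximal in $\CSMholew{A}$); the infinite case of Claim 1 follows by determinism of $B$ and prefix-closure. In the inductive step of Claim 1 for an event $x$, the case where $x$ is active in a role $\neq \procA$ is immediate since those machines coincide in both CSMs. For $x \in \Alphabet_\procA$ I would pick a global run $\rho \in I(w)$ with unique splitting $\alpha \cdot G \xrightarrow{l} G' \cdot \beta$ for $\procA$ matching $w$; the invariant yields $t \in d_B(s)$ and $G \in d(t)$, hence $G \in \bigcup_{t' \in d_B(s)} d(t')$. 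If $x = \snd{\procA}{\procB}{\val}$ is enabled in $A$, then \sendDecSubVal gives $\transAnnoFunc_B(d_B(s) \xrightarrow{x} d_B(s')) = d_B(s)$, so $t$ has an outgoing $x$-transition in $B$; as sends are always enabled, $\CSMholew{B}$ performs $x$. If $x$ is a receive I would mirror the \CharacterizationTwoPrime receive argument, transported through the two levels of decoration: when the canonical label $\SyncToAsync(l) \wproj_{\Alphabet_\procA}$ is itself a receive, \receiveExhaustiveSub supplies a companion transition out of $s$ in $A$; when it is a send, \sendPreservationSub supplies a send transition out of $s$, and \sendDecSubVal together with a fresh global run reduces to the receive subcase. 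In either case I instantiate \receiveDecSubVal to discharge the availability side-condition $\snd{\procB}{\procA}{\val} \notin \semavail^{\procA}_{(G'' \ldots)}$ and apply \cref{lm:about-receive-decoration-validity} to conclude the canonical label equals $x$; since $G \in d(t)$ with $x$ enabled at $G$, the \receiveExhaustive clause of the \CharacterizationTwoPrime satisfied by $B$ then supplies $t \xrightarrow{x} t'$.

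For deadlock freedom I would establish Claim 2: let $w$ be terminated in $\CSMholew{A}$, so by Claim 1 it is a trace of the deadlock-free $\CSMholew{B}$. If $w$ were not terminated there, some role could move; for roles $\neq \procA$ the move applies verbatim in $\CSMholew{A}$, a contradiction, so only $\procA$ can witness non-termination. Taking $t = \vec{t}_\procA$ with $d(t) \neq \emptyset$ and some $G \in d(t)$, I would use \sendPreservationSub when $G \in Q_{\GG,!}$ and \receiveExhaustiveSub when $G \in Q_{\GG,?}$ to exhibit a matching outgoing transition of $s$ in $A$ (receives transfer because the shared channel already enables them), contradicting termination; \finalVal then forces $s \in F_A$ whenever $t$ is final, giving maximality in $\CSMholew{A}$. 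Language inclusion follows from Claims 1 and 2, and deadlock freedom is exactly Claim 2. Subprotocol fidelity follows as in the \CharacterizationTwoPrime proof: every CSM language is closed under $\interswap$ by \cite[Lemma 22]{DBLP:conf/concur/MajumdarMSZ21}, and since $\lang(\CSMholew{A}) \subseteq \lang(\CSMholew{B}) = \interswaplang(\SyncToAsync(S_B))$ for the witness $S_B \subseteq \lang(\semglobal(\GG))$ of $B$'s own subprotocol fidelity, the set $S$ of those $\sigma \in S_B$ with $\SyncToAsync(\sigma) \in \lang(\CSMholew{A})$ witnesses $\lang(\CSMholew{A}) = \interswaplang(\SyncToAsync(S))$.

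The main obstacle is the receive case of Claim 1 under the composite decoration. In \CharacterizationTwoPrime the availability hypothesis of \cref{lm:about-receive-decoration-validity} was read off directly from the projection-by-erasure states in $d(s)$, whereas here it must be discharged through \receiveDecSubVal, whose universally quantified global states range over $\bigcup_{t'' \in \transAnnoFunDest_B(d_B(s) \xrightarrow{x} d_B(s_2))} d(t'')$, the $\GG$-decorations of the $B$-states reached by the companion transition. The crux is checking that the particular $G''$ threaded through the chosen global run actually appears in this image; this needs determinism of $B$, the invariant $d(t) \neq \emptyset$, and the compatibility of $B$'s transitions with $\GG$ that comes from $B$ satisfying \CharacterizationTwoPrime. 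Getting this bookkeeping exactly right is the delicate part, after which the remainder is a faithful replay of the \CharacterizationTwoPrime soundness argument.
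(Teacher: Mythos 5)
Your proposal is correct and follows essentially the same route as the paper's proof: the paper likewise lifts the \CharacterizationTwoPrime soundness argument by replacing $d$ with $d_B$, splits the argument into the same two claims (trace inclusion by induction, then termination/maximality), uses \emph{Send Decoration Subtype Validity} for sends, \emph{Receive Subtype Exhaustivity} plus \emph{Receive Decoration Subtype Validity} and \cref{lm:about-receive-decoration-validity} for receives (with the send-labelled splitting dismissed by contradiction via \emph{Send Subtype Preservation}), exploits that $B$ itself satisfies the \CharacterizationTwoPrime-style conditions to transfer transitions between $B$ and $\GG$, and closes with the same appeal to~\cite[Lemma 22]{DBLP:conf/concur/MajumdarMSZ21} for subprotocol fidelity. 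The ``crux'' you flag — placing the threaded global state $G''$ in the image $\bigcup\{d(t') \mid t' \in \transAnnoFunDest_B(\cdot)\}$ — is discharged in the paper exactly with the ingredients you name ($B$'s own conditions supplying $t \xrightarrow{\SyncToAsync(l)\wproj_{\Alphabet_\procA}} t'$, determinism, and the definitions of $d$ and $d_B$ applied to the chosen run).
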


Predictably, the proof of soundness is directly adapted from the proof for \CharacterizationTwoPrime by applying suitable ``liftings'', and can be found in~\appendixRef{app:proofs}.

\begin{lemma}[Completeness of \CharacterizationTwo]
  If for all well-behaved contexts $\CSMhole$, 
	$\CSMholew{A}$ refines
	$\CSMholew{B}$, then \CharacterizationTwo holds. 
\end{lemma}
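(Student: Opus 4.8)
The plan is to prove the contrapositive: assuming \CharacterizationTwo is violated in one of its five conjuncts while $A$ refines $B$ in every well-behaved context, I derive a contradiction, mirroring the structure of \cref{lm:completeness-characterization-two-prime}. As there, I instantiate the universally quantified context with the canonical witness $\subsetprojholew{\cdot}$, consisting of the subset construction for every role other than $\procA$. This context is trivially well-behaved since $\subsetprojholew{\subsetproj{\GG}{\procA}} = \subsetprojCSM$, and the safety hypothesis on $B$ (with $\CSMhole = \subsetprojholew{\cdot}$) gives that $\subsetprojholew{B}$ refines $\subsetprojCSM$; the refinement hypothesis then yields that $\subsetprojholew{A}$ refines $\subsetprojholew{B}$, which I will contradict. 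I first record an auxiliary fact bridging the two decoration layers of \cref{def:state-decoration-supertype}: since every finite trace of $\subsetprojholew{B}$ is a trace of $\subsetprojCSM$ and $\lang(\GG)\wproj_{\Alphabet_\procA} = \lang(\subsetproj{\GG}{\procA})$, any state of $B$ reachable in $\subsetprojholew{B}$ has a nonempty decoration $d$ with respect to $\GG$; conversely every such state is reachable, by \cref{lm:hack-for-subtype} applied to $B$. Hence $d_B(s)$ is exactly the set of $B$-states that co-occur with $s$ along shared traces.

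The key technical device is a lifting of \cref{lm:hack-for-subtype} to the supertype setting: for every state $s$ of $A$, every $t \in d_B(s)$, every $G \in d(t)$, and every word $u$ witnessing $t \in d_B(s)$, there is a run $\alpha \cdot G$ of $\semglobal(\GG)$ with $\SyncToAsync(\trace(\alpha \cdot G))\wproj_{\Alphabet_\procA} = u$ such that $w \coloneq \SyncToAsync(\trace(\alpha \cdot G))$ is simultaneously a trace of $\subsetprojholew{A}$ ending with $A$ in $s$ and a trace of $\subsetprojholew{B}$ ending with $B$ in $t$. I obtain this by applying \cref{lm:hack-for-subtype} directly to $B$ (which assumes no validity condition on its argument), yielding the $\subsetprojholew{B}$ half; the $\subsetprojholew{A}$ half follows because $A$ and $B$ agree on the projected behavior $u$, so $\procA$ emits the same sequence of sends in both CSMs and the channel contents coincide at every prefix of $w$, making $w$ replayable in $\subsetprojholew{A}$ with $A$ deterministically reaching $s$.

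Armed with this lemma, I treat the five conjuncts in two groups, reusing the templates of \cref{lm:stepping-stone-send-receive-dec-val-complete,lm:stepping-stone-completeness-rest} but with the composite decoration $\bigcup_{t \in d_B(s)} d(t)$ in place of the plain decoration. For \sendDecSubVal and \receiveDecSubVal I refute language inclusion: a violation yields a state $s$ and a witness $t^\ast \in d_B(s)$ for which the offending transition is absent in $B$; the lifted lemma places $A$ in $s$ and $B$ in $t^\ast$ on a shared trace $w$, $A$ performs the transition, and the resulting trace lies in $\subsetprojholew{A}$ but not in $\subsetprojholew{B}$. The one genuinely new wrinkle, prominent in \sendDecSubVal, is the case where $B$ can fire the event from $t^\ast$ but into a state $t''$ outside $d_B(s')$: here the auxiliary fact rescues the argument, since $t''$ is reachable in $\subsetprojholew{B}$ along the extended trace, hence has nonempty decoration, forcing $t'' \in d_B(s')$ and contradicting the choice of $t^\ast$. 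For \sendPreservationSub, \receiveExhaustiveSub and \finalVal I refute deadlock freedom (and, in the final-state case, subprotocol fidelity): the lifted lemma reaches a configuration in which some decorating global state $G \in \bigcup_{t\in d_B(s)} d(t)$ is send-originating, mandates an unavailable receive, or is final while $s$ is not, and I replay the corresponding special-case argument to exhibit a deadlocked or non-conforming maximal trace in $\subsetprojholew{A}$.

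The main obstacle is the lifted reachability lemma, specifically composing the two decoration layers of \cref{def:state-decoration-supertype} and certifying that a single global run furnishes a trace lying in both $\subsetprojholew{A}$ and $\subsetprojholew{B}$ with the intended endpoints $s$ and $t$. The delicate point is that the word witnessing $t \in d_B(s)$ (which pins down $A$'s state) need not be the word witnessing $G \in d(t)$ (which pins down the global run); reconciling them relies on \cref{lm:hack-for-subtype} producing a run for \emph{any} word reaching $t$ in $B$, together with the reachable-iff-nonempty-decoration characterization above — which is also exactly the ingredient that closes the ``lands outside the decoration'' case for the send and receive conditions.
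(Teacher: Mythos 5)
Your proposal takes essentially the same route as the paper: prove the contrapositive, instantiate the universally quantified context with the witness $\subsetprojholew{\cdot}$, lift \cref{lm:hack-for-subtype} to the supertype setting (the paper's \cref{lm:hack-for-subtype-supertype}, proved exactly as you do — apply \cref{lm:hack-for-subtype} to $B$ with some $G \in d(t)$, then replay the trace in $\subsetprojholew{A}$ since the non-$\procA$ machines coincide), and then reuse the special-case refutations with the composite decoration $\bigcup_{t \in d_B(s)} d(t)$, refuting language inclusion for \sendDecSubVal/\receiveDecSubVal and deadlock freedom or subprotocol fidelity for the remaining three conditions. Your auxiliary observation that states of $B$ reachable in $\subsetprojholew{B}$ have nonempty $\GG$-decoration — used to close the case where $B$ can fire the event but into a state outside $d_B(s')$ — is a correct and genuinely needed detail that the paper leaves implicit when it declares the special-case constructions reusable and omits the proofs.
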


Again, we prove the modus tollens of this implication, and we again are required to find a witness well-behaved context $\CSMhole$, such that \subCSM does not refine \superCSM under the assumption of the negation of \CharacterizationTwo. 
In the special case where $B$ is the subset construction automaton, we observed that any state in $A$ with a non-empty decoration set with respect to $\GG$ is reachable by the CSM consisting of $A$ and the subset construction context, denoted $\subsetprojholew{A}$. 
We were therefore able to use $\subsetprojholew{\cdot}$ as the witness well-behaved context. 
A similar characterization is true in the general case: a~state in~$A$ is reachable by $\subsetprojholew{A}$ if it has a non-empty decoration set with respect to~$B$.
This in turn depends on the fact that we only label states in $A$ with states in $B$ that themselves have non-empty decorating sets with respect to $\GG$. 
The following lemma lifts~\cref{lm:hack-for-subtype} to the general problem setting: 
\begin{lemma}
	\label{lm:hack-for-subtype-supertype}
	Let $A, B$ be two state machines for $\procA$, such that for all well-behaved contexts $\CSMhole$, \superCSM refines $\GG$. 
	Let $s$ be a state in $A$, and let $t$ be a state in $B$ such that $t \in d_B(s)$.  
	Let $u \in \Alphabet^*_\procA$ be a word such that
	$s_0 \xrightarrow{u} \mathrel{\vphantom{\to}^*} s$ in $A$.
	Then, there exists a run $\alpha \cdot G$ of $\semglobalsync(\GG)$ such that 
	$\SyncToAsync(\trace(\alpha \cdot G)) \wproj_{\Alphabet_\procA} = u$, 
	$\SyncToAsync(\trace(\alpha \cdot G))$ is a trace in both $\subsetprojholew{A}$ and $\subsetprojholew{B}$ and in the CSM configuration reached on $\SyncToAsync(\trace(\alpha \cdot G))$, $A$ is in state $s$. 
\end{lemma}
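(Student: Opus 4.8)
The plan is to reduce \cref{lm:hack-for-subtype-supertype} to \cref{lm:hack-for-subtype} applied to the supertype $B$, and then to \emph{port} the resulting trace from $\subsetprojholew{B}$ to $\subsetprojholew{A}$, exploiting that these two CSMs share the same subset-construction context for every role other than $\procA$. First I would unfold the hypothesis $t \in d_B(s)$ via \cref{def:state-decoration-supertype}: it supplies a common witness word, which I take to be the given $u$, such that $s_0 \xrightarrow{u}\mathrel{\vphantom{\to}^*} s$ in $A$ and $t_0 \xrightarrow{u}\mathrel{\vphantom{\to}^*} t$ in $B$ simultaneously, together with the side condition $d(t) \neq \emptyset$. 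The purpose of that side condition, as the paper stresses, is that it certifies $t$ as a genuinely $\GG$-decorated (hence globally reachable) state of $B$, so that $u$ corresponds to an actual global behaviour rather than an arbitrary $\procA$-trace.

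Next I would invoke \cref{lm:hack-for-subtype} with $B$ in the role of the state machine (legitimate, since $B$ carries the $\GG$-decoration $d$ of \cref{def:state-decoration}): for the state $t$, a global state $G \in d(t)$, and the witness word $u$, it yields a run $\alpha \cdot G$ of $\semglobalsync(\GG)$ with $\SyncToAsync(\trace(\alpha \cdot G)) \wproj_{\Alphabet_\procA} = u$ such that $\sigma \is \SyncToAsync(\trace(\alpha \cdot G))$ is a trace of $\subsetprojholew{B}$ and $B$ is in state $t$ in the configuration reached on $\sigma$. This already discharges the ``trace in $\subsetprojholew{B}$'' part of the conclusion and, as a by-product, certifies that $u$ is a valid global $\procA$-projection prefix (it is the $\procA$-projection of the run $\alpha\cdot G$).

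It then remains to port $\sigma$ to $\subsetprojholew{A}$. The key observation is that $\subsetprojholew{A}$ and $\subsetprojholew{B}$ differ only in the component for $\procA$ ($A$ versus $B$), while all other roles use the identical subset-construction machines; moreover the channel contents encountered while replaying $\sigma$ are a function solely of the event sequence of $\sigma$, which is unchanged. Since $\sigma \wproj_{\Alphabet_\procA} = u$ and $A$ follows $u$ deterministically to $s$, every $\procA$-labelled transition demanded by $\sigma$ is available in $A$ (sends are always enabled once the local transition exists, and the receives match exactly the same channel heads as in $\subsetprojholew{B}$). Hence $\sigma$ is also a trace of $\subsetprojholew{A}$, and in the configuration it reaches $A$ is in state $s$; this establishes all three conjuncts of the conclusion at once.

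The hard part will be the witness alignment in the first two paragraphs: ensuring that a \emph{single} word $u$ (and hence a single global run $\alpha\cdot G$) simultaneously witnesses reachability of $s$ in $A$, reachability of $t$ in $B$, and the relation $G \in d(t)$, rather than three a priori distinct witness words forced apart by the nondeterminism of $\projerasure{\GG}{\procA}$. This is precisely where the design of $d_B$ in \cref{def:state-decoration-supertype} is needed: by decorating $s$ only with $B$-states $t$ that are themselves $\GG$-decorated, the shared witness of $t \in d_B(s)$ is pinned to a genuine global behaviour, and the determinism of $A$ and $B$ then fixes the states reached along it, which is what lets \cref{lm:hack-for-subtype} be instantiated with exactly this $u$. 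The remaining steps are routine adaptations of \cref{lm:hack-for-subtype}, with the supertype $B$ now playing the role that the subset construction played there.
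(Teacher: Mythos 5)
Your proposal matches the paper's proof essentially step for step: unfold $t \in d_B(s)$ via \cref{def:state-decoration-supertype} to get $d(t) \neq \emptyset$ and $t_0 \xrightarrow{u}\mathrel{\vphantom{\to}^*} t$ in $B$, pick $G \in d(t)$, apply \cref{lm:hack-for-subtype} to $B$ to obtain the run $\alpha \cdot G$ and the trace in $\subsetprojholew{B}$, and then transfer that trace to $\subsetprojholew{A}$ using the facts that the two CSMs agree on every role other than $\procA$ and that $A$ deterministically reaches $s$ on $u$. The witness-alignment subtlety you flag at the end is treated no more carefully in the paper—its proof likewise identifies the existential witness of \cref{def:state-decoration-supertype} with the given $u$—so your attempt is, if anything, more explicit about that point.
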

\begin{proof}
	From the fact that $t \in d_B(s)$ and~the definition of state decoration (\cref{def:state-decoration-supertype}), it holds that
	$d(t) \neq \emptyset$ and
	$t_0 \xrightarrow{u}\mathrel{\vphantom{\to}^*} t \in \delta_B$. 
	Let $G \in d(t)$. 
	We apply \cref{lm:hack-for-subtype} to obtain a run $\alpha \cdot G$ such that $\SyncToAsync(\trace(\alpha \cdot G)) \wproj_{\Alphabet_\procA} = u$, $\SyncToAsync(\trace(\alpha \cdot G))$ is a trace in $\subsetprojholew{B}$ and in the $\subsetprojholew{B}$ configuration reached on $\SyncToAsync(\trace(\alpha \cdot G))$, $B$ is in state $t$. 
	Because $s_0 \xrightarrow{u}\mathrel{\vphantom{\to}^*} s \in \delta_A$, and all non-$\procA$ state machines are identical from $\subsetprojholew{B}$ to $\subsetprojholew{A}$, it is clear that 
	$\SyncToAsync(\trace(\alpha \cdot G))$ is also a trace of $\subsetprojholew{A}$ and in the CSM configuration reached on $\SyncToAsync(\trace(\alpha \cdot G))$, $A$ is in state~$s$.
	\proofend
\end{proof}

\newcommand{\subsubsetprojCSM}{$\subsetprojholew{A}$\xspace}
\newcommand{\supersubsetprojCSM}{$\subsetprojholew{B}$\xspace}

Having found our witness well-behaved context $\subsetprojholew{\cdot}$, established \cref{lm:hack-for-subtype-supertype} to replace \cref{lm:hack-for-subtype}, and observed that the violation of each condition in \CharacterizationTwo likewise yields a state with a non-empty decoration set with respect to $B$, completeness then amounts to showing the existence of a $w \in \AlphAsync^*$ such that $w$ refutes subprotocol fidelity, language inclusion, or deadlock freedom.
Recall that the proofs for the necessity of \sendPreservation, \receiveExhaustive and \finalVal in the case where $B$ is the subset construction constructed a trace that refuted either subprotocol fidelity or deadlock freedom. 
These two properties are identical across both formulations of the problem, and therefore the construction can be wholly reused to show the necessity of \sendPreservationSub, \receiveExhaustiveSub and \finalSubVal. 

\begin{lemma}
If \subCSM violates \sendDecSubVal or \receiveDecSubVal, then it does not hold that for all well-behaved contexts $\CSMhole$, \subCSM refines \superCSM. 
\end{lemma}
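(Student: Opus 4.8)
The plan is to prove the contrapositive, using the canonical witness context $\subsetprojholew{\cdot}$, which is well-behaved because $\subsetprojholew{\subsetproj{\GG}{\procA}} = \subsetprojCSM$ implements, and hence refines, $\GG$; exhibiting a single well-behaved context under which refinement fails already falsifies the universal statement, so it suffices to show that \subsubsetprojCSM does not refine \supersubsetprojCSM. The argument reuses the construction of \cref{lm:characterization-one-send-receive-dec-val-complete} almost verbatim under the systematic liftings $d \rightsquigarrow d_B$, $\transAnnoFunc \rightsquigarrow \transAnnoFunc_B$, $\transAnnoFunDest \rightsquigarrow \transAnnoFunDest_B$, with \cref{lm:hack-for-subtype-supertype} replacing the reachability guarantee that, in the special case, was obtained from the assumption that $\subsetprojCSM$ refines $\mathcal{A}$. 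Concretely, from a violation of \sendDecSubVal or \receiveDecSubVal I first extract the offending transition(s) out of a state $s \in Q_A$ together with a witness state $t \in d_B(s)$ of the supertype that breaks the lifted validity condition; since $t \in d_B(s)$ we have $d(t) \neq \emptyset$, so pick $G \in d(t)$. Feeding $s$, $t$, and a word $u$ with $s_0 \xrightarrow{u} \mathrel{\vphantom{\to}^*} s$ in $A$ into \cref{lm:hack-for-subtype-supertype} yields a run $\alpha \cdot G$ of $\semglobalsync(\GG)$ whose asynchronous trace $\SyncToAsync(\trace(\alpha \cdot G))$ is simultaneously a trace of \subsubsetprojCSM and \supersubsetprojCSM, places $A$ in $s$ and $B$ in $t$, and satisfies $\SyncToAsync(\trace(\alpha \cdot G)) \wproj_{\Alphabet_\procA} = u$. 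This reachable common configuration is the point from which I replay the divergence.

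For \sendDecSubVal, the violated equality supplies a send transition $s \xrightarrow{\snd{\procA}{\procC}{\val}} s'$ and a $t \in d_B(s)$ that cannot match $\snd{\procA}{\procC}{\val}$ into $d_B(s')$ inside $B$. Since sends are always enabled in a CSM, $A$ extends the common trace by $\snd{\procA}{\procC}{\val}$, whereas $B$ at $t$ cannot follow suit into a decorated successor; the extended word is therefore a trace of \subsubsetprojCSM that \supersubsetprojCSM cannot perform, refuting language inclusion (or, in the degenerate case discussed below, subprotocol fidelity or deadlock freedom). For \receiveDecSubVal, the violation supplies a receive $s \xrightarrow{\rcv{\procB_1}{\procA}{\val_1}} s_1$, a competing transition $s \xrightarrow{x} s_2$ with $x \neq \rcv{\procB_1}{\procA}{\_}$, and a global state $G' \in d(t')$ for some $t' \in \transAnnoFunDest_B(d_B(s) \xrightarrow{x} d_B(s_2))$ with $\snd{\procB_1}{\procA}{\val_1} \in \semavail^{\procA}_{(G' \ldots)}$. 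Exactly as in \cref{lm:characterization-one-send-receive-dec-val-complete}, availability of $\snd{\procB_1}{\procA}{\val_1}$ at $G'$ lets me realize, via \cref{lm:hack-for-subtype-supertype} and the reasoning behind \cref{lm:about-receive-decoration-validity}, a reachable configuration in which $\val_1$ from $\procB_1$ sits at the head of $\procA$'s incoming channel while $A$ takes the competing action $x$ and later consumes $\val_1$ out of order; the resulting trace is accepted by \subsubsetprojCSM but not by \supersubsetprojCSM, again refuting language inclusion. In either case the well-behaved context $\subsetprojholew{\cdot}$ witnesses the failure of refinement.

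The main obstacle is the bookkeeping around the two-level decoration: $\transAnnoFunc_B$ and $\transAnnoFunDest_B$ now range over the opaque supertype $B$ rather than the projection-by-erasure automaton, while the available-message set $\semavail^{\procA}_{(G' \ldots)}$ is still computed on $\GG$ through the inner decoration $d$. I must verify that the witness state $t$ (resp.\ $t'$) produced by the violation is precisely the state in which \cref{lm:hack-for-subtype-supertype} parks $B$ on the common trace, and that the $\GG$-level run delivered by that lemma is consistent with the available message drawn from $d(t')$; this is exactly what the clause ``$d(t) \neq \emptyset$'' built into \cref{def:state-decoration-supertype} guarantees, and it is the reason the supertype-level decoration only ever labels $A$-states with genuinely reachable, protocol-consistent $B$-states. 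The one genuinely new wrinkle relative to the special case is the send sub-case in which $B$ does possess a $\snd{\procA}{\procC}{\val}$-transition out of $t$ but only into a state with empty $\GG$-decoration: there language inclusion is not immediately violated, and one must instead appeal to subprotocol fidelity (closure of every CSM language under $\interswap$) or to deadlock freedom to complete the argument.
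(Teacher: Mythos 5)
Your proposal is correct and takes essentially the same approach as the paper, whose own proof of this lemma is omitted and consists only of the remark that the construction from \cref{lm:characterization-one-send-receive-dec-val-complete} (already lifted in \cref{lm:stepping-stone-send-receive-dec-val-complete}) is reused with the well-behaved witness context $\subsetprojholew{\cdot}$ and with \cref{lm:hack-for-subtype-supertype} replacing the reachability assumption, producing a word that is a trace of $\subsetprojholew{A}$ but not of $\subsetprojholew{B}$ and thereby refuting language inclusion. Your sketch is, if anything, more detailed than the paper's: the empty-decoration wrinkle you flag in the send case is genuine, and it is dispatched most cleanly not by properties of the subtype's CSM but by the standing hypothesis that $\CSMholew{B}$ refines $\GG$ in every well-behaved context, since deadlock freedom of $\subsetprojholew{B}$ together with inclusion of its language in $\lang(\GG)$ (and invariance of per-role projections under $\interswap$) preclude reaching a $B$-state with empty $\GG$-decoration along any such trace.
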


The proofs for the necessity of \sendDecVal and \receiveDecVal, on the other hand, construct a word that is a trace in $\CSMholew{A}$ but not a trace in $\subsetprojholew{A}$. 
In the general case, we can show that the same construction is a trace in $\CSMholew{A}$ but not a trace in $\CSMholew{B}$. 
We omit the proofs to avoid redundancy.

\begin{lemma}
If $\CSM{A}$ violates \sendPreservationSub, \receiveExhaustiveSub, or \finalSubVal, then it does not hold that for all well-behaved contexts $\CSMhole$,
	\subCSM refines \superCSM. 
\end{lemma}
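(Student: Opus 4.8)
The plan is to prove the contrapositive: assuming $A$ violates one of the three conditions, I exhibit a single well-behaved context in which \subCSM fails to refine \superCSM. As in the completeness proof for \CharacterizationTwoPrime, I take the witness context to be $\subsetprojholew{\cdot}$, the CSM context that plugs in the subset construction $\subsetproj{\GG}{\procC}$ for every role $\procC \neq \procA$; this context is well-behaved because $\subsetprojholew{\subsetproj{\GG}{\procA}} = \subsetprojCSM$ implements, and hence refines, $\GG$. The decisive observation is that all three conditions are refuted through a failure of \emph{subprotocol fidelity} or \emph{deadlock freedom} of the subtype CSM $\subsetprojholew{A}$ alone (cf.\ \cref{def:protocol-refinement}); since neither property mentions $B$, the supertype plays no role in the argument, and the trace constructions from \cref{lm:stepping-stone-completeness-rest} transfer almost verbatim.

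Next I would extract the witness. From a violation of \sendPreservationSub (resp.\ \receiveExhaustiveSub, resp.\ \finalSubVal) I obtain a state $s \in Q_A$ together with an intermediate supertype state $t \in d_B(s)$ and a global state $G \in d(t)$ of the appropriate flavour: $G$ send-originating for $\procA$ with $s$ having no outgoing send; or an outgoing receive $G \xrightarrow{x} G'$ with $x \in \Alphabet_{\procA,?}$ that is absent from $s$; or $G \in F_\GG$ with $s \notin F_A$. The extra level of indirection relative to \CharacterizationTwoPrime is exactly that the global witness is now read off the composed decoration $\bigcup_{t \in d_B(s)} d(t)$ rather than $d(s)$. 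Because $t \in d_B(s)$, I can feed $s$, $t$, and the word $u$ reaching $s$ to \cref{lm:hack-for-subtype-supertype}, yielding a run $\alpha \cdot G$ of $\semglobal(\GG)$ whose asynchronous trace $\SyncToAsync(\trace(\alpha \cdot G))$ is a trace of both $\subsetprojholew{A}$ and $\subsetprojholew{B}$ and leaves $A$ in state $s$. This lemma is precisely the generalisation of \cref{lm:hack-for-subtype} needed to relocate the witness argument from the subset-construction supertype to an arbitrary $B$.

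With the run in hand, I would replay the three sub-arguments of \cref{lm:stepping-stone-completeness-rest}. For \sendPreservationSub, I consider the maximal extension of $\SyncToAsync(\trace(\alpha \cdot G))$ in $\subsetprojholew{A}$: if it is already stuck, the configuration is non-maximal (a send-originating $G$ is never final in $\semglobal(\GG)$), refuting deadlock freedom; otherwise the extension forces an $\procA$-send whose projection $A$ cannot match at $s$, contradicting determinism of $A$ together with subprotocol fidelity. For \receiveExhaustiveSub, the outgoing receive at $G$ is enabled in the reached configuration yet unavailable in $A$, again yielding a deadlock. For \finalSubVal, the trace reaches a final global configuration while $A$ remains in the non-final state $s$, so $\subsetprojholew{A}$ is stuck but not maximal, once more violating deadlock freedom.

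The main obstacle is not the individual sub-arguments, which are inherited, but justifying that this inheritance is sound: I must verify that the two-level decoration delivers a global witness $G$ that sits on a genuinely reachable run, which is guaranteed by the $d(t) \neq \emptyset$ side-condition built into \cref{def:state-decoration-supertype}, and that the refuted properties are genuinely insensitive to the choice of $B$. Establishing \cref{lm:hack-for-subtype-supertype} cleanly is therefore the crux; once the shared trace reaching $s$ is available, the remainder is bookkeeping over the three violated conditions.
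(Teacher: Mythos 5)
Your proposal is correct and takes essentially the same route as the paper: the same witness context $\subsetprojholew{\cdot}$, the same lifted reachability lemma (\cref{lm:hack-for-subtype-supertype}, replacing \cref{lm:hack-for-subtype}), the same extraction of a global witness from the composed decoration $\bigcup_{t \in d_B(s)} d(t)$, and the same key observation that subprotocol fidelity and deadlock freedom are properties of \subCSM alone, so the trace constructions of \cref{lm:stepping-stone-completeness-rest} transfer unchanged. This matches the paper's (deliberately terse) argument, which states that those constructions ``can be wholly reused'' for \sendPreservationSub, \receiveExhaustiveSub, and \finalSubVal.
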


 \section{Complexity Analysis}

We complete our discussion with a complexity analysis of the two considered problems, building on the characterizations established in \cref{thm:equivalence-one} and \cref{thm:equivalence-two}.

For the \ProblemOne problem, let $m$ be the size of $\mathcal{A}$ and $n$ the size of~$\GG$. Moreover, let $A_\procA$ be the local implementation of some role $\procA$ in $\mathcal{A}$. Observe that the sets $d_\GG(s)$ for each state $s$ of $A_\procA$ as well as the sets $\semavail^{\procA}_{(G' \ldots)}$ for each subterm $G'$ of~$\GG$ are at most of size $n$. It is then easy to see that \CharacterizationOne can be checked in time polynomial in $n$ and~$m$, provided that the sets $d_\GG(s)$ and $\semavail^{\procA}_{(G' \ldots)}$ are also computable in polynomial~time.

To see this for the sets $\semavail^{\procA}_{(G' \ldots)}$, observe that the definition expands each occurrence of a recursion variable in $\GG$ at most once. So the traversal takes time $O(n^2)$. For each traversed event $\msgFromTo{\procA}{\procB}{\val}$ in $\GG$, we need to perform a constant number of lookup, insertion, and deletion operations on a set of size at most $n$, which takes time $O(\log n)$. The time for computing $\semavail^{\procA}_{(G' \ldots)}$ is thus in $O(n^2 \log n)$.

Similarly, observe that the function $d_\GG$ can be computed for the local implementation of each role $A_\procA \in \Procs$ using a simple fixpoint loop. Each set $d_\GG(s)$ can be represented as a bit vector of size $n$, making all set operations constant time. The loop inserts at most $n$ subterms of $\GG$ into each $d_\GG(s)$, which takes time $O(mn)$ for all insertions. Moreover, for each $G$ inserted into a set $d_\GG(s)$ and each transition $s \xrightarrow{x}\mathrel{\vphantom{\to}} s'$ in~$A_\procA$, we need to compute the set $\{G' \mid G \xrightarrow{x}\mathrel{\vphantom{\to}^*} G' \in \projerasuretrans \}$ which is then added to $d_\GG(s')$. Computing these sets takes time $O(mn)$ for each $G$ and $s$.

Following analogous reasoning, we can also establish that \CharacterizationTwo is checkable in polynomial time.

\begin{theorem}
  The \ProblemOne and \ProblemTwo problems are decidable in polynomial time.
\end{theorem}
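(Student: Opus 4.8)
The plan is to obtain the theorem as a corollary of the two equivalence results, \cref{thm:equivalence-one} and \cref{thm:equivalence-two}, each of which reduces a decision problem to checking a syntactic characterization. By \cref{thm:equivalence-one}, deciding \ProblemOne amounts to testing whether \CharacterizationOne holds, and by \cref{thm:equivalence-two}, deciding \ProblemTwo amounts to testing whether \CharacterizationTwo holds. It therefore suffices to show that each characterization is checkable in time polynomial in $n$, the size of $\GG$, and $m$, the size of the candidate CSM (respectively the sizes of the subtype $A$ and supertype $B$).

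First I would bound the sizes of the auxiliary objects on which the conditions depend. Every decoration set $d_\GG(s)$ is a set of subterms of $\GG$ and hence has size at most $n$; every availability set $\semavail^{\procA}_{(G' \ldots)}$ likewise has size at most $n$; and for \CharacterizationTwo the supertype decoration $d_B(s)$ is bounded by the number of states of $B$. Representing each such set as a length-$n$ (respectively length-$|Q_B|$) bit vector makes membership, union, and intersection constant-time per operation, so that once these sets are available each conjunct of \CharacterizationOne and \CharacterizationTwo---being a bounded quantification over the $O(m)$ transitions and $O(n)$ subterms, with a constant number of set operations and availability lookups per instance---can be evaluated in polynomial time.

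It then remains to show that the auxiliary sets are themselves computable in polynomial time, which I would do along the lines sketched above: the availability sets via the structural recursion, whose $O(n^2 \log n)$ bound rests on the invariant that each recursion variable is unfolded at most once, and the decoration functions $d_\GG$ and $d_B$ via monotone fixpoint loops that insert at most $n$ subterms into each of the $O(m)$ decoration sets and compute the reachability sets $\{G' \mid G \xrightarrow{x}\mathrel{\vphantom{\to}^*} G' \in \projerasuretrans\}$ in polynomial time. The main obstacle to make fully rigorous is this last termination-and-cost argument for the availability recursion: its definition is genuinely recursive through recursion variables, so the crux is to verify that the unfolded-variable set $T$ caps the traversal at $O(n)$ expansions and thereby keeps the whole computation polynomial. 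With that in hand, combining the reductions of the first paragraph with the bounds of the second and third yields polynomial-time decidability of both problems.
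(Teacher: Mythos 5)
Your proposal is correct and follows essentially the same route as the paper: reduce both problems to checking \CharacterizationOne and \CharacterizationTwo via \cref{thm:equivalence-one,thm:equivalence-two}, bound the sets $d_\GG(s)$ and $\semavail^{\procA}_{(G'\ldots)}$ by $n$, compute the availability sets in $O(n^2\log n)$ using the once-per-recursion-variable unfolding, and compute the decoration functions by a bit-vector fixpoint loop. The "obstacle" you flag is handled in the paper exactly as you suggest---the set $T$ of unfolded variables caps the traversal---so nothing further is missing.
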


 \section{Related Work}
\label{sec:related}
Session types were first introduced in binary form by Honda in 1993~\cite{DBLP:conf/concur/Honda93}. 
Binary session types describe interactions between two participants, and communication safety of binary sessions amounts to channel duality. 
Binary session types were generalized to multiparty session types -- describing interactions between more than two participants -- by Honda, Yoshida and Carbone in 2008~\cite{DBLP:conf/popl/HondaYC08}, and the corresponding notion of safety was generalized from duality to multiparty consistency. 
Binary session types were inspired by and enjoy a close connection to linear logic~\cite{DBLP:journals/tcs/Girard87, DBLP:journals/jfp/Wadler14, DBLP:journals/mscs/CairesPT16}.
Horne generalizes this connection to multiparty session types and non-commutative extensions of linear logic~\cite{DBLP:conf/concur/Horne20}. The connection between multiparty session types and logic is also explored in \cite{DBLP:conf/concur/CarboneLMSW16, DBLP:conf/forte/CairesP16, DBLP:journals/acta/CarboneMSY17}. 
MSTs have since been extensively studied and widely adopted in practical programming languages; we refer the reader to \cite{DBLP:conf/sfm/CoppoDPY15} for a comprehensive survey. 

\myparagraph{Session type syntax.}
Session type frameworks have enjoyed various extensions since their inception. 
In particular, the choice operator for both global and local types has received considerable attention over the years. 
MSTs were originally introduced as global types, with a \emph{directed} choice operator that restricted a sender to sending different messages to the same recipient. \cite{DBLP:journals/corr/abs-1203-0780} and \cite{DBLP:conf/concur/MajumdarMSZ21} relax this restriction to \emph{sender-driven choice}, which allows a sender to send different messages to different recipients, and increases the expressivity of global types. Our paper targets global types with sender-driven choice. 
For local types, a direct comparison can be drawn to the $\pi$-calculus, for which \emph{mixed choice} was shown to be strictly more expressive than \emph{separate choice}~\cite{DBLP:journals/mscs/Palamidessi03}. 
Mixed choices allow both send and receive actions, whereas separate choices consist purely of either sends or receives. 
\cite{DBLP:conf/cav/LiSWZ23} showed that any global type with sender-driven choice can be implemented by a CSM with only separate choice. 
Mixed choice for binary local types was investigated in~\cite{DBLP:journals/tcs/CasalMV22}, although~\cite{DBLP:journals/corr/abs-2209-06819} later showed that this variant falls short of the full expressive power of mixed choice $\pi$-calculus, and instead can only express separate choice $\pi$-calculus. 
Other communication primitives have also been studied, such as channel delegation~\cite{DBLP:conf/popl/HondaYC08, DBLP:conf/esop/HondaVK98, DBLP:journals/tcs/CastellaniDGH20}, dependent predicates~\cite{DBLP:conf/ppdp/ToninhoCP11,DBLP:/conf/ppdp/ToninhoCP21}, parametrization~\cite{DBLP:journals/corr/abs-1208-6483, DBLP:journals/scp/CharalambidesDA16} and data refinement~\cite{DBLP:journals/pacmpl/ZhouFHNY20}.

\myparagraph{Session type semantics.} 
MSTs were introduced in~\cite{DBLP:conf/popl/HondaYC08} with a process algebra semantics. The connection to CSMs was established in~\cite{DBLP:conf/esop/DenielouY12}, which defines a class of CSMs whose state machines can be represented as local types, called \emph{Communicating Session Automata} (CSA). CSAs inherit from the local types they represent restrictions on choice discussed above, ``tree-like'' restrictions on the structure (see~\cite{DBLP:conf/ecoop/Stutz23} for a characterization), and restrictions on outgoing transitions from final states. The CSM implementation model in our work assumes none of the above restrictions, and is thus true to its name.

\myparagraph{Session subtyping.} 
Session subtyping was first introduced by~\cite{DBLP:journals/acta/GayH05} in the context of the $\pi$-calculus, which was in turn inspired by Pierce and Sangiorgi's work on subtyping for channel endpoints~\cite{DBLP:journals/mscs/PierceS96}. 
The session types literature distinguishes between two notions of subtyping based on the network assumptions of the framework: \emph{synchronous} and \emph{asynchronous subtyping}.
Both notions respect Liskov and Wing's substitution principle~\cite{DBLP:journals/toplas/LiskovW94}, but differ in the guarantees provided. 
We discuss each in turn. 

Synchronous subtyping follows the notions of covariance and contravariance introduced by \cite{DBLP:journals/acta/GayH05}, and checks that a subtype contains fewer sends and more receives than its supertype. 
For binary synchronous session types, Lange and Yoshida~\cite{DBLP:conf/tacas/LangeY16} show that subtyping can be decided in quadratic time via model checking of a characteristic formulae in the modal $\mu$-calculus. 
For multiparty synchronous session types, Ghilezan et al.~\cite{DBLP:journals/jlamp/GhilezanJPSY19} present a precise subtyping relation that is universally quantified over all contexts, and restricts the local type syntax to directed choice.
As mentioned in \cref{sec:intro}, \cite{DBLP:journals/jlamp/GhilezanJPSY19}, their subtyping relation is incomplete when generalized to asynchronous multiparty sessions with directed choice.
As discussed in \cref{sec:motivation}, their subtyping relation is further incomplete when generalized to asynchronous multiparty sessions with mixed choice, due to the ``peculiarity [...] that, apart from a pair of inactive session types, only inputs and outputs from/to a same participant can be related''~\cite{DBLP:journals/jlamp/GhilezanJPSY19}.
The complexity of the subtyping relation in~\cite{DBLP:journals/jlamp/GhilezanJPSY19} is not mentioned.

Unlike subtyping relations for synchronous sessions which preserve language inclusion, subtyping relations for asynchronous sessions instead focus on deadlock-free optimizations that permute roles' local order of send and receive actions, also called \emph{asynchronous message reordering}, or AMR~\cite{DBLP:conf/ppopp/CutnerYV22}.
First proposed for binary sessions by Mostrous and Yoshida~\cite{DBLP:conf/tlca/MostrousY09}, and for multiparty sessions by Mostrous et al.~\cite{DBLP:conf/esop/MostrousYH09}, this notion of subtyping does not satisfy subprotocol fidelity in general; indeed, in some cases, the set of behaviors recognized by a supertype is entirely disjoint from that of its subtype~\cite{DBLP:journals/lmcs/BravettiCLYZ21}. 
Asynchronous subtyping was shown to be undecidable for both binary and multiparty session types~\cite{DBLP:conf/fossacs/LangeY17,DBLP:journals/tcs/BravettiCZ18}.
Existing works are thus either restricted to binary protocols \cite{DBLP:conf/fossacs/LangeY17, DBLP:journals/lmcs/BravettiCLYZ21, DBLP:journals/tcs/BravettiCZ18, DBLP:conf/coordination/BacchianiBLZ21}, prohibit non-deterministic choice involving multiple receivers \cite{DBLP:journals/pacmpl/GhilezanPPSY21,DBLP:conf/fossacs/BravettiLZ21}, or make strong fairness assumptions on the network \cite{DBLP:conf/fossacs/BravettiLZ21}.

The connection between session subtyping and behavioral contract refinement has been studied only in the context of binary session types, and is thus out of scope of our work. We refer the reader to~\cite{DBLP:journals/jlamp/GhilezanJPSY19} for a survey.

\subsubsection*{Acknowledgements}
The authors thank Damien Zufferey for discussions and feedback.
This work is funded in parts by the National Science Foundation under grant CCF-2304758.
Felix Stutz was supported by the Deutsche Forschungsgemeinschaft project 389792660 TRR 248—CPEC.

\phantomsection\label{paper-last-page}

\ifoptionfinal
{}
{
    \clearpage
}

\bibliographystyle{splncs04}

\iftoggle{techrep}{
    \clearpage
    \appendix
    \section{Appendix}
\label{sec:appendix}

\subsection{Indistinguishability Relation \cite{DBLP:conf/concur/MajumdarMSZ21}}
\label{app:indist-rel}
We define a family of \emph{indistinguishability relations}
${\interswap_i} \subseteq \AlphAsync^* \times \AlphAsync^*$ for $i\geq 0$
as follows.
For all $w\in\Alphabet^*$, we have $w \interswap_0 w$.
For $i=1$, we define:
\vspace{-1ex}
\begin{enumerate}[label=(\arabic*)]
	\item
	If $\procA ≠ \procC$, then
	$
	w.\snd{\procA}{\procB}{\val}.\snd{\procC}{\procD}{\val'}.u
	\; \interswap_{1} \;
	w.\snd{\procC}{\procD}{\val'}.\snd{\procA}{\procB}{\val}.u
	$.
	
	\item
	If $\procB ≠ \procD$, then
	$
	w.\rcv{\procA}{\procB}{\val}.\rcv{\procC}{\procD}{\val'}.u
	\; \interswap_{1} \;
	w.\rcv{\procC}{\procD}{\val'}.\rcv{\procA}{\procB}{\val}.u
	$.
	
	\item
	If $\procA ≠ \procD \land (\procA ≠ \procC ∨ \procB ≠ \procD)
	$, then
	$
	w.\snd{\procA}{\procB}{\val}.\rcv{\procC}{\procD}{\val'}.u
	\; \interswap_{1} \;
	w.\rcv{\procC}{\procD}{\val'}.\snd{\procA}{\procB}{\val}.u
	$.
	\item
	If $\card{w \wproj_{\snd{\procA}{\procB}{\_}}} >
	\card{w \wproj_{\rcv{\procA}{\procB}{\_}}}$,
	then
	$
	w.\snd{\procA}{\procB}{\val}.\rcv{\procA}{\procB}{\val'}.u
	\; \interswap_{1} \;
	w.\rcv{\procA}{\procB}{\val'}.\snd{\procA}{\procB}{\val}.u
	$.
\end{enumerate}
Let $w, w', w''$ be sequences of events s.t.~$w \interswap_1 w'$ and $w' \interswap_i w''$ for some~$i$.
Then, $w \interswap_{i+1} w''$.
We define $w \interswap u$ if $w \interswap_n u$ for some $n$.

It is easy to see that $\interswap$ is an equivalence relation.
Define $u \preceq_\interswap v$ if there is $w\in\Sigma^*$ such that $u.w \interswap v$.
Observe that $u \interswap v$ iff
$u \preceq_\interswap v$ and $v \preceq_\interswap u$.

For infinite words $u, v\in\Sigma^\omega$, we define $u \preceq_\interswap^\omega v$ 
if for each finite prefix $u'$ of $u$, there is a finite prefix $v'$ of $v$ such that
$u' \preceq_\interswap v'$.
Define $u \interswap v$ iff $u \preceq_\interswap^\omega v$ and $v\preceq_\interswap^\omega u$.

We lift the equivalence relation $\interswap$ on $\Sigma^\infty$ to languages:
\[
\interswaplang(L) = \left\{ w' \mid \bigvee
\begin{array}{l}
	w' \in \Alphabet^* \land ∃ w ∈ \Alphabet^*. \; w \in L \text{ and } w' \interswap w \\
	w' ∈ \Alphabet^ω \land \exists w \in \Alphabet^\omega. \; w \in L \text{ and } w' \preceq_\interswap^\omega w
\end{array} \right\}
\]
For the infinite case, we take the downward closure w.r.t.~$\preceq_\interswap^\omega$.
Notice that the closure operator is asymmetric.
Consider the protocol $(\snd{\procA}{\procB}{\val}.\rcv{\procA}{\procB}{\val})^ω$.
Since we do not make any fairness assumption on scheduling, we need to include in the closure the execution where only the sender is scheduled, i.e., 
\[
	(\snd{\procA}{\procB}{\val})^ω \preceq_\interswap^\omega (\snd{\procA}{\procB}{\val}.\rcv{\procA}{\procB}{\val})^ω
	\enspace .
\]

\subsection{Proofs}
\label{app:proofs}

\impliesLocalLangIncl*
\begin{proof}
	First, we show that every trace in $\lang(\GG) \wproj_{\Alphabet_{\procA}}$ is a trace in $A_\procA$. 
	Let $u$ be a trace in $\lang(\GG) \wproj_{\Alphabet_{\procA}}$.
We proceed by induction on the length of $u$.
	In the base case, $u = \emptystring$, and $\emptystring$ is trivially a trace of every state machine.
In the induction step, let $ux$ be a prefix in $\lang(\GG) \wproj_{\Alphabet_{\procA}}$. 
From the induction hypothesis, we know that $u$ is a prefix in $\lang(A_\procA)$. 
	Let $s \in Q_\procA$ be the state reached on $u$ in $A_\procA$. 
	Because $ux$ is a prefix in $\lang(\GG) \wproj_{\Alphabet_{\procA}}$, there exists a run \mbox{$q_{0,\GG} \xrightarrow{u}\mathrel{\vphantom{\to}^*} G \xrightarrow{x}\mathrel{\vphantom{\to}^*} G'$} in the projection by erasure automaton for $\procA$.
	By the definition of state decoration, it holds that $G \in d_\GG(s)$. 
	By \transitionExhaustive, it holds that there exists a state $s' \in Q_\procA$ such that $s \xrightarrow{x} s' \in \delta_\procA$, and therefore $ux$ is also a prefix in $\lang(A_\procA)$.
	This concludes our proof by induction that every prefix in $\lang(\GG) \wproj_{\Alphabet_{\procA}}$ is a prefix in $\lang(A_\procA)$. 
	
	Let $w \in \lang(\GG) \wproj_{\Alphabet_{\procA}}$. 
	To show that $w \in \lang(A_\procA)$ for $w \in \AlphAsync^*$, it remains to show that $w$ reaches a final state in $A_\procA$. 
	Let $G'' \in F_\GG$ be the state reached on $w$ in the projection by erasure automaton, and let $s''$ be the state reached on $w$ in $A_\procA$. 
	By the state decoration function it holds that $G'' \in d_\GG(s'')$, and therefore by \finalVal, $s'' \in F_\procA$ and $w$ is a word in $\lang(A_\procA)$.
	The case for $w \in \AlphAsync^\infty$ follows from the fact that every trace of $\lang(\GG) \wproj_{\Alphabet_{\procA}}$ is a trace of $\lang(A_\procA)$ and the fact that $A_\procA$ is deterministic. 
	\proofend
\end{proof}

\aboutReceiveDecorationValidity*
\begin{proof}
	Suppose by contradiction that $x \neq \SyncToAsync(l) \wproj_{\Alphabet_{\procB}}$. 
	By the definition of unique splittings, $\procB$ is the active role in $l$. 
	We proceed by case analysis on $l$: (1) either $l$ is of the form $\procC \xrightarrow{} \procB: \val'$, with $\procC$ sending $\procB$ a different message $\val' \neq \val$, or (2) $l$ is of the form $\procD \xrightarrow{} \procB: \val$, with a different role $\procD \neq \procC$ sending $\procB$ a message, or $l$ is of the form $\procB \xrightarrow \_: \_$, with $\procB$ sending a message. We prove a contradiction in each case.

	First, we establish a claim that is used in both cases, and relies only on the fact that $\run$ is consistent with $w$ and $wx$ is a trace of $\mathcal{A}$. 
	
Let $\run_\procB$ denote the largest consistent prefix of $\run$ for $\procB$; it is clear that 
	$\run_\procB = \alpha \cdot G$. Formally,
	\[
	\run_\procB = max\{\run'~|~\run' \leq \run ~\land~ \bigl(
	\SyncToAsync(\trace(\run'))
	\bigr)
	\wproj_{\Alphabet_\procB} \preforder  w\wproj_{\Alphabet_\procB}
	\}\enspace.
	\]
	Let $\run_\procC$ be defined analogously.
	
	\noindent
	\textit{Claim:} $\run_\procB < \run_\procC$.
	Intuitively, $\procA$ is ahead of $\procB$ in $\run$ due to the half-duplex property of CSMs and the fact that $\procC$ is the sender. 
	Formally, \cite[Lemma~19]{DBLP:conf/concur/MajumdarMSZ21} implies $\xi(\procC,\procB) = u$ where
	$\mathcal{V}(w \wproj_{\snd{\procC}{\procB}{\_}}) = \MsgVals(w \wproj_{\rcv{\procC}{\procB}{\_}}).u$.
	Because $\xi(\procC, \procB)$ contains at least $\val$ by assumption,
	$|\MsgVals(w\wproj_{\snd{\procC}{\procB}{\_}})| > |\MsgVals(w \wproj_{\rcv{\procC}{\procB}{\_}})|$.
	Because 
	$\MsgVals(w \wproj_{\rcv{\procC}{\procB}{\_}}) < \MsgVals(w\wproj_{\snd{\procC}{\procB}{\_}})$ 
	and traces of CSMs are channel-compliant \cite[Lemma~19]{DBLP:conf/concur/MajumdarMSZ21}, it holds that
	$\run_\procC$ contains all 
	$\card{\MsgVals(w \wproj_{\rcv{\procC}{\procB}{\_}})}$
	transition labels of the form  
	$\procC \xrightarrow{} \procB: \_$
	that are contained in $\run_\procC$, 
	plus at least one more of the form 
	$\procC \xrightarrow{} \procB: \val$. 
	Because both $\run_\procB$ and $\run_\procC$ are prefixes of $\run$, 
	it must be the case that 
	$\run_\procB < \run_\procC$. 
	This concludes the proof of the above claim. 
	
	\noindent
	\textit{Case:} $l = \procC \xrightarrow{} \procB: \val'$ and $\val' \neq \val$. 
	We discharge this case by showing a contradiction to the fact that $\val$ is at the head of the channel between $\procC$ and $\procB$. 
	
	Because $\alpha \cdot G \leq \run_\procB$ and $\run_\procB < \run_\procC$ from the claim above, it must be the case that
	$\alpha \cdot G \xrightarrow{l} G' \leq  \run_\procC$
	and
	$\snd{\procC}{\procB}{\val'}$ is in $w \wproj_{\Alphabet_{\procC}}$.
	From \cite[Lemma 19]{DBLP:conf/concur/MajumdarMSZ21}, it follows that
	$\mathcal{V}(w \wproj_{\snd{\procC}{\procB}{\_}}) = \MsgVals(w \wproj_{\rcv{\procC}{\procB}{\_}}).\val'.u'$ and
	$\xi(\procC,\procB)  =  \val'.u'$, i.e. $\val'$ is at the head of the channel between $\procC$ and $\procB$. 
	We reach a contradiction. 
	
	\noindent
	\textit{Case:} $\forall \val'.~l \neq \procC \xrightarrow{} \procB: \val'$. 
	It follows that $\SyncToAsync(l) \wproj_{\Alphabet_{\procB}} \neq \rcv{\procC}{\procB}{\val'}$ for any $\val'$. 
We discharge this case by showing that
	\[
	\snd{\procC}{\procB}{\val} \in \semavail^{\procB}_{(G' \ldots)}\enspace.
	\]
	Recall that $\alpha \cdot G \xrightarrow{l} G' \leq \run_\procC$. 
	Then, there exists a transition labeled $\procC \xrightarrow{} \procB: \val$ that occurs in the suffix $G' \cdot \beta$. 
	Let 
	$G_0 \xrightarrow{\procC \xrightarrow{} \procB: \val} G_0'$
	be the earliest occurrence of such a transition in the suffix, then:
	\[
	\run_\procC = \alpha \cdot G \xrightarrow{l} G' \ldots G_0 \xrightarrow{\procC \xrightarrow{} \procB: \val} G_0'\dots\enspace.
	\]
	Note that $G_0$ must be a syntactic subterm of $G'$. 
	In order for $\snd{\procC}{\procB}{\val} \in \semavail^{\procB}_{(G' \ldots)}$ to hold, it suffices to show that $\procC \notin \blockedset$ in the recursive call to $\semavail^{\blockedset}_{(G' \dots)}$. 
	We argue this from the definition of $\semavail$ and the fact that $\run_\procB = \alpha \cdot G$.
	Suppose for the sake of contradiction that 
	$\procC \in \blockedset$.
	Because $\semavail$ only adds receivers of already blocked senders to $\blockedset$ and $\semavail^{\procB}_{(G' \ldots)}$ starts with $\blockedset=\{\procB\}$, there must exist a chain of message exchanges $\procD_{i+1} \xrightarrow{} \procD_i: \val_{i}$ in $G'$ with $1 \leq i < n$, $\procB=\procD_{n}$, and $\procC=\procD_1$. That is, $G' \cdot \beta$ must be of the form
	\[
	G' \dots G_{n-1} \xrightarrow{\procB \xrightarrow{} \procD_{n-1}: \val_{n-1}} G_{n-1}' \ldots G_{1} \xrightarrow{\procD_{2} \xrightarrow{} \procC: \val_{1}} G_{1}' \ldots G_{0} \xrightarrow{\procC \xrightarrow{} \procB: m} G_{0}' \ldots\enspace.
	\]
	Let $\val_0=\val$ and $\procD_0=\procB$. We show by induction over $i$ that for all $i \in [1,n]$
	\[
	\alpha \cdot G \xrightarrow{l} G' \dots G_{i} \xrightarrow{\procD_i \xrightarrow{} \procD_{i-1}: \val_{i-1}} G_{i}' ~\preforder~ \run_{\procD_{i}}\enspace.
	\]
	We then obtain the desired contradiction with the fact that $\run_{\procD_n} = \run_{\procB} =  \alpha \cdot G'$. 
	The base case of the induction follows immediately from the construction. 
	For the induction step, assume that
	\[
	\alpha \cdot G \xrightarrow{l} G' \dots G_{i} \xrightarrow{\procD_i \xrightarrow{} \procD_{i-1}: \val_{i-1}} G_{i}' ~\preforder~ \run_{\procD_{i}}\enspace.
	\]
	From the definition of $\run_{\procD_i}$ and the fact that $\procD_{i}$ is the active role in $\rcv{\procD_{i+1}}{\procD_{i}}{\val_i}$, it follows that $\rcv{\procD_{i+1}}{\procD_{i}}{\val_i} \in w$. Hence, we must also have $\snd{\procD_{i+1}}{\procD_{i}}{\val_i} \in w$. Since $\procD_{i+1}$ is the active role in $\snd{\procD_{i+1}}{\procD_{i}}{\val_i}$, we can conclude
	\[
	\alpha \cdot G \xrightarrow{l} G' \dots G_{i} \xrightarrow{\procD_{i+1} \xrightarrow{} \procD_{i}: \val_{i}} G_{i+1}' ~\preforder~ \run_{\procD_{i+1}}\enspace.
	\]
	This concludes the proof of \cref{lm:about-receive-decoration-validity}.
	\proofend
\end{proof}

\characterizationOneTransitionExhaustiveFinalValComplete*
\begin{proof}
	From the negation of \transitionExhaustive, we find a witness trace $v$ such that $v$ is a trace in $\subsetprojCSM$ but not a trace in $\mathcal{A}$, thus contradicting the fact that $\subsetprojCSM$ refines $\mathcal{A}$. 
	Let $\procA$ be a role that violates \transitionExhaustive.
	Let $s$ be a state such that there exists $G \in d_\GG(s)$ with $G \xrightarrow{x} \mathrel{\vphantom{\to}^*} G' \in \projerasuretrans$ but no transition outgoing from $s$ labeled with $x$. 
	By the definition of state decoration, there exists $u \in \Alphabet^*_\procA$ such that $A_\procA$ reaches $s$ on $u$ from its initial state, and the projection by erasure automaton for $\procA$ reaches $G$ on $u$ from its initial state. 
Because $G \xrightarrow{x} \mathrel{\vphantom{\to}^*} G' \in \projerasuretrans$, 
	it holds that $q_{0,\GG} \xrightarrow{u} \mathrel{\vphantom{\to}^*} G \xrightarrow{x} \mathrel{\vphantom{\to}^*} G' \in \projerasuretrans$ is a run in the projection by erasure automaton for $\procA$. 
	Let $\run$ denote this run, and let $w = \SyncToAsync(\trace(\run))$. 
	Then, it holds that $ux \leq w \wproj_{\Alphabet_{\procA}}$. 
Because $\subsetprojCSM$ implements $\GG$, $w$ is a trace of $\subsetprojCSM$. 
	Consequently, $w \wproj_{\Alphabet_{\procA}}$ is a prefix of $A_\procA$. 
	Because $ux$ is a prefix of $w \wproj_{\Alphabet_{\procA}}$, $ux$ is thus also a prefix of $A_\procA$. 
	Because $A_\procA$ is deterministic, $A_\procA$ reaches $s$ on $u$. 
	However, there does not exist an outgoing transition labeled with $x$ from $s$, and we reach a contradiction to the fact that $ux$ is a prefix of $A_\procA$.

	From the negation of \finalVal, we find a witness trace $v$ that is maximally terminated in $\subsetprojCSM$, but not maximally terminated in $\mathcal{A}$, thus contradicting the fact that $\subsetprojCSM$ refines $\mathcal{A}$. 
	Let $\procA$ be a role that violates \finalVal.
	Let $s$ be a state such that there exists $G \in d_\GG(s)$ with $G \in F_\GG$ but $s \notin F_\procA$. 
	Let $w \in \lang(\GG)$ such that $w \wproj_{\Alphabet_{\procA}}$ reaches $G$ in the projection by erasure automaton on $w \wproj_{\Alphabet_{\procA}}$; such a word is guaranteed to exist. 
	Because $\subsetprojCSM$ refines $\mathcal{A}$, $w \in \lang(\mathcal{A})$. 
	Because $A_\procA$ is deterministic, $A_\procA$ reaches $s$ on $w \wproj_{\Alphabet_{\procA}}$.
In other words, in the $\mathcal{A}$ configuration reached on $w$, $A_\procA$ is in state $s$. 
	However, $s \notin F_\procA$. 
	Therefore, $w$ is not terminated in $\mathcal{A}$ and $w \notin \lang(\mathcal{A})$. 
	We reach a contradiction. 
	\proofend
\end{proof}

\characterizationOneSendReceiveDecValComplete*
\begin{proof}
	Because $\GG$ is implementable, $\subsetproj{\GG}{\procA}$ satisfies Send Validity and Receive Validity~\cite[Theorem 7.1]{DBLP:conf/cav/LiSWZ23}. 
	For each condition, we assume the violation of the condition and the fact that $\mathcal{A}$ and $\subsetprojCSM$ are equivalent, and show a contradiction to Send Validity and Receive Validity in turn. 
	
	Let $\procA$ be a role that violates \sendDecVal. 
	Let $s$ be a state and $s \xrightarrow{\snd{\procA}{\procB}{\val}} s'$ be a transition in $A_\procA$ such that
	\[
	\transAnnoFunc(d(s)  \xrightarrow{\snd{\procA}{\procB}{\val}} d(s')) \neq d(s) \enspace.
	\]
	Let $G$ be a state in $d(s) \setminus \transAnnoFunc(d(s)  \xrightarrow{\snd{\procA}{\procB}{\val}} d(s'))$.
	Such a $G$ exists by the negation of \sendDecVal.
	Let $\alpha \cdot G$ be a run in $\semglobalsync(\GG)$; such a run must exist by the fact that $G$ is a syntactic subterm of $\GG$.
	Let $w = \SyncToAsync(\trace(\alpha \cdot G))$. 
	Because $w \in \text{pref}(\lang(\GG))$, it holds that $w$ is a trace of $\subsetprojCSM$. 
	Because $\subsetprojCSM$ refines $\mathcal{A}$ by assumption, $w$ is a trace in $\mathcal{A}$, and there exists an $\mathcal{A}$ configuration reached on $w$ in which $A_\procA$ is in state $s$. 
	Because send actions are always enabled, $wx$ is a trace in $\mathcal{A}$. 
	Now because $\mathcal{A}$ refines $\subsetprojCSM$, $wx$ is also a trace in $\subsetprojCSM$. 
	By definition, let $t$ be the state of $\procA$ in the $\subsetprojCSM$ configuration reached on~$w$.
	Because $w = \SyncToAsync(\trace(\alpha \cdot G))$, it holds that $w \wproj_{\Alphabet_{\procA}} \in \text{pref}(\lang(\subsetproj{\GG}{\procA}))$, and by~\cref{def:subset-construction}, it holds that $G \in t$. 
	Then, there exists a $t'$ such that $t \xrightarrow{x} t'$ is a transition in $\subsetproj{\GG}{\procA}$. 
	We find a contradiction to Send Validity for this transition by using $G$ as a witness.

	Let $\procA$ be a role that violates \receiveDecVal. 
	Let $s$ be a state and let 
	$s \xrightarrow{\rcv{\procB_1}{\procA}{\val_1}} s_1$, 
	$s \xrightarrow{x} s_2$ be two transitions in $A_\procA$, with 
	$G_2 \in \transAnnoFunDest(d(s) \xrightarrow{x} d(s_2))$ 
	such that 
	\[
	x \neq \rcv{\procB_1}{\procA}{\_}
	\quad \land \quad
	\snd{\procB_1}{\procA}{\val_1} \in \semavail^{\procA}_{(G_2 \ldots)} \enspace.
	\]
	Following the construction in\cite[Theorem 7.1]{DBLP:conf/cav/LiSWZ23}, we can construct a witness trace~$w$ in $\mathcal{A}$ such that both $w \cdot \rcv{\procB_1}{\procA}{\val_1}$ and $w \cdot x$ are traces in $\mathcal{A}$.
	Because $\mathcal{A}$ refines $\subsetprojCSM$ by assumption, both $w \cdot \rcv{\procB_1}{\procA}{\val_1}$ and $w \cdot x$ are also traces in $\subsetprojCSM$. 
	Let $t$ be the state reached by $\subsetprojCSM$ on $w$. 
	Then, there must exist two transitions $t \xrightarrow{\rcv{\procB_1}{\procA}{\val_1}} t'$ and $t \xrightarrow{x} t''$ in $\subsetproj{\GG}{\procA}$. 
	Either $x \in \Alphabet_{\procA,!}$ and No Mixed Choice \cite[Corollary 5.5]{DBLP:conf/cav/LiSWZ23} is violated, or $x \in \Alphabet_{\procA,?}$ and Receive Validity is violated. 
\proofend
\end{proof}

\monolithicRefinementHardness*
\begin{proof}
We show the PSPACE-hardness of the monolithic refinement problem by a reduction from the PSPACE-hard problem of deciding deadlock freedom for 1-safe Petri nets~\cite{DBLP:journals/eik/EsparzaN94}. 
Let $(N, M_0)$ be a 1-safe Petri net, with $N = (S, T, F)$.

We construct a CSM $\mathcal{A}_N$ and a global type $\GG_N$ such that $\mathcal{A}_N$ refines $\GG_N$ if and only if the Petri net is deadlock-free. 

We first describe the construction of $\mathcal{A}_N$. 
$\mathcal{A}_N$ consists of one state machine per place in $S$, one state machine per transition in $T$, and one special coordinator role, which we denote $\procA$. 
Each place state machine tracks whether its place is marked by $0$ or $1$, and responds to messages to increment or decrement its marking.
Each transition state machine communicates with its input and output place state machines to check whether its transition is enabled, and to update place markings. 
The coordinator $\procA$ first asks each transition state machine whether its transition is enabled. 
This querying can be performed in an arbitrary fixed order on $T$. 
If at least one transition is enabled, $\procA$ then non-deterministically picks a transition to fire. 
Depending on whether the picked transition is enabled, the input and output place state machines update the configuration, and the transition state machine returns the control flow to $\procA$, which repeats this process with the new configuration. 
If no transition is enabled, $\procA$ enters a sink state with no outgoing transitions, thus causing a deadlock in $\mathcal{A}_N$.

Each message exchange between roles is echoed with an acknowledgement, and the CSM thus constructed is 1-bounded: there is at most one message in flight at any point during its execution. 
Intuitively, $\mathcal{A}_N$ simulates the firing of transitions in the Petri nets via message exchanges, and represents all valid execution traces of the Petri net as CSM traces.

Correspondingly, we construct a global type $\GG_N$ whose language includes not only all execution traces of $\mathcal{A}_N$, but also traces that do not correspond to valid execution traces in the Petri net. 
$\GG_N$ achieves this by mimicing the control flow of the $\mathcal{A}_N$, but decoupling the message contents from the underlying Petri net configuration: at each control flow point, roles non-deterministically choose a message to send. 

If the Petri net is deadlock-free, then $\mathcal{A}_N$ is also deadlock-free and $\lang(\mathcal{A}_N)$ includes only infinite words: because each configuration has at least one enabled transition, $\procA$'s sink state will never be reached.
Because $\lang(\mathcal{A}_N) \subseteq \lang(\GG_N)$ by construction, it holds that $\mathcal{A}_N$ refines $\GG_N$. 
On the contrary, if $\mathcal{A}_N$ refines $\GG_N$ and is thus deadlock-free, then the Petri net is also deadlock-free, as $\mathcal{A}_N$ can simulate all valid execution traces of the Petri net.

\end{proof}

\soundnessCharacterizationTwo*
\begin{proof}
	First, we prove that any trace in \subCSM is a trace in \superCSM: 
	
	\noindent
	\textit{Claim 1: } $\forall~w \in \AlphAsync^*$.~$w$ is a trace in \subCSM $\implies$ $w$ is a trace in \superCSM.

	We prove the claim by induction on $w$. 
	The base case, where $w = \emptystring$, is trivially discharged by the fact that $\emptystring$ is a trace of all CSMs.
	In the inductive step, assume that $w$ is a trace of \subCSM. 
	Let $x \in \AlphAsync$ such that $wx$ is a trace of \subCSM. 
	We want to show that $wx$ is also a trace of \superCSM. 
	
	From the induction hypothesis, we know that $w$ is also a trace of \superCSM. 
	Let $\xi$ be the channel configuration uniquely determined by $w$.
	Let $(\vec{s},\xi)$ be the \subCSM configuration reached on $w$, and let $(\vec{t},\xi)$ be the \superCSM configuration reached on $w$. 
	
	Let $\procB$ be the role such that $x \in \Alphabet_\procB$, and let $s$, $t$ denote $\vec{s}_\procB$, $\vec{t}_\procB$ from the respective CSM configurations reached on $w$ for \subCSM and \superCSM.  
	
	To show that $wx$ is a trace of \superCSM, it suffices to show that there exists a state $t'$ and a transition $t \xrightarrow{x} t'$ in $B$.  
	
	By the definition of state decoration (\cref{def:state-decoration-supertype}), it follows that $t \in d_B(s)$.
	Because \superCSM refines $\GG$ and is deadlock-free, it holds that all traces of \superCSM are prefixes of $\lang(\GG)$.
	In other words, $w \in \text{pref}(\lang(\GG))$. 
	Let $\run$ be a run such that $\run \in I(w)$; such a run must exist from \cite[Theorem 6.1]{DBLP:conf/cav/LiSWZ23} and \cite[Lemma 6.3]{DBLP:conf/cav/LiSWZ23}. 
	Let $\alpha \cdot G \xrightarrow{l} G' \cdot \beta$ be the unique splitting of $\run$ for $\procB$ matching $w$. 
	From \cref{def:state-decoration}, it holds that $G \in d(t)$.

	We proceed by case analysis on whether $x$ is a send or receive event. 
	\begin{itemize}
		\item Case $x \in \Alphabet_{\procA,!}$. 
		Let $x = \snd{\procA}{\procB}{\val}$. 
By assumption, there exists $s \xrightarrow{\snd{\procA}{\procB}{\val}} s'$ in $\delta_A$. 
		We instantiate \sendDecSubVal from \CharacterizationTwo with this transition to obtain: 
		\[
		\transAnnoFunc_B(d_B(s) \xrightarrow{\snd{\procA}{\procB}{\val}} d_B(s')) = d_B(s)
		\enspace .
		\]
		From $t \in d_B(s)$, it follows immediately that there exists $t'$ such that $t \xrightarrow{x} t'$ is a transition in $B$. 
		\item Case $x \in \Alphabet_{\procA,?}$. 
		Let $x = \rcv{\procB}{\procA}{\val}$. 
		
		We proceed by case analysis on $\SyncToAsync(l) \wproj_{\Alphabet_{\procA}}$. 
		In the case that $\SyncToAsync(l) \wproj_{\Alphabet_{\procA}} \in \Alphabet_{\procA,?}$, from~\cref{lm:stepping-stone-completeness-rest} there exists a transition $t \xrightarrow{\SyncToAsync(l) \wproj_{\Alphabet_{\procA}}} t'$ in $\delta_B$, and from \receiveExhaustiveSub there exists a transition $s \xrightarrow{\SyncToAsync(l) \wproj_{\Alphabet_{\procB}}} s''$ in $\delta_A$. 
		We can apply \cref{lm:about-receive-decoration-validity} with $\run$ to conclude that $\SyncToAsync(l) \wproj_{\Alphabet_{\procA}} = x$: we satisfy the assumption that $\snd{\procB}{\procA}{\val} \notin \semavail^{\procA}_{(G' \ldots)}$ by instantiating \receiveDecSubVal with $s \xrightarrow{x} s'$, $s \xrightarrow{\SyncToAsync(l) \wproj_{\Alphabet_\procB}} s''$, and $G'$. 
The fact that $t' \in \transAnnoFunDest_B(d_B(s) \xrightarrow{\SyncToAsync(l) \wproj_{\Alphabet_\procA}} d_B(s''))$ follows from the existence of $t \xrightarrow{\SyncToAsync(l) \wproj_{\Alphabet_{\procA}}} t'$ in $\delta_B$ and the definition of state decoration (\cref{def:state-decoration-supertype}).
		The fact that $G' \in \transAnnoFunDest_B(d(t) \xrightarrow{\SyncToAsync(l) \wproj_{\Alphabet_\procA}} d(t'))$ follows from the fact that $\alpha \cdot G \xrightarrow{l} G' \cdot \beta$ is a run in $\GG$ and \cref{def:state-decoration}.

		In the case that $\SyncToAsync(l) \wproj_{\Alphabet_{\procA}} \in \Alphabet_{\procA,!}$, we again prove a contradiction. 
		Because $G$ is a send-originating global state, \sendPreservationSub guarantees that there exists a transition $s \xrightarrow{x'} s''$ in $A$ such that $x' \in \Alphabet_{\procA,!}$. 
		By \sendDecVal, $x'$ originates from $G$ in the projection by erasure, and we can find another run $\run'$ such that 
		$\alpha' \cdot G \xrightarrow{l'} G'' \cdot \beta'$ is the unique splitting for $\procA$ matching $w$, and $\SyncToAsync(l') \wproj_{\Alphabet_{\procA}} = x'$. 
		
		We can instantiate \cref{lm:about-receive-decoration-validity} with $\run'$ and $\snd{\procB}{\procA}{\val} \notin \semavail^{\procA}_{(G'' \ldots)}$ as above to obtain $\SyncToAsync(l') \wproj_{\Alphabet_{\procA}} = x$, which is a contradiction: $x$ is a receive event and $\SyncToAsync(l') \wproj_{\Alphabet_{\procA}}$ is a send event.
\end{itemize}
	This concludes our proof of Claim 1. 
	
	Next, we show that any trace that terminates in \subCSM also terminates in \superCSM and is maximal in \subCSM.
	
	\noindent
	\textit{Claim 2: } $\forall~w \in \AlphAsync^*$.~$w$ is terminated in \subCSM $\implies w$ is terminated in \superCSM and $w$ is maximal in \subCSM. 
	
	Let $w$ be a terminated trace in \subCSM. 
	Let $\xi$ be the channel configuration uniquely determined by $w$.
	Let $(\vec{s},\xi)$ be the \subCSM configuration reached on $w$, and let $(\vec{t},\xi)$ be the \superCSM configuration reached on $w$. 
	Let $s$, $t$ denote $\vec{s}_\procA$, $\vec{t}_\procA$.
	First suppose by contradiction that $w$ is not terminated in \superCSM.
	Because the state machines for all non-$\procA$ roles are identical between the two CSMs, and because \superCSM is deadlock-free by assumption, it must be the case that $\procA$ witnesses the non-termination of $w$, in other words, $B$ can perform an action that $A$ cannot. 
	Let $x$ be the action that $\procA$ can perform from~$t$.
	Let $G$ be a state in $d(t)$, such a state is guaranteed to exist by Claim 1 and the fact that no reachable states in $B$ have empty decorating sets. 
	Then, $w \wproj_{\Alphabet_{\procA}}$ reaches $G$ from the initial state in the projection by erasure automaton. 
	By the fact that $w$ is a trace of \subCSM, it holds that there exists a run with trace $w \wproj_{\Alphabet_{\procA}}$ in $A$. 
	By the definition of state decoration, $t \in d_B(s)$. 
	\begin{itemize}
		\item If $x \in \Alphabet_!$, it follows that $G$ is a send-originating global state. 
		By \sendPreservationSub, for any state in $A$ that is decorated by a state in $B$ that itself is decorated by at least one send-originating global state, of which $t$ is one, there exists a transition $s \xrightarrow{x'} s'$ such that $x' \in \Alphabet_{\procA,!}$. 
		Because send transitions in a CSM are always enabled, role $\procA$ can take this transition in \subCSM. 
		We reach a contradiction to the fact that $w$ is terminated in \subCSM. 
		\item If $x \in \Alphabet_?$, it follows that $G$ is a receive-originating global state. 
		From \receiveExhaustiveSub, any receive action that originates from any global state in $d(t)$ for any state $t \in d_B(s)$ must also originate from $s$. 
		Therefore, there must exist $s'$ such that $s \xrightarrow{x} s'$ is a transition in $A$. 
		Thus, role $\procA$ can take this transition in \subCSM.
		We again reach a contradiction to the fact that $w$ is terminated in \subCSM. 
	\end{itemize}
	To see that every terminated trace in \subCSM in maximal, from the above we know that $w$ is terminated in \superCSM. From the fact that \superCSM is deadlock-free, $w$ is maximal in \superCSM: all states in $\vec{t}$ are final and all channels in $\xi$ are empty. 
	Because $t$ is a final state, by that fact that \superCSM refines $\GG$ there exists a global state $G \in t$ such that the projection erasure automaton reaches $G$ on $w \wproj_{\Alphabet_{\procA}}$ and $G$ is a final state. 
	Because $A$ reaches $s$ on $w \wproj_{\Alphabet_{\procA}}$, by~the definitions of state decorations (\cref{def:state-decoration,def:state-decoration-supertype}), it holds that $G \in \underset{t \in d_{B}(s)} \bigcup d(t)$.
	By \finalVal, it holds that $s$ is a final state in $A$.
	This concludes our proof that any terminated trace in \subCSM is also a terminated trace in \superCSM, and is maximal in \subCSM. 
	
	Together, Claim 1 and 2 establish that \subCSM satisfies language inclusion with respect to \superCSM (\cref{def:refinement-language-inclusion}), and deadlock freedom (\cref{def:refinement-deadlock-freedom}). 
	It remains to show that \subCSM also satisfies subprotocol fidelity (\cref{def:refinement-subprotocol-fidelity}). 
	This follows immediately from \cite[Lemma 22]{DBLP:conf/concur/MajumdarMSZ21}, which states that all CSM languages are closed under~$\interswap$. 
	\proofend
\end{proof}

 }

\end{document}